\definecolor{darkgreen}{rgb}{0,0.5,0}
\definecolor{darkblue}{rgb}{0,0,0.7}
\definecolor{darkred}{rgb}{0.9,0.1,0.1}
\newtheorem*{rep@theorem}{\rep@title}
\newcommand{\newreptheorem}[2]{%
\newenvironment{rep#1}[1]{%
 \def\rep@title{#2 \ref{##1}}%
 \begin{rep@theorem}}%
 {\end{rep@theorem}}}
\newtheorem{theorem}{Theorem}
\newtheorem{proposition}{Proposition}
\newtheorem{lemma}[proposition]{Lemma}
\theoremstyle{remark}
\theoremstyle{definition}
\newtheorem{definition}[proposition]{Definition}
\newtheorem{remark}[proposition]{Remark}
\numberwithin{equation}{section}
\numberwithin{proposition}{section}
\newcommand{\Z}{\mathbb{Z}}
\newcommand{\N}{\mathbb{N}}
\newcommand{\R}{\mathbb{R}}
\newcommand{\E}{\mathbb{E}}
\renewcommand{\P}{\mathbb{P}}
\newcommand{\Zd}{\mathbb{Z}^d}
\newcommand{\Rd}{{\mathbb{R}^d}}
\newcommand{\di}{\mathrm{d}}
\newcommand{\inte}[1]{%
 {\kern0pt#1}^{\mathrm{o}}%
}
\newcommand{\ep}{\varepsilon}
\renewcommand{\a}{\mathbf{a}}
\renewcommand{\subset}{\subseteq}
\DeclareMathOperator{\var}{Var}
\DeclareMathOperator{\sign}{sign}
\DeclareMathOperator{\IV}{IV}
\DeclareMathOperator{\proj}{proj}
\renewcommand{\bar}{\overline}
\renewcommand{\tilde}{\widetilde}
\newcommand{\indc}{\mathds{1}}
\begin{document}

\title{Random-field random surfaces}

\author[P. Dario]{Paul Dario}
\address[P. Dario]{Universit\'e Paris-Est Cr\'eteil, Cr\'eteil, France}
\email{paul.dario@u-pec.fr}

\author[M. Harel]{Matan Harel}
\address[M. Harel]{Northeastern University, 360 Huntington Avenue, Boston, MA 02115}
\email{m.harel@northeastern.edu}

\author[R. Peled]{Ron Peled}
\address[R. Peled]{School of Mathematical Sciences, Tel Aviv University, Ramat Aviv, Tel Aviv 69978, Israel}
\email{peledron@tauex.tau.ac.il}

\maketitle

%\setcounter{tocdepth}{1}
%\tableofcontents

\begin{abstract}
We study how the typical gradient and typical height of a random surface are modified by the addition of quenched disorder in the form of a random independent external field. The results provide quantitative estimates, sharp up to multiplicative constants, in the following cases.

It is shown that for \emph{real-valued} random-field random surfaces of the $\nabla \phi$ type with a uniformly convex interaction potential: (i) The gradient of the surface delocalizes in dimensions $1\le d\le 2$ and localizes in dimensions $d\ge3$. (ii) The surface delocalizes in dimensions $1\le d\le 4$ and localizes in dimensions $d\ge 5$.

It is further shown that for the \emph{integer-valued} random-field Gaussian free field: (i) The gradient of the surface delocalizes in dimensions $d=1,2$ and localizes in dimensions $d\ge3$. (ii) The surface delocalizes in dimensions $d=1,2$. (iii) The surface localizes in dimensions $d\ge 3$ at low temperature and weak disorder strength. The behavior in dimensions $d\ge 3$ at high temperature or strong disorder is left open.

The proofs rely on several tools: Explicit identities satisfied by the expectation of the random surface, the Efron--Stein concentration inequality, a coupling argument for Langevin dynamics (originally due to Funaki and Spohn~\cite{FS}) and the Nash--Aronson estimate.
\end{abstract}

\section{Introduction}
\subsection{Main results} The seminal work of Imry and Ma~\cite{IM75} predicted that the addition of a quenched random external field eliminates the magnetization phase transition of low-dimensional spin systems. This was argued to be a generic phenomenon in two dimensions and to occur also in three and four dimensions in systems with continuous symmetry.
These predictions were confirmed for a broad class of spin systems in the celebrated work of Aizenman and Wehr~\cite{AW1989, AW89}. While the Imry-Ma phenomenon has mostly been studied in the spin system context, it has been recognized that related effects occur also for random surfaces of the $\nabla\phi$ type subjected to quenched disorder, in suitable ways~\cite{BK94, BK96, CK12, CK15, KO06, KO08, VK08} (see Section~\ref{sec.relresult}). In this work we study the way in which the fluctuations of random surfaces are enhanced by the addition of quenched randomness in the form of an independent external field, focusing on the localization and delocalization behavior of the gradient and heights of real- and integer-valued surfaces. Quantitative estimates are obtained in all cases studied, complementing other recent quantitative studies of the Imry--Ma phenomenon~\cite{AHP20, AP19,C18, DHP20++,DW2020, ding20192exponential}.

\medskip
{\bf Real-valued random-field random surfaces:} The first class of random surfaces that we consider are real-valued random surfaces of the $\nabla\phi$ type with a uniformly convex interaction potential, which are subjected to a quenched independent external field in the sense of~\eqref{eq:101071608} below.

We start with a few definitions. Let $\Z^d$ be the standard $d$-dimensional integer lattice, in which vertices are adjacent if they are equal in all but one coordinate and differ by one in that coordinate. For $\Lambda\subset\Z^d$, let $\partial\Lambda$ be the external vertex boundary of $\Lambda$, $E(\Lambda)$ be the set of edges of $\Z^d$ with both endpoints in $\Lambda$, and $\Lambda^+:=\Lambda\cup\partial\Lambda$.

Let $\Lambda\subset\Z^d$ be finite and let $\eta:\Lambda\to\R$. The Hamiltonian $H_\Lambda^\eta$ of the random surface on $\Lambda$ with external field $\eta$ associates to each $\phi:\Lambda^+\to\R$ the energy
\begin{equation} \label{eq:101071608}
    H_\Lambda^\eta \left( \phi \right) := \sum_{e\in E(\Lambda^+)} V\left( \nabla\phi(e) \right) - \lambda \sum_{x \in \Lambda} \eta(x) \phi(x),
\end{equation}
where $V:\R\to\R$ is a measurable function satisfying $V(x)=V(-x)$ for all $x$, $\lambda>0$ is the coupling strength of the external field $\eta$ and $V(\nabla\phi(e)):=V(\phi(x) - \phi(y))$ for an edge $e=\{x,y\}$ (noting that the orientation of $e$ is immaterial, as $V$ is an even function). We assume throughout that the potential $V$ is uniformly convex, i.e., that it is twice continuously differentiable and there exist $c_-,c_+$ satisfying
\begin{equation}\label{eq:V ellipticity}
  0 < c_- \leq V''(t) \leq c_+ < \infty.
\end{equation}
The probability distribution of the random surface, with zero boundary conditions, is then defined by
\begin{equation} \label{eq:defmuLeta}
    \mu_\Lambda^\eta (\di \phi) := \frac{1}{Z_\Lambda^\eta}\exp \left( - H_{\Lambda}^{\eta} (\phi) \right) \prod_{v \in \Lambda} d \phi(v)\prod_{v\in\partial\Lambda}\delta_{0} \left( d\phi(v) \right),
\end{equation}
where $dx$ indicates Lebesgue measure on $\R$, $\delta_0$ is the Dirac delta measure at $0$, and
\begin{equation}
Z_\Lambda^\eta := \int \exp \left( - H_{\Lambda}^{\eta} (\phi) \right) \prod_{v \in \Lambda} d \phi(v)\prod_{v\in\partial\Lambda}\delta_{0} \left( d\phi(v) \right).
\end{equation}
$Z_{\Lambda}^\eta$ is called the partition function, and normalizes $\mu_\Lambda^\eta$ to be a probability measure. We denote the expectation with respect to $\mu_\Lambda^\eta$ by $\left\langle \cdot \right\rangle_{\mu_\Lambda^\eta}$, and refer to it as the \emph{thermal expectation}.

A natural question pertaining to random surfaces is whether their fluctuations diverge on sequences of domains $\left( \mu_{\Lambda_n} \right)_{n\ge 1}$ which increase to $\Zd$. In the absence of an external field (i.e., when $\eta\equiv 0$) the following facts are known: In dimensions $d = 1,2$ the variance of the height at a fixed vertex diverges as $n$ tends to infinity~\cite{BLL75}; the random surface is \emph{delocalized} or rough. In dimensions $d \geq 3$, the Brascamp-Lieb concentration inequality~\cite{BL76, BL75} shows that the variance of the height remains bounded uniformly in $n$; the random surface is \emph{localized} or smooth. The Brascamp-Lieb inequality further implies that the fluctuations of the (discrete) gradient of the surface remain bounded in $n$ in every dimension $d \geq 1$.

In this paper, we study the effect that a quenched random independent field has on the localization and delocalization properties of the random surface and its gradient. Explicitly, we assume that the external field $\eta$ is random, with the random variables $(\eta(x))$ independent and satisfying various additional assumptions, and we study the fluctuations of $\mu_{\Lambda}^{\eta}$ for a typical realization of $\eta$. We shall denote the probability measure over the random field, its expectation, and its variance by $\P$, $\E$, $\var$, respectively.

A specific case of interest is the random-field Gaussian free field, i.e., the model~\eqref{eq:defmuLeta} with the quadratic potential $V(x) = \frac{1}{2}x^2$ (see also Section~\ref{sec:random-field Gaussian free field}). In this situation, the quenched random surface has a multivariate Gaussian distribution whose covariance structure can be explicitly calculated as a function of the realization of the random field $\eta$.
If $\eta$ is random, independent and each $\eta(x)$ has zero mean and unit variance, one can prove that, for almost every realization of the random field, the gradient of the random surface delocalizes if $d \leq 2$ and localizes if $d \geq 3$, and that the height of the surface delocalizes if $d \leq 4$ and localizes if $d \geq 5$. The result can be quantified and the typical height of the random surface and its gradient can be estimated in every dimension; we refer the reader to~\cite[Appendix A.1]{CK12}, where the qualitative delocalization of the random-field Gaussian free field is discussed in dimensions $d = 3,4$, and to~\cite[Section 1.2]{VK08} where the gradient fluctuations are quantified  (see also Section~\ref{secueta} for the calculations in the zero-temperature limit). Our analysis of the real-valued, random-field random surfaces extends these results to the class of potentials satisfying~\eqref{eq:V ellipticity}, for which the law of the random surface is not explicitly known.

Before stating the theorems, we introduce some notation. Write $\Lambda_L := \left\{ -L,  \ldots, L \right\}^d$ and let $\left| \Lambda_L \right|=(2L+1)^d$ be its cardinality. In the next two results we consider dimensions $d\ge 1$, integer $L\ge 2$, disorder strength $\lambda>0$, ellipticity parameters $0<c_-\le c_+<\infty$ and a twice-continuously differentiable $V:\R\to\R$ satisfying $V(x)=V(-x)$ for all $x$ and the uniform convexity assumption~\eqref{eq:V ellipticity}. We suppose $\eta:\Lambda_L\to\R$ are \emph{independent} random variables with moment assumptions as stated below, and that $\phi$ is sampled from the measure $\mu_{\Lambda_L}^\eta$ given by~\eqref{eq:defmuLeta}. 

Each of our theorems introduces its own positive constants $C,c$. Intuitively, $C$ stands for a generic large value while $c$ stands for a generic small value.

Our first theorem addresses the fluctuations of the gradient of the random surface.

\begin{theorem}[Gradient fluctuations, real-valued]  \label{prop3.100910}
Suppose $\E \left[ \eta(x) \right] =0$ and $\var \left[ \eta(x) \right] =1$ for all $x \in \Lambda_L$. There exist $C , c>0$ depending only on the dimension $d$ and the ratios $c_+/ c_-$ and $\lambda/ c_-$ such that the quantity
\begin{equation}\label{eq:L2 norm of gradient}
    \left\| \nabla \phi \right\|_{\underline{L}^2 \left(\Lambda_L^+ , \mu^\eta_{\Lambda_L} \right)}^2 := \frac{1}{\left| \Lambda_L^+ \right|} \sum_{e \in E \left( \Lambda_L^+ \right)} \left\langle (\nabla \phi(e))^2 \right\rangle_{\mu_{\Lambda_L}^\eta}
\end{equation}
satisfies
\begin{alignat}{3}
    &d=1:\qquad&&& c  L \leq~ &\E \left[ \left\| \nabla \phi \right\|_{\underline{L}^2 \left(\Lambda_L^+ , \mu^\eta_{\Lambda_L} \right)}^2 \right] \leq C  L + \frac{C}{c_-},\label{eq:TV0904}\\
    &d=2:&&& c \ln L \leq~ &\E \left[ \left\| \nabla \phi \right\|_{\underline{L}^2 \left(\Lambda_L^+ , \mu^\eta_{\Lambda_L} \right)}^2 \right] \leq C \ln L + \frac{C}{c_-},\label{eq:TV0904bis}\\
    &d\ge3:&&& c \leq~ &\E \left[ \left\| \nabla \phi \right\|_{\underline{L}^2 \left(\Lambda_L^+ , \mu^\eta_{\Lambda_L} \right)}^2 \right] \leq C \left( 1 + \frac{1}{c_-} \right).\label{eq:TV0904ter}
\end{alignat}
\end{theorem}
We remark that in this theorem, as well as in many of our subsequent results, the obtained bounds are with respect to the disorder-averaged (annealed) measure $\bar \mu (d \phi) := \int \mathbb{P}(d \eta) \mu^{\eta}_{\Lambda_L} (d \phi)$.

We further remark that our techniques for controlling the gradient fluctuations are applicable for general external fields $\eta$; see Theorem~\ref{eq:quenched gradient fluctuations}. In addition, our proof of Theorem~\ref{prop3.100910} applies under significant relaxations of the uniform convexity assumption~\eqref{eq:V ellipticity} (in particular, the proof applies to certain non-convex $V$); see Remark~\ref{rem:gradient fluctuations relaxed assumptions}.

Our second theorem concerns the fluctuations of individual heights in the random surface.

\begin{theorem}[Height fluctuations, real-valued]  \label{prop3.1009100bis}
Suppose $\var \left[ \eta(x) \right] =1$ for all $x \in \Lambda_L$. There exists $C>0$ depending only on the dimension $d$ and the ratios $c_+/ c_-$ and $\lambda/ c_-$ such that, for any $y \in \Lambda_L$,
\begin{alignat}{3}
    1\le d \leq 3&:\qquad&&&
    \var \left[  \left\langle \phi(y) \right\rangle_{\mu_{\Lambda_L}^{\eta}}  \right] &\leq C  L^{4-d},\label{eq:TV0904041}\\
    d=4&:&&&  \var \left[  \left\langle \phi(y) \right\rangle_{\mu_{\Lambda_L}^{\eta}}  \right] &\leq C \ln L, \label{eq:TV090404bis1}\\
    d\ge5&:&&& 
    \var \left[ \left\langle \phi(y) \right\rangle_{\mu_{\Lambda_L}^{\eta}}  \right] &\leq C. \label{eq:TV090404ter1}
\end{alignat}
Additionally, there exists $c>0$ depending only on the dimension $d$ and the ratios $c_+/ c_-$ and $\lambda/ c_-$ such that, for any $y \in \Lambda_{L/2}$,
\begin{alignat}{3}
    1\le d \leq 3&:\qquad&&&
     & \var \left[  \left\langle \phi(y) \right\rangle_{\mu_{\Lambda_L}^{\eta}}  \right] \geq c L^{4-d} ,\label{eq:TV0904041low}\\
    d=4&:&&&  & \var \left[  \left\langle \phi(y) \right\rangle_{\mu_{\Lambda_L}^{\eta}}  \right] \geq c \ln L, \label{eq:TV090404bis1low}\\
    d\ge5&:&&& &
    \var \left[ \left\langle \phi(y) \right\rangle_{\mu_{\Lambda_L}^{\eta}}  \right] \geq c. \label{eq:TV090404ter1low}
\end{alignat}
\end{theorem}

Let us make a few remarks about these results.

The measure $\mu_{\Lambda}^\eta$ does not have an explicit dependence on an inverse temperature parameter $\beta$; a more standard setup would have considered the probability measure $\mu_{\beta, \Lambda_L}^{\eta}$ defined by
\begin{equation}\label{eq:real-valued model with temperature}
    \mu_{\beta, \Lambda_L}^{\eta} := \frac{1}{Z_{\beta,\Lambda}^\eta}\exp \left( - \beta H_{\Lambda}^{\eta} (\phi) \right) \prod_{v \in \Lambda} d \phi(v)\prod_{v\in\partial\Lambda}\delta_{0} \left( d\phi(v) \right).
\end{equation}
However, the effect of $\beta$ can be mimicked in the model~\eqref{eq:defmuLeta} by multiplying $V$ and $\lambda$ by $\beta$ and thus the previous results are applicable also to the model~\eqref{eq:real-valued model with temperature}. Moreover, the lower bounds in Theorem~\ref{prop3.100910} and all bounds of Theorem~\ref{prop3.1009100bis} hold uniformly in $\beta$, since the constants $C,c$ in Theorem~\ref{prop3.100910} and Theorem~\ref{prop3.1009100bis} depend only on the ratio $\lambda / c_-$ and the ellipticity ratio $c_+/c_-$. The upper bound of Theorem~\ref{prop3.100910} also depends on $\frac{1}{c_-}$ and thus improves as $\beta$ increases. In particular, taking the limit $\beta \to \infty$ implies that the finite-volume ground configuration of the random-field $\nabla \phi$ model (see~\eqref{def.groundstate1706} below for its explicit formula when $\lambda=1$) satisfies the inequalities stated in Theorem~\ref{prop3.100910} and Theorem~\ref{prop3.1009100bis}.

The proof of Theorem~\ref{prop3.1009100bis} does not require $\eta$ to have mean zero (unlike the proof of Theorem~\ref{prop3.100910}). It is worth noting, however, that if $\eta$ is symmetric (i.e., $\eta$ has the same distribution as $-\eta$), then
\begin{equation*}
\E \left[  \left\langle \phi(y) \right\rangle_{\mu_{\Lambda_L}^{\eta}} \right] = 0 \hspace{3mm} \mbox{and consequently} \hspace{3mm} \var \left[  \left\langle \phi(y) \right\rangle_{\mu_{\Lambda_L}^{\eta}}  \right] = \E \left[ \left\langle \phi(y) \right\rangle_{\mu_{\Lambda_L}^{\eta}}^2 \right].
\end{equation*}
Thus, a symmetry assumption can be used to upgrade the conclusion of Theorem~\ref{prop3.1009100bis} from a variance bound to a bound on the $L^2$-norm (in the random field) of the thermal expectation $\left\langle \phi(0) \right\rangle_{\mu_{\Lambda_L}^\eta}$. It is plausible that such an $L^2$ bound also holds if the symmetry assumption is weakened to requiring that $\eta$ has mean zero, but this is not proven here; see also Section~\ref{sec:thermal expectation for grad phi model}.

Theorem~\ref{prop3.1009100bis} estimates the extent to which the thermal \emph{expectation} of the height fluctuates as the random field changes. It is also natural to consider the full fluctuations of the height, as a result of both thermal fluctuations and the randomness of the field. By the law of total variance, this can be decomposed as
\begin{equation}\label{eq:total variance formula}
   \E\left[\left\langle \phi(y)^2 \right\rangle_{\mu_{\Lambda_L}^{\eta}}\right] - \E\left[\left\langle \phi(y) \right\rangle_{\mu_{\Lambda_L}^{\eta}}\right]^2 = \E\bigg[\Big\langle \big(\phi(y)  - \left\langle \phi(y) \right\rangle_{\mu_{\Lambda_L}^{\eta}} \big)^2\Big\rangle_{\mu_{\Lambda_L}^{\eta}} \bigg] + \var\left[\left\langle \phi(y) \right\rangle_{\mu_{\Lambda_L}^{\eta}}\right]
\end{equation}
with the second term on the right-hand side estimated by Theorem~\ref{prop3.1009100bis}. The first term is estimated by the Brascamp-Lieb inequality~\cite{BL76, BL75}, which, in our setting, reads
\begin{equation} \label{eq:11201008}
    \left\langle \left(\phi(y) - \left\langle \phi (y) \right\rangle_{\mu_{\Lambda_L}^{\eta}} \right)^2 \right\rangle_{\mu_{\Lambda_L}^{\eta}} \leq \left\{ \begin{aligned}
    &C L &d=1, \\
    &C \ln L &d=2,\\
    &C  &d \geq 3.
    \end{aligned} \right.
\end{equation}
We thus see that the first term on the right-hand side of~\eqref{eq:total variance formula} is not larger than the second term in every dimension (up to a multiplicative factor). We further remark that the Brascamp-Lieb inequality can also be used to obtain Gaussian concentration estimates for the thermal fluctuations (see~\cite[Section 2.2.1]{DGI00} and~\cite[Theorem 4.9 and Remark 4.1]{F05}).

While the variance upper bounds of Theorem~\ref{prop3.1009100bis} hold pointwise for every vertex of $\Lambda_L$, matching lower bounds cannot be expected to hold for vertices arbitrarily close to the boundary and are thus stated for vertices which are at a certain distance of the boundary.

Theorem~\ref{prop3.100910} and Theorem~\ref{prop3.1009100bis} are related to the detailed investigations of Cotar and K\"ulske on the existence and uniqueness of translation-covariant gradient Gibbs measures for disordered random surfaces~\cite{CK12, CK15}. This is discussed further in Section~\ref{sec.relresult} but we already point out here that the upper bounds in Theorem~\ref{prop3.100910} in dimensions $d\ge 3$ may be deduced from~\cite[Proposition 2.1 and Lemma 3.5]{CK12}.

\medskip
{\bf Integer-valued random-field Gaussian free field:} In this section we study the fluctuations of the \emph{integer-valued} Gaussian free field when subjected to a quenched independent external field, as we now define.
Let $\Lambda\subset\Z^d$ be finite and let $\eta:\Lambda\to\R$. The Hamiltonian $H^{\IV , \eta}_\Lambda$ of the integer-valued Gaussian free field on $\Lambda$ with external field $\eta$ associates to each $\phi:\Lambda^+\to\Z$ the energy
\begin{equation}\label{eq:integer-valued Hamiltonian}
   H^{\IV , \eta}_\Lambda(\phi) := \frac 12 \sum_{e \in E \left( \Lambda^+ \right)} \nabla \phi(e)^2  -  \lambda \sum_{x \in \Lambda} \eta(x) \phi(x)
\end{equation}
where $\lambda>0$ is the coupling strength of the external field. The probability distribution for this surface, at inverse temperature $\beta > 0$ and with zero boundary conditions, assigns probability
\begin{equation} \label{eq:defGFFIV}
\mu^{\IV , \beta, \eta}_{\Lambda}( \phi) := \frac{1}{Z^{\IV, \beta, \eta}_{\Lambda} } \exp \left( - \beta H^{\IV , \eta}_\Lambda(\phi)  \right)
\end{equation}
to each $\phi : \Lambda^+ \to \Z$ satisfying $\phi \equiv 0$ on $\partial \Lambda$, where
\begin{equation}
Z_\Lambda^{\IV, \beta, \eta} := \sum_{\substack{\phi : \Lambda^+ \to \Z\\ \phi\equiv 0\text{ on }\partial\Lambda}} \exp \left( - \beta H^{\IV , \eta}_\Lambda(\phi) \right),
\end{equation}
the partition function, normalizes $\mu^{\IV , \beta, \eta}_{\Lambda}$ to be a probability measure. We denote the (thermal) expectation with respect to the measure $\mu^{\IV , \beta, \eta}_{\Lambda}$ by $\left\langle \cdot \right\rangle_{\mu^{\IV, \beta, \eta}_{\Lambda}}$ .

Our results show that the \emph{gradient} of the integer-valued random-field Gaussian free field shares the delocalization/localization properties of the real-valued surfaces discussed above in all dimensions, and that the two models share similar \emph{height} fluctuations in dimensions $d=1,2$. It is further shown that the integer-valued model localizes in all dimensions $d\ge 3$ at low temperature and weak disorder, thus exhibiting a different behavior from the real-valued surfaces in dimensions $d=3,4$. 

The behavior of the integer-valued model in dimensions $d\ge 3$ at high temperature or strong disorder is left open; See Section~\ref{sec:discussion and open questions}.

In the next theorems, we consider dimensions $d\ge 1$, integer $L\ge 2$, inverse temperature $\beta>0$ and disorder strength $\lambda>0$. We suppose $\eta:\Lambda_L\to\R$ are \emph{independent} random variables. $\phi$ is sampled from the measure $\mu_{\Lambda_L}^{\IV , \beta, \eta}$ given by~\eqref{eq:defGFFIV}.

\begin{theorem}[Gradient fluctuations, integer-valued]
\label{prop3.100910disc}
Suppose $\E \left[ \eta(x) \right] =0$ and $\var \left[ \eta(x) \right] =1$ for all $x \in \Lambda_L$. There exist $C , c>0$ depending only on the dimension such that the quantities: $c_\lambda := c \lambda^2$, $C_\lambda : = C \lambda^2$, $C_\beta := C \left( 1 + \beta^{-1} \right)$ and
\begin{equation*}
    \left\| \nabla \phi \right\|_{\underline{L}^2 \left(\Lambda_L^+ , \mu^{\IV, \beta, \eta}_{\Lambda_L} \right)}^2 := \frac{1}{\left| \Lambda_L^+ \right|} \sum_{e \in E \left( \Lambda_L^+ \right)} \left\langle (\nabla \phi(e))^2 \right\rangle_{\mu_{\Lambda_L}^{\IV, \beta, \eta}}
\end{equation*}
satisfy
\begin{alignat}{3}
    &d=1:\qquad&&& c_\lambda L - C \leq~ &\E \left[ \left\| \nabla \phi \right\|_{\underline{L}^2 \left(\Lambda_L^+ , \mu^{\IV , \beta, \eta}_{\Lambda_L} \right)}^2 \right] \leq C_\lambda L + C_\beta ,\label{eq:TV09042}\\
    &d=2:&&& c_\lambda\ln L - C  \leq~ &\E \left[ \left\| \nabla \phi \right\|_{\underline{L}^2 \left(\Lambda_L^+ , \mu^{\IV , \beta, \eta}_{\Lambda_L} \right)}^2 \right] \leq C_\lambda \ln L +  C_\beta,\label{eq:TV0904bis2}\\
    &d\ge3:&&& c_\lambda - C \leq~ & \E \left[ \left\| \nabla \phi \right\|_{\underline{L}^2 \left(\Lambda_L^+ , \mu^{\IV , \beta, \eta}_{\Lambda_L} \right)}^2 \right] \leq C_\lambda + C_\beta.\label{eq:TV0904ter2}
\end{alignat}
\end{theorem}

\begin{remark} \label{remark1.1}
The lower bound in~\eqref{eq:TV0904ter2} is trivial when $\lambda$ is small. This is improved in Remark~\ref{remark6.4}, where we establish that, for any dimension $d \geq 1$ and any disorder strength $\lambda > 0$, there exists a constant $c_1 >0$ depending on the dimension, the law of the random field and the disorder strength, such that
\begin{equation*}
    \liminf_{L \to \infty} \E \left[ \left\| \nabla \phi \right\|_{\underline{L}^2 \left(\Lambda_L^+ , \mu^{\IV , \beta, \eta}_{\Lambda_L} \right)}^2 \right] \geq c_1.
\end{equation*}
\end{remark}

\begin{theorem}[Height fluctuations, integer-valued, $d=1,2$]  \label{prop3.1009100bisdisc}
Suppose that the $(\eta(x))$ are identically distributed with $\E \left[ \eta(x) \right] =0$ and $\var \left[ \eta(x) \right] =1$ for all $x \in \Lambda_L$. There exist a constant $c>0$ depending on the common distribution of the $(\eta(x))$ and an absolute constant $C>0$ such that the quantities: $c_\lambda := c e^{- \frac{1}{c \lambda^2}}$, $C_{\lambda, \beta} := C \left(1 + \lambda^2 +\beta^{-1} \right)$ and
\begin{equation*}
\left\| \phi \right\|_{\underline{L}^2 \left(\Lambda_L , \mu^{\IV , \beta, \eta}_{\Lambda_L} \right)}^2 := \frac{1}{\left| \Lambda_L \right|} \sum_{x \in \Lambda_L} \left\langle  \phi(x)^2 \right\rangle_{\mu_{\Lambda_L}^{\IV, \beta, \eta}}
\end{equation*}
satisfy
\begin{alignat}{3}
    &d=1:\qquad&&& c_\lambda L^{3} \leq~ & \E \left[ \left\| \phi \right\|_{\underline{L}^2 \left(\Lambda_L , \mu^{\IV , \beta, \eta}_{\Lambda_L} \right)}^2 \right] \leq C_{\lambda, \beta}  L^3 ,\label{eq:TV0904222}\\
    &d=2:&&& c_\lambda L^{2} \leq ~ & \E \left[ \left\| \phi \right\|_{\underline{L}^2 \left(\Lambda_L , \mu^{\IV , \beta, \eta}_{\Lambda_L} \right)}^2 \right] \leq C_{\lambda, \beta} L^{2} \label{eq:TV0904bis2222}.
\end{alignat}
\end{theorem}

\begin{remark}
    The results of Theorem~\ref{prop3.100910disc} and Theorem~\ref{prop3.1009100bisdisc} apply for all temperatures $\beta\in [0 , \infty).$
\end{remark}

The theorems determine the order of magnitude of the norms of the gradient and the height of the surface as a function of $L$, and also estimate the dependence on the disorder strength $\lambda$. Comparing to Theorem~\ref{prop3.100910}, one notices that the dependence of the gradient norm on the disorder strength is the same as in the real-valued case. In contrast, compared with Theorem~\ref{prop3.1009100bis}, the lower bound at weak disorder for the height norm is significantly smaller (in its dependence on $\lambda$) in the integer-valued case than in the real-valued case. We expect that the two models indeed behave differently. For instance, the proof of Theorem~\ref{thm10162010} below shows that at zero temperature, in order for $\phi$ to be non-zero at the origin it is necessary that there exists a connected subset with connected complement of $\Lambda_L$ containing the origin in which the sum of $\lambda$ times the disorder $\eta$ exceeds a constant multiple of its boundary size. A recent result of Ding and Wirth~\cite[Proposition 2.2]{DW2020} shows that in two dimensions it is unlikely that there exist such subsets if $L\le\exp(\lambda^{-4/3+o(1)})$ (with the $o(1)$ term referring to the limit $\lambda\downarrow 0$).

We also remark that the assumption that the $(\eta(x))$ are identically distributed in Theorem~\ref{prop3.1009100bisdisc} is only required for the proof of the lower bounds on the height norm, and may be replaced by an assumption that a $(2+\delta)$-moment of the $(\eta(x))$ admits a uniform upper bound.

\begin{theorem}[Height fluctuations, integer-valued, $d\ge 3$, low temperature and weak disorder]\mbox{}\!\! \label{thm10162010}
  Suppose $d \geq 3$ and assume that $\eta(x)$ has the standard Gaussian distribution $N(0,1)$ for all~$x \in \Lambda_L$. There exist $\beta_0, \lambda_0, c>0$ such that for all $\beta \in (\beta_0,\infty)$, $\lambda \in (0, \lambda_0)$, integer $L\ge 2$, $v \in \Lambda_L$, and integer $t  >0$, 
  \begin{equation} \label{eq:10142010}
      \P \left( \mu^{\IV , \beta, \eta}_{\Lambda_L} \left( |\phi(v)| < t  \right) \geq 1-  e^{- c \beta t^{1/2}} \right) \geq 1 - e^{-\frac{c t^{1/6}}{\lambda^2}}.
  \end{equation}
In particular, for all $k >0$,
\begin{equation} \label{eq:10152010}
\sup_{L \in \N} \,  \mathbb{E} \left[\left\langle |\phi(v)|^k \right\rangle_{\mu^{\IV, \beta, \eta}_{\Lambda_L}}\right] < \infty.
\end{equation}
\end{theorem}
The exponents $1/2$ and $1/6$ in~\eqref{eq:10142010} are not sharp (the proof can be optimized to yield slight improvements). However, the $\lambda^2$ term appears to be necessary, at least for the case $t=1$, as it controls the local fluctuations of the field at $v$ when the disorder strength is weak.

As mentioned above, Theorem~\ref{thm10162010} proves that the behavior of the integer-valued random-field Gaussian free field differs from its real-valued counterpart in dimensions $d=3,4$.

We additionally note that it is possible to take the zero-temperature limit $\beta \to \infty$ in the results of Theorem~\ref{prop3.100910disc}, Theorem~\ref{prop3.1009100bisdisc} and Theorem~\ref{thm10162010} in order to obtain that the ground state of the integer-valued random-field Gaussian free field satisfies the estimates stated in theses results.

The discussion up until now has been centered around gradient models where the Hamiltonian depends on the discrete gradient of the surface. One can also consider the effects of the addition of a random field on ``higher-order" random surfaces. One such example is the random-field membrane model, whose Hamiltonian is given by the formula 
\begin{equation} \label{eq:12442102}
    H^{\eta, \Delta}_{\Lambda_L}(\phi) := \frac 12 \left\| \Delta \phi\right\|_{L^2 \left( \Lambda_L \right)}^2 - \sum_{x \in \Lambda_L} \eta(x) \phi(x),
\end{equation}
where $\Delta$ indicates the graph Laplacian on $\Lambda_L$. Explicit computations, which are available for this model, enable us to prove upper and lower bounds on the fluctuation of $\phi$; see Section~\ref{RFMM} for a more detailed discussion of the model and the results.

\bigskip

\subsection{Background} \label{sec.relresult} Brascamp, Lieb and Lebowitz~\cite{BLL75} initiated the first detailed investigation of the fluctuations of real-valued random surfaces of the form~\eqref{eq:defmuLeta} without an external field (sometimes called the $\nabla \phi$-model) and compared their behavior to the exactly-solvable Gaussian free field (the case $V(x) = \frac{1}{2}x^2$). Among other results, their work proved that such surfaces delocalize in two dimensions (the one-dimensional case is classical) and localize in three and higher dimensions under the assumption~\eqref{eq:V ellipticity}. The \emph{integer-valued} Gaussian free field (without an external field) exhibits similar behavior in dimensions $d=1$ and $d\ge 3$ but undergoes a \emph{roughening transition} in two dimensions as the temperature increases: Localization at low temperatures follows by a version of the classical Peierls argument. Delocalization at high temperatures was proved in the breakthrough work of Fr\"ohlich and Spencer~\cite{FrSp, frohlich1981kosterlitz} on the Berezinskii-Kosterlitz-Thouless transition (see also~\cite{kharash2017fr, L20}). Further details on random surfaces without external field can be found, e.g., in the works of Funaki~\cite{F05}, Velenik~\cite{V06} and Sheffield~\cite{Sh}. The theory of disordered random surfaces is less developed; we summarize some of the existing mathematical literature below.

\smallskip

{\bf Real-valued disordered random surfaces:} K\"{u}lske and Orlandi~\cite{KO06} studied the model~\eqref{eq:defmuLeta} in two dimensions under the assumption that $V$ is even, twice continuously differentiable and $\sup_t V''(t)<\infty$ (and, say, $V(t)/t^{1+\ep}\to\infty$ as $t\to\infty$ for some $\ep>0$). Using a Mermin-Wagner type argument they proved that the height fluctuations in $\Lambda_L$, for \emph{any} fixed external field $\eta$, are at least of order $\sqrt{\log L}$ (which is sharp when $V(t) = t^2$ and $\eta\equiv 0$) and established Gaussian lower bounds on the tail decay. In this sense, the external field can only enhance the height fluctuations. This result is extended in K\"{u}lske--Orlandi~\cite{KO08} where it is proved that even an arbitrarily strong $\delta$-pinning at height zero cannot localize a two-dimensional random surface with i.i.d.\ random external field $(\eta(x))$ whose common distribution is symmetric with finite non-zero variance. This is then complemented by an analysis of the effects of pinning in dimensions $d\ge 3$ proving, in particular, that sufficiently strong pinning can localize the surface (under the assumption that $\inf_t V''(t)>0$).

Van Enter and K\"{u}lske~\cite{VK08} considered the model~\eqref{eq:defmuLeta} when the potential $V$ is assumed to be even, continuously differentiable and to have super-linear growth ($V(t)/t^{1+\ep}\to\infty$ as $t\to\infty$ for some $\ep>0$), and when the $(\eta(x))$ are i.i.d., having a common symmetric distribution of finite non-zero variance. They proved that in two dimensions there are no translation-covariant \emph{gradient} Gibbs measures $\mu^\eta$ satisfying that $\E\left[|\left\langle V'(\nabla \phi(e)) \right\rangle_{\mu^\eta}|\right]<\infty$ for an edge $e$. They additionally proved that in three dimensions any gradient Gibbs measure $\mu^\eta$ has slow decay of correlations in the following sense: For any $\ep>0$, the correlations $C_{e,e'}:=\E\left[\left\langle V'(\nabla \phi(e)) \right\rangle_{\mu^\eta}\left\langle V'(\nabla \phi(e')) \right\rangle_{\mu^\eta}\right]$ satisfy $\lim_{r\to\infty} \sup_{e,e'} r^{1+\ep}|C_{e,e'}|=\infty$ where the supremum is over all pairs of edges at distance at least $r$ from each other.

Cotar and K\"{u}lske~\cite{CK12, CK15} considered two models of disordered random surfaces: (A) The model~\eqref{eq:defmuLeta} with even, twice continuously differentiable $V$ and i.i.d.\ $(\eta(x))$ whose common distribution has finite non-zero second moment. (B) A model in which a collection of i.i.d.\ random functions $\left( V_e \right)_{e \in E(\Zd)}$ is prescribed, and the formal Hamiltonian is given by
\begin{equation}
    H^{(V_e)}(\phi):=\sum_e V_e(\nabla \phi(e)).
\end{equation}
Their results show that model (B), subject to suitable assumptions on the random potentials $(V_e)$, behaves similarly to the non-disordered case. Thus, our discussion here pertains to their results on model (A), which are closer to our work.

The work~\cite{CK12} proves the following results on model (A) when $V$ grows quadratically at infinity and satisfies $\sup_x V''(x)<\infty$: (1) When $\E\left[\eta(x)\right]=0$ and $d\ge 3$, there \emph{exist} translation-covariant gradient Gibbs measures $\mu^\eta$ with any prescribed disorder-averaged tilt, which also satisfy $\E\left[\left\langle\nabla\phi(e)^2\right\rangle_{\mu^\eta}\right]<\infty$; (2) when $\E\left[ \eta(x) \right]=0$ and $d\ge 3$, the infinite-volume surface tension exists and is independent of $\eta$; (3) when $\E\left[\eta(x)\right]\neq 0$, there are no translation-covariant gradient Gibbs measures $\mu^\eta$ with $\E\left[|\left\langle V'(\nabla \phi(e)) \right\rangle_{\mu^\eta}|\right]<\infty$ (using the techniques of~\cite{VK08}); (4) the infinite-volume surface tension does not exist in dimensions $d=1,2$ and, when $\E\left[\eta(x)\right]\neq 0$, in dimensions $d\ge 3$.

The work~\cite{CK15} proves the following results on model (A) under the assumption~\eqref{eq:V ellipticity} and when $\E\left[\eta(x)\right]=0$ and, for some of the results, $\eta(x)$ has a symmetric distribution which satisfies a Poincar\'e inequality: (1) In dimensions $d\ge 3$, there is a \emph{unique} translation-covariant gradient Gibbs measure $\mu^\eta$ with prescribed disorder-averaged tilt, satisfying $\E\left[\left\langle\nabla\phi(e)^2\right\rangle_{\mu^\eta}\right]<\infty$, and whose disorder-average is ergodic; (2) for each translation-covariant gradient Gibbs measure $\mu^\eta$, $\text{Cov}\left(\left\langle\nabla\phi(e)\right\rangle_{\mu^\eta}, \left\langle\nabla\phi(e')\right\rangle_{\mu^\eta}\right)\le \frac{C}{\text{dist(e,e')}^{d-2}}$ (and more general bounds).

As mentioned above, our Theorem~\ref{prop3.100910} and Theorem~\ref{prop3.1009100bis} are related to the works of Cotar and K\"{u}lske~\cite{CK12, CK15}. It seems that the upper bounds on the gradient fluctuations in Theorem~\ref{prop3.100910} in dimensions $d\ge 3$ can be deduced from~\cite[Proposition 2.1 and Lemma 3.5]{CK12} and the proofs there may possibly extend to dimensions $d=1,2$. However, the proofs use different arguments and make different assumptions on the potential (neither set of assumptions implies the other): The results of~\cite{CK12} are established for potentials $V$ having quadratic growth at infinity ($V(t)\ge At^2 -B$ for $A>0$ and $B\in\R$) and uniformly upper bounded second derivative ($\sup_t V''(t)<\infty$), while our results apply in the class described in Remark~\ref{rem:gradient fluctuations relaxed assumptions}. We also note that while Cotar and K\"{u}lske discuss only gradient Gibbs measures, some of the tools which they use in~\cite{CK15} are related to our approach to the height fluctuations in Theorem~\ref{prop3.1009100bis}. Specifically, both proofs use a coupling of Langevin dynamics, as originally proposed by Funaki--Spohn~\cite{FS} (though~\cite[Section 4]{CK15} couples the dynamics with the same disorder while we couple them with different disorders). In addition, our argument relies on the Efron--Stein inequality while~\cite[Proposition 2.4]{CK15} uses a related covariance bound.

\smallskip

{\bf Integer-valued disordered random surfaces:} To our knowledge, the integer-valued random-field Gaussian free field has not been studied before. The following related model was treated mathematically by Bovier--K\"{u}lske~\cite{BK94, BK96} (with earlier treatments on a hierarchical lattice by Bovier, K\"{u}lske and Picco~\cite{bovier1991stability, bovier1992stability, bovier1993hierarchical}). The Hamiltonian of the model takes the form
\begin{equation*}
    H(\phi) := \sum_{x\sim y} |\phi(x) - \phi(y)| - \lambda \sum_{k \in \Z} \sum_{x} \eta_x (k) \indc_{\{ \phi(x) = k \}},
\end{equation*}
where the surface $\phi$ is integer-valued, the random variables $(\eta_x(k))_{x\in\Z^d, k\in\Z}$ form a $(d+1)$-dimensional environment and $\lambda>0$ is the disorder strength. The model provides an approximation to the domain walls in disordered ferromagnetic Ising models when the random variables $(\eta_x(k))$ are i.i.d.\ (random-bond case) or when the differences $(\eta_x(k) - \eta_x(k-1))$ are i.i.d.\ (random-field case) (see also~\cite[Section 5.1]{For91} for a physics discussion of related models in the continuum). The work~\cite{BK94} allows these distributions of $\eta$ (and more general settings), requiring that the i.i.d.\ ensembles have zero mean, unit variance and satisfy suitable Gaussian tail bounds, and establishes the existence of infinite-volume Gibbs measures in dimensions $d \geq 3$ at low temperatures and weak disorder (small $\lambda$) (building on the renormalization group approach of Bricmont--Kupiainen~\cite{BK88}). The work~\cite{BK96} shows that, when the $(\eta_x(k))$ are i.i.d.\ with zero mean and unit variance with common distribution having no isolated atoms or having compact support, the model does not admit translation-covariant Gibbs states in dimensions $d \leq 2$, at all positive temperatures and non-zero disorder strength $\lambda$ (adapting the arguments of Aizenman--Wehr~\cite{AW89}).

\subsection{Convention for constants} Throughout this article, the symbols $C$ and $c$ denote positive constants which may vary from line to line, with $C$ increasing and $c$ decreasing. Except where explicitly stated otherwise, these constants may depend only on the dimension $d$ and the ratios $c_+/c_-$ and $\lambda/c_-$.

\subsection{Strategy of the arguments} \label{sec.strat}

In this section, we present some of the main arguments developed in this article. 

\subsubsection{Gradient fluctuations} \label{section1.3.1}
The proof of Theorem~\ref{prop3.100910} relies on a quenched comparison principle. For any fixed realization of the disorder $\eta$, we are able to relate the $L^2$ norm~\eqref{eq:L2 norm of gradient} of the gradient of the random surface with a general uniformly elliptic potential $V$ to the analogous $L^2$ norm of the ground state of the Gaussian random surface (i.e. the case $V(x) = \frac{1}{2}x^2$). The result is precisely stated in~\eqref{eq:ComparisonOfL2Norms}. Theorem~\ref{prop3.100910} follows from this comparison principle and the fact that the $\eta$-average of the $L^2$ norm can easily be estimated via explicit formulas in the Gaussian case, as presented in Proposition~\ref{prop.propueta}.

The proof of the comparison principle~\eqref{eq:ComparisonOfL2Norms} relies on computations based on the two identities
\begin{align}
  &-\langle \sum_{e \ni x} V'(\nabla\phi(e))\rangle_{\mu_{\Lambda_L}^\eta} = \, \eta(x), \label{identity1intro}\\
  &-\langle \phi(x)\sum_{e \ni x} V'(\nabla\phi(e))\rangle_{\mu_{\Lambda_L}^\eta} = 1+\, \eta(x)\langle \phi(x) \rangle_{\mu_{\Lambda_L}^\eta}, \label{identity2intro}
\end{align}
where we used the notation $\sum_{e \ni x}$ to sum over the directed edges containing $x$ as an endpoint (see~\eqref{eq:discrete gradient} and~\eqref{eq:11121509}). Both identities follow from a simple integration by parts; see Section~\ref{sec:IBP}. We remark that a version of the identity~\eqref{identity1intro} was previously used by van Enter and K\"{u}lske~\cite[Proposition 2.2]{VK08} (the ``divergence equation'') in their work on gradient Gibbs states in dimensions $d=2,3$.

\subsubsection{Height fluctuations} \label{section1.3.2}

To highlight the main ideas of the argument of the proof of Theorem~\ref{prop3.1009100bis}, we describe the strategy for the upper bounds in the case of the ground state, instead of the thermal expectation of the field under the Gibbs measure $\mu_{\Lambda_L}^\eta$. Additionally, for notational simplicity, we set the field strength $\lambda$ to $1$ and consider the variance of the height at the origin. To be more precise, the ground state is defined as the minimizer of the energy
\begin{equation*}
    H_{\Lambda_L}^{\eta}(v) := \sum_{e \in E(\Lambda_L)} V \left( \nabla v(e) \right) - \sum_{x \in \Lambda_L} \eta(x) v(x)
\end{equation*}
among all the mappings $v : \Lambda_L^+ \to \R$ whose values are set to $0$ on the boundary $\partial \Lambda_L$. We denote the minimizer of $H_{\Lambda_L}^{\eta}$ by $v_{L,\eta} : \Lambda_L^+ \to \R$ ; equivalently, it can be characterized as the unique solution of the discrete non-linear elliptic equation
\begin{equation} \label{def.groundstate1706}
    \left\{ \begin{aligned}
    - \sum_{e \ni y} V' \left(  \nabla v_{L , \eta}(e) \right) &= \eta(y) &~\mbox{for}~y \in \Lambda_L, \\
    v_{L,\eta}(y) &= 0 &~\mbox{for}~y \in \partial \Lambda_L.
    \end{aligned}
    \right.
\end{equation}
We wish to estimate the variance (over the random field $\eta$) of the random variable $v_{L , \eta}(0)$, and prove that it satisfies the bounds required in Theorem~\ref{prop3.1009100bis}

The proof of the upper bounds relies on the Efron--Stein concentration inequality (stated in Proposition~\ref{prop.Efronstein}): If we consider two independent copies of the random field, which we denote by $\eta$ and $\tilde \eta$, and let $\eta^{x}$ be the field satisfying $\eta^{x}(y) = \eta(y)$ if $y \neq x$ and $\eta^{x}(x) = \tilde \eta(x)$, then we have the variance estimate
\begin{equation} \label{eq:12531309}
    \var \left[ v_{L,\eta}(0) \right] \leq \frac 12 \sum_{x \in \Lambda_L} \E \left[ \left( v_{L , \eta}(0) - v_{L , \eta^x}(0) \right)^2 \right].
\end{equation}
Consequently, it is sufficient, in order to obtain the desired upper bounds, to prove the inequalities, for any point $x \in \Lambda_L$,
\begin{equation} \label{eq:14021308}
    \E \left[ \left( v_{L , \eta}(0) - v_{L , \eta^x}(0) \right)^2 \right] \leq
    \left\{ \begin{aligned}
    &CL^2 &d =1,\\
    &C \left(\ln  \frac{L}{1 \vee |x|} \right)^2 &d = 2, \\
    &\frac{C}{ 1 \vee |x|^{2d-4}}&d \geq 3
    \end{aligned}
    \right.
\end{equation}
(note that, although the two-dimensional bound is logarithmic, substituting it into \eqref{eq:12531309} will bound the sum by a constant multiple of $L^2$, as stated in Theorem~\ref{prop3.1009100bis}.)

The proof of the upper bounds~\eqref{eq:14021308} is based on the observation that the difference $w^x :=  v_{L , \eta} - v_{L , \eta^x}$ solves a discrete linear elliptic equation of the form
\begin{equation} \label{eq:092627077bis}
    \left\{ \begin{aligned}
     - \nabla \cdot \a \nabla w^x & = \left(\eta(x) - \tilde \eta(x) \right) \delta_x &&~\mbox{in}~\Lambda_L, \\
    w^x &= 0 &&~\mbox{on}~\partial \Lambda_L,
    \end{aligned} \right.
\end{equation}
with $\delta_y$ the Kronecker delta function and with the elliptic operator $-\nabla \cdot \a \nabla$ defined in~\eqref{eq:11330408}. Here, the
environment $\a$ is an explicit function of the ground state $v_{\eta, L}$ and the potential $V$ (in particular, $\a$ is random) which satisfies the pointwise uniform ellipticity estimates $c_- \leq \a \leq c_+$ almost surely. Using the linearity of the equation~\eqref{eq:092627077bis}, the mapping $w^{x}$ can be rewritten as
\begin{equation} \label{eq:13531708}
    w^{x}( 0) = \left(\eta(x) - \tilde \eta(x) \right) G_\a \left( 0, x \right)
\end{equation}
where $G_\a : \Lambda_L^+ \to (0 , \infty)$ is the Green's function associated with the environment $\a$ and satisfying Dirichlet boundary condition on the boundary of the box $\Lambda_L$. The famed Nash-Aronson bounds (Proposition~\ref{prop.NashAronson}) provides upper and lower bounds on the Green's function $G_\a$ whenever $\a$ is uniformly elliptic. They establish that the map $G_\a$ is comparable to the standard random walk Green's function on $\mathbb{Z}^d$, whose square has the same order of magnitude as the right-hand side of \eqref{eq:14021308}. Using that the random variables $\eta(x)$ and $\tilde \eta^x(x)$ have the same expectation and are of unit variance allows to deduce the upper bound~\eqref{eq:14021308}.

The proof of the lower bounds relies on a similar, but more involved, strategy that is outlined below. Building upon the techniques used in the proof of the Efron-Stein inequality (see for instance~\cite[Theorem 3.1]{BGM13}), we consider an enumeration $x_1, \ldots, x_{(2L+1)^d}$ of the vertices of the box $\Lambda_L$, and let $\mathcal{F}_n$ be the $\sigma$-algebra generated by the random variables $\eta(x_1), \ldots , \eta(x_n)$. We then introduce the martingale $X_n = \E \left[ v_{L,\eta} \, | \,  \mathcal{F}_n\right]$ and observe that
\begin{equation*}
    \var \left[ v_{L,\eta}(0) \right] = \sum_{x \in \Lambda_L} \E \left[ (X_n - X_{n-1})^2 \right].
\end{equation*}
We are then able to lower bound each of the terms in the right-hand sides using a strategy similar to the one used for the upper bound, relying on the Nash--Aronson estimates (see~\eqref{Nash.infprop3.3} of Proposition~\ref{prop.NashAronson}) to provide lower bounds on the Green's function $G_\a$.

The extension of the result from the ground state to thermal expectation over the Gibbs measure $\mu^\eta_{\Lambda_L}$, as stated in Theorem~\ref{prop3.1009100bis}, is done by appealing to the Langevin dynamics associated with the models (see Section~\ref{sectionLangevin}), and extending the argument presented above from the setting of elliptic equations to the one of parabolic equations. The details are developed in Section~\ref{section4}.

\subsubsection{Integer-valued random-field Gaussian free field} \label{section1.3.3}
For the proof of Theorem~\ref{prop3.100910disc}, we begin by observing that the Hamiltonian~\eqref{eq:integer-valued Hamiltonian} of the integer-valued random-field Gaussian free field may be expressed, by completing a square, as
\begin{equation}
H^{\IV , \eta}_\Lambda(\phi) := \frac 12 \sum_{e \in E \left( \Lambda^+ \right)} \left(\nabla \phi(e) -  \lambda \nabla u_{\Lambda,\eta}(e)\right)^2 + c(\eta)
\end{equation}
where $c(\eta)$ depends only on $\eta$ and $u_{\Lambda,\eta}$ is the ground state of the \emph{real-valued} random-field Gaussian free field (i.e., $-\Delta u_{\Lambda,\eta} = \eta$; see Section~\ref{secueta}). This motivates the following quenched upper bound (Lemma~\ref{Thm7DGFF})
\begin{equation}\label{eq:closeness of integer and real gradients}
    \left\langle \exp \left(  \frac{\beta}{8} \left\| \nabla \phi - \lambda \nabla u_{\Lambda, \eta} \right\|_{L^2  \left( \Lambda^+ \right)}^2 \right)\right\rangle_{\mu_\Lambda^{\IV, \beta, \eta}} \leq \exp \left( C (1+\beta) \left| \Lambda^+ \right| \right)
\end{equation}
(see Section~\ref{Fctandevts} for the $L^2$ notations).
The bound implies, using Jensen's inequality, that  
\begin{equation} \label{eq:11253005}
    \left\| \nabla \phi - \lambda \nabla u_{\Lambda, \eta} \right\|_{\underline{L}^2  \left( \Lambda^+ , \mu_\Lambda^{\IV, \beta, \eta}\right)}^2  \leq C  \left( 1 + \beta^{-1} \right)
  \end{equation}
(this bound is especially simple to see at zero temperature, since the left-hand side is essentially the Hamiltonian $H^{\IV , \eta}_\Lambda$ and one option for $\phi$ is the integer part of the function $\lambda u_{\Lambda,\eta}$). The upper bounds on the gradient fluctuations of $\phi$ (the upper bounds in Theorem~\ref{prop3.100910disc}) are then a direct consequence of~\eqref{eq:11253005} and the upper bounds on $\nabla u_{\Lambda_L, \eta}$ stated in Proposition~\ref{prop.propueta}.
which shows that $\phi$ cannot fluctuate significantly more than $u_{\Lambda, \eta}$ in dimensions $d=1,2$.

It is possible to also rely on~\eqref{eq:11253005} to obtain lower bounds on the gradient and height fluctuations of $\phi$, but the quality of these lower bounds will deteriorate as $\beta$ tends to zero. To remove the $\beta$-dependence we prove that the \emph{thermal average} of $\nabla \phi$ is close to $\lambda\nabla u_{\Lambda, \eta}$ uniformly in the temperature (Lemma~\ref{lemma6.1}),
\begin{equation} \label{eq:strategy uniform closeness}
    \left\| \left\langle \nabla \phi \right\rangle_{\mu^{\IV , \beta, \eta}_{\Lambda}} - \lambda \nabla u_{\Lambda, \eta} \right\|_{\underline{L}^2(\Lambda^+)}^2 \leq C.
\end{equation}
The lower bounds on the gradient fluctuations in Theorem~\ref{prop3.100910disc} are an immediate consequence by the corresponding estimates on $u_{\Lambda, \eta}$.

For Theorem~\ref{prop3.1009100bisdisc}, we use distinct strategies for the upper and lower bounds. The upper bound on the height fluctuations uses the inequality~\eqref{eq:11253005} and the Poincar\'e inequality to deduce that
\begin{equation} \label{eq:20451802}
    \left\| \phi - \lambda u_{\Lambda, \eta} \right\|_{\underline{L}^2  \left( \Lambda_L, \mu_{\Lambda_L}^{\IV, \beta, \eta} \right)}^2 \leq C L^2\left\| \nabla \phi - \lambda \nabla u_{\Lambda, \eta} \right\|_{\underline{L}^2  \left( \Lambda^+_L, \mu_{\Lambda_L}^{\IV, \beta, \eta} \right)}^2 \leq C  \left( 1 + \beta^{-1} \right) L^2.
\end{equation}
This inequality is then combined with the estimates of Proposition~\ref{prop.propueta} to show that the field $\phi$ cannot fluctuate significantly more than $u_{\Lambda, \eta}$ in dimensions $d=1,2$, since the fluctuations of $u_{\Lambda, \eta}$ are larger than (the square root of) the right-hand side of~\eqref{eq:20451802}.
The lower bounds on the height fluctuations are more involved. We first reduce to a suitable estimate for $u_{\Lambda_L,\eta}$ by noting that
\begin{align*}
\left\| \phi \right\|_{\underline{L}^2 \left(\Lambda_L^+ ,  \mu^{\IV, \beta ,\eta}_{\Lambda_L}  \right)}  & \geq  \left\| \left\langle \phi \right\rangle_{\mu^{\IV, \beta ,\eta}_{\Lambda_L}} \right\|_{\underline{L}^2 \left(\Lambda_L^+  \right)}
\\ & \geq \| \lambda u_{\Lambda_L,\eta} \|_{\underline{L}^2(\Lambda_L)}  -\left\| \left\langle \phi \right\rangle_{\mu^{\IV, \beta ,\eta}_{\Lambda_L}} - \lambda u_{\Lambda_L,\eta} \right\|_{\underline{L}^2 \left(\Lambda_L^+\right)}
\\ & \geq \| \lambda u_{\Lambda_L,\eta} \|_{\underline{L}^2(\Lambda_L)} - CL
\end{align*}
where we used Jensen's, the triangle and Poincar\'e inequalities and applied the estimate~\eqref{eq:strategy uniform closeness} (this computation can be found in~\eqref{eq:1941} and~\eqref{eq:1942}). Thus, 
\begin{equation*}
\mathbb{P}\left[\left\| \phi \right\|_{\underline{L}^2 \left(\Lambda_L^+ ,  \mu^{\IV, \beta ,\eta}_{\Lambda_L}  \right)} > c L^{2 - d/2}\right] \geq \mathbb{P}\left[\| \lambda u_{\Lambda_L,\eta} \|_{\underline{L}^2(\Lambda_L)} > c L^{2-d/2} + CL\right].
\end{equation*}
As the height fluctuations of $u_{\Lambda_L,\eta}$ are of order $L^{2-d/2}$ (see Proposition~\ref{prop.propueta}), it is natural to expect the right-hand side of the last inequality to be uniformly positive in $L$ in dimensions $d=1,2$ (with a prefactor depending on $\lambda$ which tends to zero as $\lambda$ decreases). To establish this rigorously, we prove a Central Limit Theorem for the projection of $u_{\Lambda_L,\eta}$ in a suitable direction (see~\eqref{eq:CLT}).

We outline the proof of Theorem~\ref{thm10162010} for the ground state of the model and in the case $t=1$ (to simplify the presentation of the argument). Let $w_{\Lambda_L , \eta} : \Lambda_L \to \Z$ be the ground state of the integer-valued random-field Gaussian free field. In this scenario, the conclusion of Theorem~\ref{thm10162010} can be stated as follows: For any vertex $v \in \Lambda_L$,
\begin{equation} \label{eq:151506}
      \P \left(  w_{\Lambda_L , \eta}(v) \neq 0 \right) \geq 1 - e^{-\frac{c}{\lambda^2}}.
\end{equation}
The proof of~\eqref{eq:151506} relies on the following observation: If $w_{\Lambda_L , \eta}(v) \neq 0$, then there exists a finite connected with connected complement $D \subseteq \Zd$ containing the vertex $v$ such that
\begin{equation} \label{eq:10572406}
    \lambda \left| \sum_{x \in D} \eta(x) \right| \geq \left| \partial D \right|.
\end{equation}
The set $D$ can be constructed as follows. Let us assume, without loss of generality, that $w_{\Lambda_L , \eta}(v)>0$. Let $D_0$ be the connected component of the vertex $v$ in the set
\begin{equation*}
\{ x \in \Lambda_L \, : \, w_{\Lambda_L , \eta}(x)> 0 \},
\end{equation*}
and set $D$ to be $D_0$ union all the connected components of $D_0^c$ except the one containing $\partial\Lambda_L$ (this ensures that $D^c$ is connected). 
The ground state $w_{\Lambda_L , \eta}$ necessarily has lower energy than the function $w_{\Lambda_L , \eta} - \indc_{D}$, which implies~\eqref{eq:10572406} upon rearranging the terms in the Hamiltonian.

We next appeal to a result of Fisher--Fr\"{o}hlich--Spencer~\cite{fisher1984ising}, which states that, in dimensions $d \geq 3$, the following holds: for all sufficiently small $\lambda$ and any fixed vertex $v \in \Lambda_L$, the probability (over the disorder $\eta$) that there exists a finite connected set $D \subseteq \Zd$, containing $v$ and having connected complement, such that~\eqref{eq:10572406} holds is at most $e^{-\frac{c}{\lambda^2}}$. The result implies the inequality~\eqref{eq:151506}. The argument can be extended from the ground state to the low-temperature case, as stated in Theorem~\ref{thm10162010}, through a Peierls-type argument.

\subsection{Organisation of the article} 
The rest of the article is organized as follows. Section~\ref{section2} introduces additional notation. Section~\ref{section2.55} collects the tools and preliminary results used in the proofs. Sections~\ref{section4RVS} and~\ref{section4} treat the case of real-valued random surfaces and are devoted to the proofs of Theorems~\ref{prop3.100910} and~\ref{prop3.1009100bis}, respectively. Section~\ref{section4} is devoted to the integer-valued random-field Gaussian free field, and contains the proofs of Theorems~\ref{prop3.100910disc},~\ref{prop3.1009100bisdisc} and~\ref{thm10162010}. Appendix~\ref{appA} provides a proof of the Nash--Aronson estimate used in Section~\ref{section4RVS}, for the heat kernel in a time-dependent uniformly elliptic environment in a box with Dirichlet boundary condition.

\subsubsection*{Acknowledgements} We are indebted to David Huse for a discussion on the possible behavior of the integer-valued random-field Gaussian free field in dimensions $d\ge 3$, and to Michael Aizenman, Charles M. Newman, Thomas Spencer and Daniel L. Stein for encouragement and helpful conversations on the topics of this work. We are grateful to Florian Schweiger for clarifying to us the different choices of boundary conditions in the membrane model and for pointing out a mistake in a previous draft of Section~\ref{sec:discussion and open questions}. We thank two anonymous referees for useful and constructive comments which helped us improve the manuscript. We are also grateful to Antonio Auffinger, Wei-Kuo Chen, Izabella Stuhl and Yuri Suhov for the opportunity to present these results in online talks and for useful discussions.
The research of the authors was supported in part by Israel Science Foundation grants 861/15  and  1971/19  and  by  the  European  Research  Council starting  grant 678520 (LocalOrder).

\section{Notation} \label{section2}

\subsection{General}
Given a (simple) graph $G = (V(G), E(G))$ we let $\vec{E}(G)$ be the set of directed edges of $G$ (each edge in $E(G)$ appears in $\vec{E}(G)$ with both orientations). We write $x\sim y$ to denote that $\{x,y\}\in E(G)$. We often identify subsets $\Lambda\subset G$ with the induced subgraph of $G$ on $\Lambda$. In particular, we write $G$ for $V(G)$ and write $E(\Lambda)$ and $\vec{E}(\Lambda)$ for the edges of $E(G)$ and $\vec{E}(G)$ having both endpoints in $\Lambda$, respectively. We let $\partial \Lambda = \partial_G \Lambda$ be the \emph{external vertex boundary} of $\Lambda$ in $G$,
\begin{equation*}
    \partial \Lambda := \left\{ x \in G\setminus \Lambda \, \colon \, \exists y \in \Lambda, y \sim x  \right\},
\end{equation*}
$\Lambda^+:=\Lambda\cup\partial\Lambda$, and $\left| \Lambda \right|$ be the cardinality of $\Lambda$, sometimes referred to as \emph{discrete volume}.

Let $\Z^d$ be the standard $d$-dimensional lattice, and let $\left| \, \cdot \, \right|$ be the $\ell^\infty$-norm on $\Zd$. Two vertices $v,w\in\Z^d$ are adjacent if they are equal in all but one coordinate and differ by one in that coordinate. Given $M \in \N$ with $M \geq 1$, we denote by $M \Zd$ the set of points of $\Zd$ whose coordinates are divisible by $M$.

We say that a set $\Lambda$ is \emph{connected}, if, for any pair of distinct vertices $x , y \in \Lambda$, there exists a finite collection of points $x_1 , \ldots, x_N \in \Lambda$ such that $x = x_1$, $y = x_N$ and for any $i \in \{ 1 , \ldots, N - 1 \}$, $x_{i}$ and $x_{i+1}$ are adjacent. We say that a set $\Lambda$ is $\star-$\emph{connected}, if, for any pair of distinct vertices $x , y \in \Lambda$, there exists a finite collection of points $x_1 , \ldots, x_N \in \Lambda$ such that $x = x_1$, $y = x_N$ and $\left|x_i - x_{i+1}\right|=1$ for every $1\le i \le N - 1$.

Write $\Lambda_L := \{-L, \ldots, L\}^d\subset\Z^d$ for any integer $L \ge 0$. This is extended to all $L \in [0 , \infty)$ by setting $\Lambda_L:=\Lambda_{\lfloor L\rfloor}$ where $\lfloor L\rfloor$ denotes the floor of $L$. For $x \in \Zd$, we denote by $x + \Lambda_L$ the box $\Lambda_L$ translated by the vector $x$.

Let $a \wedge b$ be the minimum and $a \vee b$ be the maximum of $a , b \in \R$, and by $\lceil a \rceil$ the ceiling of $a \in \R$.

\subsection{$L^2$-Norms} \label{Fctandevts}
Let $G$ be a finite graph. For a function $\phi : G \to \R$, define the $L^2$ and normalized $L^2$-norms of $\phi$ by the formulae
\begin{equation} \label{def.avL2norm}
    \left\| \phi \right\|_{L^2\left( G \right)} := \left( \sum_{x \in G} \left| \phi(x)\right|^2 \right)^\frac12\quad\text{and}\quad \left\| \phi \right\|_{\underline{L}^2 \left( G\right)} := \left( \frac{1}{\left| G\right|}\sum_{x \in G} \left| \phi(x)\right|^2 \right)^\frac12.
\end{equation}
Define the discrete gradient
\begin{equation}\label{eq:discrete gradient}
\nabla \phi(e) := \phi(y) - \phi(x)\quad\text{for directed edges $e = (x , y) \in \vec{E}(G)$}.
\end{equation}
In expressions which do not depend on the orientation of the edge, such as $|\nabla\phi(e)|^2$ or $V(\nabla\phi(e))$, we allow the edge $e$ to be undirected. For a function $v:\vec{E}(G)\to\R$ satisfying $v((x,y)) = -v((y,x))$ (such as the function $\nabla\phi$) define the $L^2$ and normalized $L^2$-norms of $v$ by the formulae
\begin{equation} \label{eq:11151509}
    \left\| v \right\|_{L^2\left( G \right)} := \left( \sum_{e \in E \left( G\right)} \left| v(e)\right|^2 \right)^\frac12 ~\mbox{and}~ \left\| v \right\|_{\underline{L}^2 \left( G\right)} := \left( \frac{1}{\left| G\right|}\sum_{e \in E \left( G\right)} \left| v(e)\right|^2 \right)^\frac12.
\end{equation}

\subsection{Environments and operators}
Let $G$ be a graph and introduce the notation, for each vertex $x \in G$,
\begin{equation} \label{eq:11121509}
    \sum_{e \ni x} := \sum_{\{e \in \vec{E}(G)\,\colon\, \exists y \in G, e=(x , y)\}}.
\end{equation}
A map $\a : E(G) \to \R$ is called an \emph{environment}. Its definition is extended to directed edges by setting $\a((x,y)):=\a(\{x,y\})$ for $(x,y)\in\vec{E}(\Lambda)$. The operator $-\nabla \cdot \a \nabla$ is defined by the formula
\begin{equation} \label{eq:11330408}
    \nabla \cdot \a \nabla \phi(x) = \sum_{e \ni x} \a (e) \nabla \phi(e)
\end{equation}
for a function $\phi : G \to \R$ and $x\in G$. Unravelling the definitions shows that, for any pair of functions functions $\phi , \psi : G \to \R$, the following discrete integration by parts identity holds:
\begin{equation}\label{eq:discrete integration by parts}
    -\sum_{x \in G} \left( \nabla \cdot \a \nabla \phi(x) \right) \psi(x) = \sum_{e \in E \left(G \right)} \a (e) \nabla \phi(e) \nabla \psi(e)
\end{equation}
in the sense that if one side converges absolutely then the other converges absolutely to the same value. Note that the terms inside the sum on the right-hand side are well defined for undirected edges. We say that the environment $\a$ is uniformly elliptic if there exist $c_-,c_+$ such that $c_- \leq \a \leq c_+$, pointwise.

The above definitions naturally extend to \emph{time-dependent environments} $\a : I \times E(G) \to \R$, where $I\subset\R$ is a (time) interval: The operator $-\nabla\cdot\a\nabla$ acts on time-dependent functions $\phi : I \times G \to \R$ with the same definition~\eqref{eq:11330408} applied at each fixed time. The identity~\eqref{eq:discrete integration by parts} then holds at each fixed time for time-dependent functions $\phi,\psi$.

The discrete Laplacian $\Delta$ is the operator $\nabla \cdot \a \nabla$ with $\a \equiv 1$.

\section{Tools} \label{section2.55}

In this section, we collect tools pertaining to random surfaces, concentration inequalities and estimates on the solution of parabolic equations which are used in the proofs of Theorem~\ref{prop3.100910} and Theorem~\ref{prop3.1009100bis}.

\subsection{Langevin dynamics} \label{sectionLangevin}

The Gibbs measure $\mu_{\Lambda_L}^{\eta}$ (defined in~\eqref{eq:defmuLeta}) is naturally associated with the following dynamics.

\begin{definition}[Langevin dynamics]
Given an integer $L \ge 0$, random field strength $\lambda$, external field $\eta:\Lambda_L\to\R$ and a collection of independent standard Brownian motions $\left\{ B_t(x) \, : \, x \in \Lambda_L\right\}$, define the Langevin dynamics $\left\{ \phi_t(x) \, : \, x \in \Lambda_L\right\}$ to be the solution of the system of stochastic differential equations
\begin{equation} \label{eq:09262707bis}
    \left\{ \begin{aligned}
    \di \phi_t (y) &=  \sum_{e \ni y} V' \left( \nabla \phi_t(e) \right) \di t + \lambda \eta(y)\di t + \sqrt{2}\,\di B_t(y) &&(t , y) \in  (0 , \infty) \times \Lambda_L, \\
    \phi_t(y) &= 0 &&(t , y) \in  (0 , \infty) \times \partial \Lambda_L, \\
    \phi_0(y) &= 0 &&y \in \Lambda_L.
    \end{aligned} \right.
\end{equation}
\end{definition}

The Langevin dynamics~\eqref{eq:09262707bis} is stationary, reversible and ergodic with respect to the Gibbs measure $\mu_{\Lambda_L}^\eta$; in particular, one has the convergence
\begin{equation} \label{eq:ergLangevin}
    \left\langle \phi_t (0) \right\rangle \underset{t \to \infty}{\longrightarrow}  \left\langle \phi (0) \right\rangle_{\mu_{\Lambda_L}^\eta},
\end{equation}
where, in a slight abuse of notation, we use the symbol $\left\langle \cdot \right\rangle$ in the left-hand side to refer to the expectation with respect to the Brownian motions $\left\{ B_t(x) \, : \, x \in \Lambda_L \right\}$ . 

\subsection{The Efron--Stein inequality} \label{section2.5}

We record below the Efron--Stein inequality which will be used in the proof of Theorem~\ref{prop3.1009100bis}. A proof can be found in~\cite[Theorem 3.1]{BGM13}.

\begin{proposition}[Efron--Stein inequality] \label{prop.Efronstein}
Let $\Lambda$ be a finite set, let $\eta,\tilde \eta:\Lambda\to\R$ be independent and identically distributed random vectors with independent coordinates, and let $f : \R^{\Lambda} \to \R$ be a measurable map satisfying $\E \left[ f(\eta)^2 \right] < \infty$. For each $x \in \Lambda$, set $\eta^{x}$ to be the field defined by the formula $\eta^x (y) = \eta(y)$ if $y \neq x$ and $\eta^x (x) = \tilde \eta(x)$. Then
\begin{equation*}
    \var \left[ f \right] \leq \frac 12 \sum_{x \in \Lambda} \E \left[ \left( f(\eta) - f\left(\eta^{x}\right) \right)^2 \right].
\end{equation*}
\end{proposition}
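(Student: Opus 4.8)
The plan is to prove the Efron--Stein inequality by the classical Doob martingale decomposition of $f(\eta) - \E[f(\eta)]$ along the coordinates, together with the elementary identity that the variance of a real random variable equals one half of the expected squared difference of two independent copies. First I would fix an arbitrary enumeration $\Lambda = \{x_1, \dots, x_n\}$ (the ordering plays no role) and set $\F_0$ to be the trivial $\sigma$-algebra and $\F_i := \sigma(\eta(x_1), \dots, \eta(x_i))$ for $1 \le i \le n$, so that $f(\eta)$ is $\F_n$-measurable. Defining the martingale differences $\Delta_i := \E[f(\eta) \mid \F_i] - \E[f(\eta) \mid \F_{i-1}]$, one has $f(\eta) - \E[f(\eta)] = \sum_{i=1}^n \Delta_i$, and since $\E[\Delta_i \mid \F_{i-1}] = 0$ the $\Delta_i$ are pairwise orthogonal in $L^2(\P)$; hence $\var[f(\eta)] = \sum_{i=1}^n \E[\Delta_i^2]$.

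Next I would bound each $\E[\Delta_i^2]$. Let $\E^{(i)}$ denote conditional expectation given $(\eta(x_j))_{j \neq i}$, i.e.\ integration over $\eta(x_i)$ alone. Because the coordinates of $\eta$ are independent, $\E^{(i)} f(\eta)$ does not depend on $\eta(x_i)$, and therefore $\E[\E^{(i)} f(\eta) \mid \F_i] = \E[\E^{(i)} f(\eta) \mid \F_{i-1}] = \E[f(\eta) \mid \F_{i-1}]$: the first equality holds since $\eta(x_i)$ is independent of the arguments of $\E^{(i)} f(\eta)$, and the second because conditioning on $\F_{i-1}$ and applying $\E^{(i)}$ together amount to integrating out $\eta(x_i), \dots, \eta(x_n)$. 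Consequently $\Delta_i = \E[\,f(\eta) - \E^{(i)} f(\eta) \mid \F_i\,]$, and conditional Jensen gives $\Delta_i^2 \le \E[\,(f(\eta) - \E^{(i)} f(\eta))^2 \mid \F_i\,]$. Taking expectations yields $\E[\Delta_i^2] \le \E[(f(\eta) - \E^{(i)} f(\eta))^2] = \E\big[\var^{(i)}[f(\eta)]\big]$, where $\var^{(i)}$ is the conditional variance given $(\eta(x_j))_{j \neq i}$. Summing over $i$ gives the ``conditional-variance'' form $\var[f(\eta)] \le \sum_{i} \E\big[\var^{(i)}[f(\eta)]\big]$.

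Finally I would translate this into the stated form. Condition on $(\eta(x_j))_{j \neq i}$ and regard $f(\eta)$ as a function of $\eta(x_i)$; since $\tilde\eta(x_i)$ is an independent copy of $\eta(x_i)$ (and independent of all other coordinates), $f(\eta^{x_i})$ is, conditionally on $(\eta(x_j))_{j\neq i}$, an i.i.d.\ copy of $f(\eta)$. The identity $\var[X] = \tfrac12 \E[(X - X')^2]$ for i.i.d.\ $X, X'$, which follows by expanding the square, then gives $\var^{(i)}[f(\eta)] = \tfrac12 \E[(f(\eta) - f(\eta^{x_i}))^2 \mid (\eta(x_j))_{j \neq i}]$. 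Taking full expectations and summing over $i$ produces $\var[f(\eta)] \le \tfrac12 \sum_{x \in \Lambda} \E[(f(\eta) - f(\eta^{x}))^2]$, which is the claim. I do not expect a genuine obstacle here — this is a classical inequality — the only point requiring care is the identity $\E[\E^{(i)} f(\eta) \mid \F_i] = \E[f(\eta) \mid \F_{i-1}]$, which is precisely where the independence of the coordinates of $\eta$ is used, and the verification that the $\Delta_i$ are orthogonal.
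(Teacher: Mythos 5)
Your proof is correct and is precisely the classical martingale-difference argument (Doob decomposition along coordinates, conditional Jensen, and the identity $\var[X]=\tfrac12\E[(X-X')^2]$ for i.i.d.\ copies). The paper does not prove the proposition itself but defers to \cite[Theorem 3.1]{BGM13}, whose proof is essentially the one you give, so there is nothing to add.
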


\subsection{Heat kernel bounds} \label{sectionheatkernel}
Let $L\ge 0$ be an integer, let $0<c_-\le c_+<\infty$, let $s_0\in\R$ and let $y\in\Lambda_L$. Let $\a : [s_0 , \infty) \times E(\Lambda_L) \to [c_- , c_+]$ be a \emph{continuous} time-dependent (uniformly elliptic) environment. For each initial time $s \geq s_0$, denote by $P_\a= P_\a(\cdot,\cdot;s,y):[s,\infty)\times\Lambda_L^+\to[0,1]$ the heat kernel associated with Dirichlet boundary conditions in the box $\Lambda_L$, i.e., the solution of the parabolic equation
\begin{equation} \label{eq:16041308}
    \left\{ \begin{aligned}
    \partial_t P_\a (t , x ; s  ,y) - \nabla \cdot \a \nabla P_\a (t , x ; s  ,y) &= 0 &&(t , x) \in (s , \infty) \times \Lambda_L, \\
    P_\a (t , x ; s  ,y) &= 0 &&(t , x) \in (s , \infty) \times \partial \Lambda_L, \\
    P_\a(s , x ; s  ,y) &= \indc_{\{ x = y \}} &&x \in \Lambda_L^+.
    \end{aligned} \right.
\end{equation}
The maximum principle ensures that $P_\a$ is non-negative. Summing $\nabla\cdot \a \nabla P_\a$ over $\Lambda_L$ and integrating by parts implies that this sum is non-positive, which, in turn, shows that $\sum_{x\in\Lambda_L} P_\a(t,x;s,y)$ is a non-increasing function of $t$ on $[s,\infty)$. 
In particular, since the value $\sum_{x\in\Lambda_L} P_\a(s,x;s,y)$ is equal to $1$, one has the estimate, for any $t \geq s$,
\begin{equation} \label{eq:15150110}
    \sum_{x\in\Lambda_L} P_\a(t,x;s,y) \leq 1.
\end{equation}

Upper and lower bounds on heat kernels are usually referred to as Nash--Aronson estimates. They were first established by Aronson in~\cite{Ar}, in the continuous setting and in infinite volume for parabolic equations with time-dependent and uniformly elliptic environment. In the discrete setting, we refer to the article of Delmotte~\cite{De99} and the references therein for a collection of heat kernel estimates on general graphs in static environment. The case of discrete parabolic equations with dynamic and uniformly elliptic environment is treated by Giacomin--Olla--Spohn in~\cite[Appendix B]{GOS}. The proposition stated below is a finite-volume version of their result.

\begin{proposition}[Nash--Aronson estimates] \label{prop.NashAronson}
In the above setup with $L\ge 1$ there exist positive constants $C_0 , c_0$ depending only on the dimension $d$ and the ratio of ellipticity  $c_+ / c_-$ such that the following holds. For all $t\ge s$ and $x\in \Lambda_L$,
\begin{equation} \label{Nash.supprop3.3}
     P_\a \left(t , x ; s  ,y \right) \leq  \frac{C_0}{1 \vee \left( c_-(t-s) \right)^{\frac d2}} \exp \left( - \frac{c_0|x - y|}{1 \vee \left( c_- (t-s) \right)^{\frac 12}} \right) \exp \left( - \frac{c_0c_- (t-s)}{L^2} \right).
\end{equation}
In addition, there exists a constant $c_1 > 0$ depending only on $d$ and $c_+ / c_-$ such that for any $t\ge s$ and any $(x,y)\in\Lambda_L \times \Lambda_{L/2}$ satisfying $|x - y| \leq \sqrt{c_- (t - s)} \leq c_1 L$,
\begin{equation} \label{Nash.infprop3.3}
     P_\a \left(t , x ; s  ,y \right) \geq \frac{c_0}{1 \vee \left( c_-(t-s) \right)^{\frac d2}}.
\end{equation}
\end{proposition}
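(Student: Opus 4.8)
The plan is to deduce both estimates from the whole-space Nash--Aronson bounds by comparison, and then to recover the two genuinely finite-volume features --- the factor $\exp(-c_0 c_-(t-s)/L^2)$ in~\eqref{Nash.supprop3.3} and the near-diagonal lower bound~\eqref{Nash.infprop3.3} --- from, respectively, the spectral gap of the Dirichlet form on $\Lambda_L$ and the fact that on the scale $\sqrt{c_-(t-s)}\le c_1 L$ a walker started in $\Lambda_{L/2}$ does not reach $\partial\Lambda_L$. First I would normalise the ellipticity by a time change: writing the environment as $c_-$ times one with ellipticity constants $1$ and $c_+/c_-$ and setting $Q(\tau,x;\sigma,y):=P_\a(\tau/c_-,x;\sigma/c_-,y)$, one checks that $Q$ is the Dirichlet heat kernel in $\Lambda_L$ for a (still continuous, still time-dependent) environment with ellipticity constants $1$ and $c_+/c_-$; since $P_\a(t,x;s,y)=Q(c_-t,x;c_-s,y)$, it suffices to treat $c_-=1$, and the substitution produces exactly the quantities $c_-(t-s)$ that appear in the statement.

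For the upper bound, extend the environment to a continuous, time-dependent, uniformly elliptic environment $\bar\a$ on $[s_0,\infty)\times E(\Zd)$ agreeing with the original one on every edge meeting $\Lambda_L$, and let $\bar P$ be the corresponding heat kernel on $\Zd$. Then $w:=\bar P-Q$ solves the same parabolic equation inside $\Lambda_L$, vanishes at the initial time, and equals $\bar P\ge 0$ on $\partial\Lambda_L$, so the discrete parabolic maximum principle gives $0\le Q\le \bar P$ on $\Lambda_L^+$. Applying the whole-space Nash--Aronson upper bound to $\bar P$ yields~\eqref{Nash.supprop3.3} without the last exponential factor, which already settles the range $c_-(t-s)\le L^2$, where that factor is bounded below by a constant. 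For $t-s>L^2$ I would insert the missing factor by a three-step Chapman--Kolmogorov decomposition: split $[s,t]$ into subintervals of lengths $\tfrac14(t-s)$, $\tfrac12(t-s)$, $\tfrac14(t-s)$; bound the two outer factors by $C(t-s)^{-d/4}$ using the $L^1\to L^2$ form of the on-diagonal bound already obtained (that is, $\|f\|_{L^2(\Lambda_L)}^2\le\|f\|_{L^\infty}\|f\|_{L^1(\Lambda_L)}$ and the total-mass bound~\eqref{eq:15150110}); and bound the middle $L^2(\Lambda_L)\to L^2(\Lambda_L)$ factor by $e^{-c(t-s)/L^2}$, using the energy identity from~\eqref{eq:discrete integration by parts}, that $\tfrac{d}{dt}\|Q(t,\cdot;s,y)\|_{L^2(\Lambda_L)}^2$ equals $-2$ times the Dirichlet energy of $Q(t,\cdot;s,y)$, together with the discrete Friedrichs inequality $\|f\|_{L^2(\Lambda_L)}^2\le CL^2\|\nabla f\|_{L^2(\Lambda_L^+)}^2$ for $f$ vanishing on $\partial\Lambda_L$. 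Multiplying the three bounds gives $Q(t,x;s,y)\le C(t-s)^{-d/2}e^{-c(t-s)/L^2}$, and since in this regime $|x-y|\le\diam\Lambda_L^+\le 3L<3\sqrt{t-s}$, the Gaussian factor $e^{-c|x-y|/\sqrt{t-s}}$ is bounded below and can be reinstated for free.

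For the lower bound I would again write $Q=\bar P-w$, now using that the maximum principle gives $0\le w\le\max_{\partial\Lambda_L\times[s,t]}\bar P(\cdot\,;s,y)$. When $c_-(t-s)$ is below a fixed constant the hypothesis $|x-y|\le\sqrt{c_-(t-s)}$ forces $x=y$, and one instead invokes the crude differential inequality $\tfrac{d}{dt}P_\a(t,y;s,y)\ge -2dc_+ P_\a(t,y;s,y)$, which yields $P_\a(t,y;s,y)\ge e^{-2d(c_+/c_-)c_-(t-s)}\ge c_0$; otherwise $\sqrt{c_-(t-s)}$ is bounded below, so the constraint $\sqrt{c_-(t-s)}\le c_1 L$ forces $L$ to be large. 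In that regime, since $|z'-y|\ge\dist(y,\partial\Lambda_L)\ge L/2$ for $y\in\Lambda_{L/2}$ and $z'\in\partial\Lambda_L$, the whole-space \emph{upper} bound together with the monotonicity of $\rho\mapsto\rho^{-d}e^{-a/\rho}$ shows that $\max_{\partial\Lambda_L\times[s,t]}\bar P(\cdot\,;s,y)$ is at most $C(t-s)^{-d/2}\varepsilon(c_1)$ with $\varepsilon(c_1)\to 0$ as $c_1\to 0$; meanwhile the whole-space near-diagonal \emph{lower} bound gives $\bar P(t,x;s,y)\ge c(t-s)^{-d/2}$ whenever $|x-y|\le\sqrt{t-s}$. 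Choosing $c_1$ small enough, depending only on $d$ and $c_+/c_-$, so that $C\varepsilon(c_1)\le c/2$, we obtain $Q(t,x;s,y)\ge\tfrac12\bar P(t,x;s,y)\ge\tfrac c2 (t-s)^{-d/2}$, which is~\eqref{Nash.infprop3.3}.

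The main obstacle is the whole-space input itself: everything above rests on the \emph{full} Nash--Aronson package --- both the Gaussian upper bound and the near-diagonal lower bound --- for a \emph{discrete parabolic} equation with a \emph{time-dependent} uniformly elliptic environment on $\Zd$, which is the content of~\cite[Appendix B]{GOS}; should the lower bound not be stated there in the form needed, it can be produced by Nash's entropy argument for the on-diagonal lower bound followed by the parabolic Harnack inequality, both of which go through for discrete time-dependent uniformly elliptic operators. The remaining ingredients --- the two maximum-principle comparisons, the Friedrichs/spectral-gap estimate, the Chapman--Kolmogorov bookkeeping, and the boundary-separation estimate --- are routine, the only delicate point being to verify at each step that all constants depend on the environment solely through the ratio $c_+/c_-$.
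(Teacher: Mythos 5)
Your proposal is correct and follows essentially the same route as the paper's proof: the time-change normalization to $c_-=1$, the extension of $\a$ to $\Z^d$ followed by a maximum-principle comparison with the infinite-volume kernel of~\cite[Appendix~B]{GOS}, the Poincar\'e/spectral-gap mechanism for the factor $e^{-c_0 c_-(t-s)/L^2}$ combined with a Chapman--Kolmogorov factorization, and the boundary-versus-near-diagonal comparison (your $Q=\bar P-w$ is the paper's $P_{\a,\infty}-\ep\le P_\a$) for the lower bound. The only cosmetic difference is in the large-time on-diagonal upper bound, where the paper integrates a single differential inequality combining the Poincar\'e and Nash inequalities while you split the time interval into three pieces and obtain the polynomial decay from the outer factors by $L^1$--$L^\infty$ interpolation; both work.
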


The proof of this result is the subject of Appendix~\ref{appA}.

\begin{remark}
The $L$-dependent term in the right-hand side of~\eqref{Nash.supprop3.3} is a consequence of the Dirichlet boundary condition in the definition~\eqref{eq:16041308} of the heat kernel $P_\a$. It shows that the map $P_\a$ decays exponentially fast after a time of order $L^2$. It is obtained by analytic arguments in Appendix~\ref{appA} and has a natural probabilistic interpretation: If we denote by $(X_t)_{t \geq s}$ the continuous time random-walk evolving in the time dependent environment $\a$ (see~\cite[Section 3.2]{GOS} for a formal definition of the process) started at time $s$ from the vertex $y$, then one has the identity
\begin{equation} \label{eq:15071902}
    P_\a \left(t , x ; s  ,y \right) = \P \left[ X_t = x, \, \tau_{\Lambda_L } \geq t \right],
\end{equation}
where $\tau_{\Lambda_L}$ is the random walk hitting time of the boundary $\partial \Lambda_L$. Under the assumption that the environment $\a$ is uniformly elliptic, the random walk $X_t$ will typically hit the boundary $\partial \Lambda_L$ in a time of order $L^2$, and the probability that it remains in the box $\Lambda_L$ for a time $t-s$ will exhibit an exponential decay of the form $\exp \left( - c (t-s)/L^2 \right)$. This observation implies the exponential decay of the heat kernel $P_\a$ for large $t-s \geq L^2$ by the identity~\eqref{eq:15071902}. The inequality~\eqref{Nash.supprop3.3}, proved in Appendix~\ref{appA} using equivalent, analytic techniques, can be deduced from a refinement of the previous argument and gives an accurate description of the heat kernel $P_\a$ over the entire time line $t \in (s , \infty)$.
\end{remark}

\subsection{Probability density identities}\label{sec:IBP}

Suppose $f:\R^n\to[0,\infty)$ is a continuously differentiable probability density such that $|y|f(y)$ tends to zero at infinity and satisfies that $y \to (1 + |y|)\nabla f(y)$ is integrable. Integration by parts implies that, for each index $1\le j\le n$,
\begin{align}
  \int_{\R^n} \frac{d f(y)}{dy_j} dy &= 0,\label{eq:translation identity}\\
  \int_{\R^n} y_j\frac{d f(y)}{dy_j} dy &= -1.\label{eq:multiplication identity}
\end{align}

Let $\Lambda\subset\Z^d$ be finite and $\eta:\Lambda\to\R$. Applying the above identities to the probability density of $\mu_\Lambda^\eta$ (see~\eqref{eq:defmuLeta}) under the assumption~\eqref{eq:V ellipticity} shows that, for each $x\in\Lambda$,
\begin{align}
  &-\langle \sum_{e \ni x} V'(\nabla\phi(e))\rangle_{\mu_\Lambda^\eta} = \lambda\, \eta(x),\label{eq:first quenched identity}\\
  &-\langle \phi(x)\sum_{e \ni x} V'(\nabla\phi(e))\rangle_{\mu_\Lambda^\eta} = 1+\lambda\, \eta(x)\langle \phi(x) \rangle_{\mu_\Lambda^\eta}.\label{eq:second quenched identity}
\end{align}

\section{Gradient fluctuations in the real-valued case} \label{section4RVS}

The objective of this section is to prove the delocalization of the gradient of the real-valued random-field random surfaces in dimensions $d \leq 2$, and its localization in dimensions $d \geq 3$, proving Theorem~\ref{prop3.100910}.

\subsection{The ground state of the random-field Gaussian free field} \label{secueta}
Given a finite $\Lambda\subset\Z^d$ and a function $\eta:\Lambda\to\R$, let $u_{\Lambda,\eta} : \Lambda^+ \to \R$ be the solution of the Dirichlet problem
\begin{equation} \label{def.ueta}
    \left\{ \begin{aligned}
    -\Delta u_{\Lambda,\eta} &= \eta &&\mbox{in}~\Lambda, \\
    u_{\Lambda,\eta} &= 0 &&\mbox{on}~\partial\Lambda.
    \end{aligned}
    \right.
\end{equation}
One readily checks that $u_{\Lambda,\eta}$ is the ground state of the random-field Gaussian free field in $\Lambda$ with zero boundary conditions and unit disorder strength. That is, $u_{\Lambda,\eta}$ minimizes the Hamiltonian $H_\Lambda^\eta$ given by~\eqref{eq:101071608}, with $V(x) = \frac{1}{2}x^2$ and $\lambda=1$, among all functions $\phi$ which equal zero on $\partial\Lambda$. The function $u_{\Lambda,\eta}$ will be instrumental in analyzing the gradient fluctuations of the real-valued random-field random surfaces and will also play a role in our analysis of the integer-valued random-field Gaussian free field in Section~\ref{section66}.

The next proposition studies the order of magnitude of $u_{\Lambda,\eta}$ and its gradient when $\Lambda = \Lambda_L$ and $\eta$ is random with independent and normalized values.
\begin{proposition}\label{prop.propueta}
Suppose $(\eta(x))_{x\in\Lambda}$ are independent with zero mean and unit variance. Then
\begin{enumerate}[label=(\roman*)]
  \item\label{item:mean zero} The random variables $(u_{\Lambda,\eta}(x))_{x\in\Lambda}$ and $(\nabla u_{\Lambda,\eta} (e))_{e\in\vec{E}(\Lambda)}$ have zero mean.
\end{enumerate}
Now let $\Lambda=\Lambda_L$ for an integer $L\ge 2$. There exist constants $C, c > 0$ depending only on the dimension $d$ such that
\begin{enumerate}[label=(\roman*), resume*]
    \item The following upper bounds hold for all $x\in\Lambda_L$,
    \begin{equation} \label{eq:10321908}
    \E \left[u_{\Lambda_L,\eta}(x)^2 \right] \leq \left\{ \begin{aligned}
        &C L^{4-d} &&1\le d\leq 3,\\
        &C \ln L  &&d =4, \\
        &C &&d \geq 5;
        \end{aligned} \right.
    \end{equation}
    \item The following lower bounds hold for all $x \in\Lambda_{c L}$,
    \begin{equation} \label{eq:10331908}
    \E \left[ u_{\Lambda_L,\eta}(x)^2 \right] \geq \left\{ \begin{aligned}
        &c L^{4-d} &&1\le d\leq 3,\\
        &c \ln L  &&d =4,\\
        &c &&d \geq 5;
        \end{aligned} \right.
    \end{equation}
    \item\label{item:L 2 bounds for gradient of u eta} The averaged $L^2$-norm of the gradient of $u_{\Lambda_L,\eta}$ satisfies
    \begin{equation} \label{eq:11421908}
    \E \left[ \left\| \nabla u_{\Lambda_L,\eta} \right\|_{\underline{L}^2 \left( \Lambda_L^+ \right)}^2 \right] \approx \left\{ \begin{aligned}
        &L &&d=1,\\
        &\ln L  &&d =2, \\
        &1&&d \geq 3,
        \end{aligned} \right.
    \end{equation}
    where $a\approx b$ is used here in the sense $c\cdot a\le b\le C\cdot a$.
\end{enumerate}
\end{proposition}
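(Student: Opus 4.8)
The plan is to reduce everything to explicit Green's function estimates. Since $u_{\Lambda,\eta}$ solves $-\Delta u_{\Lambda,\eta} = \eta$ with zero boundary data, we have the representation $u_{\Lambda,\eta}(x) = \sum_{y\in\Lambda} G_\Lambda(x,y)\eta(y)$, where $G_\Lambda$ is the Dirichlet Green's function for $-\Delta$ on $\Lambda$. Part \ref{item:mean zero} is then immediate: $u_{\Lambda,\eta}(x)$ is a linear combination of the $\eta(y)$, each of which has zero mean, and likewise $\nabla u_{\Lambda,\eta}(e)$ is linear in $\eta$; linearity of expectation gives the claim. For the remaining parts we use independence and $\var[\eta(y)]=1$ to get the clean identity
\begin{equation*}
\E\left[u_{\Lambda_L,\eta}(x)^2\right] = \sum_{y\in\Lambda_L} G_{\Lambda_L}(x,y)^2,
\end{equation*}
so (ii) and (iii) become purely deterministic statements about sums of squares of the lattice Green's function.

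For the upper bound (ii), I would compare $G_{\Lambda_L}$ with the whole-space Green's function $G_{\Z^d}$ (for $d\ge 3$) or with the Green's function killed on exiting a ball, using the maximum principle to get $G_{\Lambda_L}(x,y)\le G_{\Z^d}(x,y)$ when $d\ge3$. The standard bound $G_{\Z^d}(x,y)\approx (1\vee|x-y|)^{2-d}$ then gives $\sum_{y\in\Lambda_L} G_{\Lambda_L}(x,y)^2 \lesssim \sum_{k=0}^{2L} k^{d-1} k^{2(2-d)} = \sum_k k^{3-d}$, which is $O(L^{4-d})$ for $d\le 3$, $O(\ln L)$ for $d=4$, and $O(1)$ for $d\ge5$ — exactly the three cases. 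For $d=1,2$ the whole-space Green's function is not summable, so instead I would use the bound $G_{\Lambda_L}(x,y)\le C_d L^{2-d}$ valid uniformly (for $d=1$, $G_{\Lambda_L}(x,y)\le CL$; for $d=2$, $G_{\Lambda_L}(x,y)\le C\ln L$) together with the identity $\sum_y G_{\Lambda_L}(x,y) = \E_x[\tau_{\Lambda_L}] \le CL^2$ (the expected exit time of simple random walk from $\Lambda_L$), so that $\sum_y G_{\Lambda_L}(x,y)^2 \le \left(\max_y G_{\Lambda_L}(x,y)\right)\sum_y G_{\Lambda_L}(x,y) \le CL^{2-d}\cdot CL^2 = CL^{4-d}$; the same telescoping works for $d\ge 3$ too if one prefers a unified argument.

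For the lower bound (iii), restrict to $x\in\Lambda_{cL}$ and sum only over $y$ in a ball of radius $cL$ around $x$ that stays well inside $\Lambda_L$; on such $y$ one has the matching lower bound $G_{\Lambda_L}(x,y)\ge c(1\vee|x-y|)^{2-d}$ for $d\ge3$ (and $\ge cL$, resp. $\ge c\ln(L/|x-y|)$, for $d=1,2$), which follows from comparing with the Green's function of a slightly larger ball and a Harnack/hitting-probability argument. Summing the squares over this annular region reproduces the lower bounds $cL^{4-d}$, $c\ln L$, $c$. Finally, part \ref{item:L 2 bounds for gradient of u eta}: by the discrete integration-by-parts identity \eqref{eq:discrete integration by parts} applied to $\phi=\psi=u_{\Lambda_L,\eta}$ with $\a\equiv1$, $\|\nabla u_{\Lambda_L,\eta}\|_{L^2(\Lambda_L^+)}^2 = -\sum_x \Delta u_{\Lambda_L,\eta}(x)\,u_{\Lambda_L,\eta}(x) = \sum_{x\in\Lambda_L}\eta(x)u_{\Lambda_L,\eta}(x)$; taking expectations and using independence with unit variance gives $\E[\|\nabla u_{\Lambda_L,\eta}\|_{L^2}^2] = \sum_{x\in\Lambda_L}G_{\Lambda_L}(x,x)$. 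Dividing by $|\Lambda_L|$, the normalized norm has expectation $\approx \frac{1}{|\Lambda_L|}\sum_x G_{\Lambda_L}(x,x)$; since $G_{\Lambda_L}(x,x)\approx L$ for $d=1$, $\approx \ln L$ for $d=2$ (both uniformly for $x$ in a constant fraction of the box, with an $O(1)$-off-diagonal-type upper bound near the boundary), and $\approx 1$ for $d\ge3$, the three stated asymptotics follow. The main obstacle is organizing the Green's function estimates near $\partial\Lambda_L$ so that the boundary layer does not spoil the lower bounds — this is handled by restricting the summation region to $\Lambda_{cL}$ (for the pointwise lower bounds) and by noting the boundary contributions are of smaller or equal order for the averaged quantities; all of these are classical potential-theoretic facts about simple random walk, so no genuinely new difficulty arises.
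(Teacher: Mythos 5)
Your overall skeleton coincides with the paper's: the Green's function representation $u_{\Lambda,\eta}(x)=\sum_y G_\Lambda(x,y)\eta(y)$, the variance identity $\E[u_{\Lambda_L,\eta}(x)^2]=\sum_y G_{\Lambda_L}(x,y)^2$, and the integration-by-parts reduction of $\E[\|\nabla u_{\Lambda_L,\eta}\|_{\underline{L}^2}^2]$ to $|\Lambda_L|^{-1}\sum_x G_{\Lambda_L}(x,x)$ are all exactly what the paper does. The only methodological difference is where the Green's function estimates come from: the paper derives them from the finite-volume Nash--Aronson heat-kernel bounds via Duhamel's formula (which it needs anyway for Theorem~\ref{Thm6}), whereas you invoke classical random-walk potential theory (maximum-principle comparison with $G_{\Z^d}$, expected exit times, Harnack). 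That substitution is perfectly legitimate and arguably more elementary.

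There is, however, one step that fails as written: the $d=2$ upper bound. Your "max times sum" estimate uses $\max_y G_{\Lambda_L}(x,y)\le C_d L^{2-d}$, but for $d=2$ the on-diagonal value is of order $\ln L$, not $O(1)$ (you even say so parenthetically), so the product bound gives
\begin{equation*}
\sum_{y}G_{\Lambda_L}(x,y)^2\le \Bigl(\max_y G_{\Lambda_L}(x,y)\Bigr)\sum_y G_{\Lambda_L}(x,y)\le C\ln L\cdot CL^2=CL^2\ln L,
\end{equation*}
which is weaker by a factor $\ln L$ than the claimed $CL^{4-d}=CL^2$. The same remark applies to your aside that the "same telescoping works for $d\ge 3$": there $\max_y G_{\Lambda_L}(x,y)\asymp 1$ and the product bound only yields $CL^2$, far from $CL^{4-d}$. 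The fix for $d=2$ is to use the spatially decaying bound $G_{\Lambda_L}(x,y)\le C\ln\bigl(CL/(1\vee|x-y|)\bigr)$ (which you already use in the lower bound) and sum the squares directly: $\sum_{k=1}^{CL}k\,\ln^2(CL/k)\le CL^2$, since $\int_0^1 u\ln^2(1/u)\,du<\infty$. With that replacement (and keeping the whole-space comparison for $d\ge3$), your argument is complete and matches the paper's conclusions.
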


\begin{proof}
Let $G_\Lambda:\Lambda^+\times\Lambda\to[0,\infty)$ be the elliptic Green's function with Dirichlet boundary condition, defined by requiring that for all $y \in \Lambda$,
\begin{equation} \label{eq:100214bis}
    \left\{ \begin{aligned}
    -\Delta G_\Lambda(\cdot, y) &= \delta_y &&\mbox{in}~\Lambda, \\
     G_\Lambda(\cdot, y) &= 0 &&\mbox{on}~ \partial \Lambda
    \end{aligned}
    \right.
\end{equation}
with $\delta_y$ the Kronecker delta function.
The maximum principle shows that $G_\Lambda$ is indeed non-negative. The linearity of~\eqref{def.ueta} and~\eqref{eq:100214bis} implies that
\begin{equation} \label{eq:16051008}
    u_{\Lambda,\eta}(x) = \sum_{y \in \Lambda} G_\Lambda (x , y) \eta(y) \hspace{3mm} \mbox{and} \hspace{3mm} \nabla u_{\Lambda,\eta} (e) = \sum_{y \in \Lambda} \nabla G_\Lambda (e , y) \eta(y)
\end{equation}
for $x \in \Lambda^+$ and $e \in \vec{E} \left( \Lambda_L^+\right)$. Property~\ref{item:mean zero} is an immediate consequence. In addition, our assumptions on $\eta$ imply that
\begin{equation} \label{eq:10121908}
    \E \left[ u_{\Lambda,\eta}(x)^2 \right] = \sum_{y \in \Lambda} G_\Lambda (x , y)^2  \hspace{3mm} \mbox{and} \hspace{3mm} \E \left[ \left(\nabla u_{\Lambda,\eta} (e) \right)^2 \right] = \sum_{y \in \Lambda} \left(\nabla G_\Lambda (e , y)\right)^2.
\end{equation}

Now specialize to the case $\Lambda = \Lambda_L$ with $L\ge 2$.
Denote by $P_{\Lambda_L}$ the solution of the parabolic equation~\eqref{eq:16041308} in the specific case of the heat equation (i.e., in the case $\a \equiv 1$). The elliptic Green's function $G_{\Lambda_L}$ is related to the heat kernel $P_{\Lambda_L}$ through the identity: For any $x\in\Lambda_L^+$ and $y \in \Lambda_L$,
\begin{equation} \label{eq:02031902}
    G_{\Lambda_L} \left( x , y \right) = \int_0^\infty P_{\Lambda_L} \left( t , x ;0 , y \right) \, dt.
\end{equation}
The identity~\eqref{eq:02031902} can be checked by a direct computation: Proposition~\ref{prop.NashAronson} implies that $P_{\Lambda_L} \left( t , x ;0 , y \right)$ tends to $0$ as $t$ tends to infinity for any $x , y \in \Lambda_L$ and that the right-hand side of~\eqref{eq:02031902} is well-defined. Taking the discrete Laplacian of the map $x \mapsto \int_0^\infty P_{\Lambda_L} \left( t , x ;0 , y \right) \, dt$ and using the definition of the heat kernel $P_{\Lambda_L}$, we obtain
\begin{align}
    -\Delta \int_0^\infty P_{\Lambda_L} \left( t , x ;0 , y \right) \, dt = \int_0^\infty -\Delta P_{\Lambda_L} \left( t , x ;0 , y \right) \, dt = \int_0^\infty -\partial_t P_{\Lambda_L} \left( t , x ;0 , y \right) \, dt & = P(0 , x ; 0 , y) \\
    & = \indc_{\{ x = y\}}. \notag
\end{align}
Consequently the map $x \mapsto \int_0^\infty P_{\Lambda_L} \left( t , x ;0 , y \right) \, dt$ solves the equation~\eqref{eq:100214bis} for each $y\in\Lambda_L$. Since this equation has a unique solution, it implies the identity~\eqref{eq:02031902}.
Proposition~\ref{prop.NashAronson} additionally shows the upper bounds
\begin{equation} \label{eq:10281908}
     G_{\Lambda_L} \left( x , y \right) \leq \left\{
     \begin{aligned}
     &C L &d=1 , \\
     &C \ln \left( \frac{C L}{ 1 \vee |x - y|} \right) &d=2, \\
       &\frac C{1 \vee |x - y|^{d-2}} &d\geq 3
     \end{aligned}
     \right.
\end{equation}
valid for all $x, y \in \Lambda_L$, and the lower bounds
\begin{equation} \label{eq:10291908}
     G_{\Lambda_L} \left( x , y \right) \geq \left\{
     \begin{aligned}
     &c L &d=1 , \\
     &c \ln \left( \frac{L}{ 1 \vee |x - y|} \right) &d=2, \\
       &\frac c{1 \vee |x - y|^{d-2}} &d\geq 3,
     \end{aligned}
     \right.
\end{equation}
valid for $y \in \Lambda_{L/2}$ and $x \in \Lambda_L$ such that $|x - y| \leq c L$, with $C,c>0$ depending only on $d$.
Combining the identity~\eqref{eq:10121908} with the upper and lower bounds~\eqref{eq:10281908} and~\eqref{eq:10291908} shows the upper and lower bounds on the value of $\E \left[ u_{\Lambda_L,\eta}(x)^2 \right]$ stated in~\eqref{eq:10321908} and~\eqref{eq:10331908}.

We proceed to obtain the estimates on the averaged $L^2$-norm of $\nabla u_{\Lambda_L,\eta}$. The identity~\eqref{eq:10121908} implies that
\begin{equation}\label{eq:averaged L_2 norm of u_eta initial}
    \E \left[ \left\| \nabla u_{\Lambda_L,\eta} \right\|_{\underline{L}^2 \left( \Lambda_L^+ \right)}^2 \right] =  \frac1{\left| \Lambda_L^+ \right|} \sum_{e \in E \left( \Lambda_L^+ \right), \, y \in \Lambda_L} \left(\nabla G_{\Lambda_L} (e , y)\right)^2.
\end{equation}
Applying~\eqref{eq:discrete integration by parts} (with $G = \Lambda_L^+$, $a\equiv 1$ and $\phi=\psi=G_{\Lambda_L}(\cdot,y)$) and using~\eqref{eq:100214bis} we see that
\begin{equation} \label{eq:11411908}
   \sum_{e \in E \left( \Lambda_L^+ \right)} \left(\nabla G_{\Lambda_L} (e , y)\right)^2 = G_{\Lambda_L} (y , y)
\end{equation}
for each $y \in \Lambda_L$.
Substituting in~\eqref{eq:averaged L_2 norm of u_eta initial}, we obtain
\begin{equation*}
    \E \left[ \left\| \nabla u_{\Lambda_L,\eta} \right\|_{\underline{L}^2 \left( \Lambda_L^+ \right)}^2 \right] = \frac1{\left| \Lambda_L^+ \right|} \sum_{y \in \Lambda_L} G_{\Lambda_L}(y, y).
\end{equation*}
Applying the upper and lower bounds~\eqref{eq:10281908} and~\eqref{eq:10291908} with the choice $x = y$ and using the non-negativity of the Green's function $G_{\Lambda_L}$ we obtain the inequalities in~\eqref{eq:11421908}.
\end{proof}

\subsection{Gradient fluctuations} \label{subsec3.1}

In this section, we prove Theorem~\ref{prop3.100910} for the uniformly convex $\nabla \phi$-model, as a consequence of the following result which holds for any choice of $\eta$. Theorem~\ref{prop3.100910} follows from the result upon setting $\Lambda = \Lambda_L$ and letting $\eta$ be random as in the theorem and then taking an expectation over $\eta$ and applying the bounds in item~\ref{item:L 2 bounds for gradient of u eta} of Proposition~\ref{prop.propueta}.
\begin{theorem}\label{eq:quenched gradient fluctuations}
  Let $\Lambda\subset\Z^d$ be finite, $\lambda>0$ and $0<c_-\le c_+<\infty$. Let $\eta:\Lambda\to\R$ and let $u_{\Lambda, \eta}:\Lambda^+\to\R$ be the solution of~\eqref{def.ueta}. Suppose $\phi$ is sampled from the measure $\mu^\eta_\Lambda$ of~\eqref{eq:defmuLeta}, where the potential $V$ satisfies~\eqref{eq:V ellipticity}. Then the quantity
  \begin{equation}\label{eq:gradient norm in two spaces}
    \left\| \nabla \phi \right\|_{L^2  \left( \Lambda^+, \mu_\Lambda^\eta\right)} := \left(\sum_{e \in E \left( \Lambda^+\right)} \langle\left( \nabla\phi(e)\right)^2\rangle_{\mu_\Lambda^\eta} \right)^\frac12
  \end{equation}
  satisfies
  \begin{equation}\label{eq:ComparisonOfL2Norms}
    \frac{\lambda}{c_+}\left\| \nabla u_{\Lambda,\eta} \right\|_{L^2 \left( \Lambda^+\right)} \le \left\| \nabla \phi \right\|_{L^2  \left( \Lambda^+, \mu_\Lambda^\eta\right)} \le \frac{2\lambda}{c_-}\left\| \nabla u_{\Lambda,\eta} \right\|_{L^2 \left( \Lambda^+\right)} + \sqrt{\frac{2|\Lambda|}{c_-}}.
  \end{equation}
\end{theorem}
\begin{proof}
  We start with the proof of the lower bound. Multiply the identity~\eqref{eq:first quenched identity} by $u_{\Lambda,\eta}(x)$ and sum over all $x\in \Lambda$ to obtain
  \begin{equation}\label{eq:gradient lower bound initial equality}
    -\sum_{x\in\Lambda} u_{\Lambda, \eta}(x)\langle \sum_{e \ni x} V'(\nabla\phi(e))\rangle_{\mu_\Lambda^\eta} = \lambda\sum_{x\in\Lambda} u_{\Lambda, \eta}(x)\eta(x).
  \end{equation}
  The left-hand side is developed by observing that each edge appears in the sum with both orientations and that $V'$ is an odd function,
  \begin{equation*}
    -\sum_{x\in\Lambda} u_{\Lambda, \eta}(x)\langle \sum_{e \ni x} V'(\nabla\phi(e))\rangle_{\mu_\Lambda^\eta} = \sum_{e\in E(\Lambda^+)} \nabla u_{\Lambda,\eta}(e) \langle V'(\nabla\phi(e))\rangle_{\mu_\Lambda^\eta}
  \end{equation*}
  (where the boundary terms may be added by noting that $u_{\Lambda,\eta}\equiv 0$ on $\partial\Lambda$).
  The right-hand side of~\eqref{eq:gradient lower bound initial equality} is developed as
  \begin{equation*}
    \sum_{x\in\Lambda} u_{\Lambda, \eta}(x)\eta(x) = -\sum_{x\in\Lambda} u_{\Lambda, \eta}(x)\Delta u_{\Lambda,\eta}(x) = \sum_{e\in E(\Lambda^+)} (\nabla u_{\Lambda, \eta}(e))^2 = \left\| \nabla u_{\Lambda, \eta} \right\|_{L^2\left( \Lambda^+ \right)}^2
  \end{equation*}
  by using the equation~\eqref{def.ueta} and the equality~\eqref{eq:discrete integration by parts}. Substituting the last two equalities in~\eqref{eq:gradient lower bound initial equality} shows that
  \begin{equation}
    \lambda \left\| \nabla u_{\Lambda, \eta} \right\|_{L^2\left( \Lambda^+ \right)}^2 = \sum_{e\in E(\Lambda^+)} \nabla u_{\Lambda,\eta}(e) \langle V'(\nabla\phi(e))\rangle_{\mu_\Lambda^\eta}.
  \end{equation}
  An application of the Cauchy-Schwarz inequality to the right-hand side then gives
  \begin{equation}\label{eq:gradient lower bound middle inequality}
    \lambda \left\| \nabla u_{\Lambda, \eta} \right\|_{L^2\left( \Lambda^+ \right)}\le \big\| \langle V'(\nabla\phi)\rangle_{\mu_\Lambda^\eta} \big\|_{L^2\left( \Lambda^+ \right)}.
  \end{equation}
  Finally, as $V$ is an even function satisfying~\eqref{eq:V ellipticity}, it follows that
  \begin{equation}\label{eq:derivative of V estimate}
    c_- x^2 \le V'(x) x\le c_+ x^2
  \end{equation}
  for all $x\in\R$. In particular, $|V'(x)|\le c_+ |x|$. Substituting in~\eqref{eq:gradient lower bound middle inequality} shows that
  \begin{equation*}
    \lambda \left\| \nabla u_{\Lambda, \eta} \right\|_{L^2\left( \Lambda^+ \right)} \le c_+ \big\| \langle|\nabla\phi|\rangle_{\mu_\Lambda^\eta} \big\|_{L^2\left( \Lambda^+ \right)}.
  \end{equation*}
  The Cauchy-Schwarz inequality shows that $\langle|\nabla\phi(e)|\rangle_{\mu_\Lambda^\eta}^2\le \langle(\nabla\phi(e))^2\rangle_{\mu_\Lambda^\eta}$ and the lower bound of the theorem follows.

  We proceed to prove the upper bound. Sum the identity~\eqref{eq:second quenched identity} over all $x\in \Lambda$ to obtain
  \begin{equation}\label{eq:gradient upper bound initial equality}
    -\sum_{x\in\Lambda} \langle \phi(x)\sum_{e \ni x} V'(\nabla\phi(e))\rangle_{\mu_\Lambda^\eta} = |\Lambda| + \lambda \sum_{x\in\Lambda^+}\eta(x)\langle \phi(x) \rangle_{\mu_\Lambda^\eta}
  \end{equation}
  (where the terms with $x\in\partial\Lambda$ are added by noting that $\phi\equiv 0$ on $\partial\Lambda$). As in the analysis that lead to the lower bound, the left-hand side is developed by summing over the two orientations of each edge, and thus
  \begin{equation*}
    -\sum_{x\in\Lambda} \langle \phi(x)\sum_{e \ni x} V'(\nabla\phi(e))\rangle_{\mu_\Lambda^\eta} = \sum_{e\in E(\Lambda^+)} \langle\nabla \phi(e) V'(\nabla\phi(e))\rangle_{\mu_\Lambda^\eta}.
  \end{equation*}
  The right-hand side of~\eqref{eq:gradient upper bound initial equality} is developed as
  \begin{equation*}
    \sum_{x\in\Lambda} \eta(x)\langle \phi(x) \rangle_{\mu_\Lambda^\eta} = -\sum_{x\in\Lambda^+} \Delta u_{\Lambda,\eta}(x)\langle \phi(x) \rangle_{\mu_\Lambda^\eta} = \sum_{e\in E(\Lambda^+)} \nabla u_{\Lambda, \eta}(e) \langle \nabla \phi(e) \rangle_{\mu_\Lambda^\eta}
  \end{equation*}
  by using the equation~\eqref{def.ueta} and the equality~\eqref{eq:discrete integration by parts}. Substituting in~\eqref{eq:gradient upper bound initial equality} shows that
  \begin{equation}
    \sum_{e\in E(\Lambda^+)} \langle\nabla \phi(e) V'(\nabla\phi(e))\rangle_{\mu_\Lambda^\eta} = |\Lambda| + \lambda\sum_{e\in E(\Lambda^+)} \nabla u_{\Lambda, \eta}(e) \langle \nabla \phi(e) \rangle_{\mu_\Lambda^\eta}.
  \end{equation}
  The left-hand side of the equality is developed using the lower bound in~\eqref{eq:derivative of V estimate} while the right-hand side is developed using the Cauchy-Schwarz inequality, yielding
  \begin{equation}
    c_-\sum_{e\in E(\Lambda^+)} \langle(\nabla \phi(e))^2 \rangle_{\mu_\Lambda^\eta}\le |\Lambda| + \lambda\big\| \nabla u_{\Lambda,\eta} \big\|_{L^2\left( \Lambda^+ \right)} \big\| \langle\nabla\phi\rangle_{\mu_\Lambda^\eta} \big\|_{L^2\left( \Lambda^+ \right)}.
  \end{equation}
  Recalling the definition of $\left\| \nabla \phi \right\|_{L^2  \left( \Lambda^+, \mu_\Lambda^\eta\right)}$ from~\eqref{eq:gradient norm in two spaces} and using that $a+b\le 2\max(a,b)$ for $a,b>0$ and $\langle\nabla\phi(e)\rangle_{\mu_\Lambda^\eta}^2\le \langle\nabla\phi(e)^2\rangle_{\mu_\Lambda^\eta}$ (which follows from Jensen's inequality), we conclude that
  \begin{equation}
    c_-\left\| \nabla \phi \right\|_{L^2  \left( \Lambda^+, \mu_\Lambda^\eta\right)}^2\le 2\max\left\{|\Lambda|, \lambda\big\| \nabla u_{\Lambda,\eta} \big\|_{L^2\left( \Lambda^+ \right)} \left\| \nabla \phi \right\|_{L^2  \left( \Lambda^+, \mu_\Lambda^\eta\right)}\right\}.
  \end{equation}
  Rearranging we get
  \begin{equation}
  \left\| \nabla \phi \right\|_{L^2  \left( \Lambda^+, \mu_\Lambda^\eta\right)}\le \max\left\{\sqrt{\frac{2|\Lambda|}{c_-}}, \frac{2\lambda}{c_-}\left\| \nabla u_{\Lambda,\eta} \right\|_{L^2 \left( \Lambda^+\right)}\right\}
  \end{equation}
  which implies the upper bound of the theorem.
\end{proof}

\begin{remark}\label{rem:gradient fluctuations relaxed assumptions}
An inspection of the proof of Theorem~\ref{eq:quenched gradient fluctuations} shows that the assumption that $V$ satisfies~\eqref{eq:V ellipticity} may be relaxed. Specifically, the proof of the lower bound in Theorem~\ref{eq:quenched gradient fluctuations} requires only that $V$ is a sufficiently smooth even function, having sufficient growth at infinity for the identity~\eqref{eq:first quenched identity} to hold, and that $|V'(x)|\le c_+ |x|$ for all $x\in\R$. Similarly, the proof of the upper bound in Theorem~\ref{eq:quenched gradient fluctuations} requires only that $V$ is a sufficiently smooth even function, having sufficient growth at infinity for the identity~\eqref{eq:second quenched identity} to hold, and that $V'(x)\ge c_- x$ for all $x \geq 0$. Note that $V$ need not be convex for these relaxed assumptions to hold.

Consequently, the lower and upper bounds in Theorem~\ref{prop3.100910} hold under the same relaxed assumptions.

Section~\ref{sec:general potentials discussion} contains additional discussion on general potentials.
\end{remark}

\section{Height fluctuations in the real-valued case}
\label{section4}

In this section, we study the fluctuations of the height of real-valued random-field random surfaces. We prove that the surface delocalizes in dimensions $1\le d\le 4$ and that it localizes in dimensions $d\ge 5$, proving Theorem~\ref{prop3.1009100bis}. Quantitative upper and lower bounds for the variance of the thermal expectation of the height are obtained. The section is organized as follows. In Subsection~\ref{sec.quant.thm.arbit.ext.field}, we establish a quantitative theorem which estimates the difference of the thermal expectations of the height of the random surface at the center of a box with two different external fields. Subsection~\ref{section.section4.2} is devoted to the proof of the upper bounds~\eqref{eq:TV0904041},~\eqref{eq:TV090404bis1} and~\eqref{eq:TV090404ter1} by combining the results of Subsection~\ref{sec.quant.thm.arbit.ext.field} with the Efron--Stein inequality following the outline presented in Section~\ref{sec.strat}. In Section~\ref{subsection4.3}, we prove the lower bounds~\eqref{eq:TV0904041low},~\eqref{eq:TV090404bis1low} and~\eqref{eq:TV090404ter1low}, and thus complete the proof of Theorem~\ref{prop3.1009100bis}.

\subsection{A quantitative estimate for the thermal expectation of the height of the random surface.} \label{sec.quant.thm.arbit.ext.field}

This section is devoted to the proof of an upper bound and a lower bound on the difference of the thermal expectations of the height at the center of a box with two different (and arbitrary) external fields. The argument relies on a coupling argument for the Langevin dynamics associated with the random-field $\nabla \phi$-model and on the Nash--Aronson estimate for parabolic equation with uniformly convex and time-dependent environment (Proposition~\ref{prop.NashAronson}).

\begin{theorem} \label{Thm6}
Let $L\geq 2$, $\lambda > 0$ and $\eta , {\bar \eta} : \Lambda_L \to \R$ be two external fields. Then there exists a constant $C \in (0 , \infty)$ depending only on $d$ and the ratios $c_+/c_-$ and $\lambda/c_-$ such that, for any $y \in \Lambda_L$,
  \begin{equation} \label{eq:15131709}
      \left| \left\langle \phi(y) \right\rangle_{\mu_{\Lambda_L}^\eta} - \left\langle \phi(y)  \right\rangle_{\mu_{\Lambda_L}^{ {\bar \eta}}} \right| \leq
   \left\{ \begin{aligned}
    & C L  \sum_{x \in \Lambda_L} \left| \eta(x) - {\bar \eta}(x) \right| &d = 1, \\
    &C \sum_{x \in \Lambda_L} \ln \left( \frac{C L}{1 \vee |x - y|} \right) \left| \eta(x) - {\bar \eta}(x) \right| &d = 2, \\
    & C \sum_{x \in \Lambda_L} \frac{\left| \eta(x) - {\bar \eta}(x) \right|}{1 \vee |x - y|^{d- 2}} &d \geq 3.
    \end{aligned} \right.
  \end{equation}
Moreover, there exist two constants $c , \tilde c  \in (0 , \infty)$ depending on $d$, $c_+/c_-$ and $\lambda/c_-$ such that, for any $y \in \Lambda_{L / 2}$, if the fields $\eta$ and $\bar \eta$ are such that there exists a vertex $x \in y + \Lambda_{\tilde c L}$ such that $\eta = {\bar \eta}$ on $\Lambda_L \setminus \{ x \}$ and $\eta(x) \neq \bar \eta(x)$, then
   \begin{equation} \label{eq:1514170999}
      \frac{\left\langle \phi(y) \right\rangle_{\mu_{\Lambda_L}^\eta} - \left\langle \phi(y)  \right\rangle_{\mu_{\Lambda_L}^{ {\bar \eta}}}}{\eta(x) - {\bar \eta}(x)} \geq
   \left\{ \begin{aligned}
    &c L &d = 1, \\
    &c  \ln \left( \frac{L}{1 \vee |x-y|} \right)  &d = 2, \\
    & \frac{c}{1 \vee |x-y|^{d- 2}} &d \geq 3.
    \end{aligned} \right.
  \end{equation}
\end{theorem}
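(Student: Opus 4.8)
The plan is to combine the coupling of Langevin dynamics originally due to Funaki and Spohn with the parabolic Nash--Aronson estimate of Proposition~\ref{prop.NashAronson}; the heat-kernel time-integrals that appear will be the parabolic analogues of the elliptic Green's function bounds~\eqref{eq:10281908}--\eqref{eq:10291908} already established in Proposition~\ref{prop.propueta}. Concretely, I would introduce the two Langevin dynamics $\phi_t$ and $\psi_t$ of~\eqref{eq:09262707bis} associated with the external fields $\eta$ and $\bar\eta$, driven by the \emph{same} family of Brownian motions and both started from the zero configuration (they exist and are unique since $V'$ is Lipschitz by~\eqref{eq:V ellipticity}). In the difference $w_t := \phi_t - \psi_t$ the Brownian terms cancel, and, using the fundamental theorem of calculus to write $V'(\nabla\phi_t(e)) - V'(\nabla\psi_t(e)) = \a_t(e)\,\nabla w_t(e)$ with
\[
\a_t(e) := \int_0^1 V''\!\big(s\,\nabla\phi_t(e) + (1-s)\nabla\psi_t(e)\big)\,\di s \in [c_-,c_+] ,
\]
the evolution of $w$ becomes the \emph{linear} parabolic equation $\partial_t w_t - \nabla\cdot\a\nabla w_t = \lambda(\eta - \bar\eta)$ in $\Lambda_L$, with zero initial and boundary data, in the continuous (since the paths $\phi_t,\psi_t$ are continuous), uniformly elliptic, time-dependent environment $\a$. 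Duhamel's principle together with the definition~\eqref{eq:16041308} of the heat kernel $P_\a$ then gives the representation
\[
w_t(0) = \lambda\int_0^t\sum_{z\in\Lambda_L} P_\a(t,0;s,z)\,\big(\eta(z) - \bar\eta(z)\big)\,\di s .
\]

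Next, I would take the expectation $\langle\cdot\rangle$ over the Brownian motions and let $t\to\infty$. By the ergodicity of the Langevin dynamics~\eqref{eq:ergLangevin}, $\langle\phi_t(0)\rangle\to\langle\phi(0)\rangle_{\mu_{\Lambda_L}^\eta}$ and $\langle\psi_t(0)\rangle\to\langle\phi(0)\rangle_{\mu_{\Lambda_L}^{\bar\eta}}$, hence $\langle w_t(0)\rangle\to\langle\phi(0)\rangle_{\mu_{\Lambda_L}^\eta}-\langle\phi(0)\rangle_{\mu_{\Lambda_L}^{\bar\eta}}$. Everything thus reduces to controlling, uniformly in $t$ and in the (random) environment $\a$, the nonnegative time-integrals $I_z^{(t)} := \int_0^t P_\a(t,0;s,z)\,\di s$, and it is precisely here that Proposition~\ref{prop.NashAronson} enters, since its bounds are uniform over all admissible environments $\a$.

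For the upper bound~\eqref{eq:15131709} I would use $|w_t(0)|\le\lambda\sum_z|\eta(z)-\bar\eta(z)|\,I_z^{(t)}$ and the Nash--Aronson upper bound~\eqref{Nash.supprop3.3}: after the substitutions $r = t-s$ and $\tau = c_-r$,
\[
I_z^{(t)} \le \frac{C_0}{c_-}\int_0^\infty\frac{1}{1\vee\tau^{d/2}}\exp\!\Big(-\frac{c_0|z|}{1\vee\tau^{1/2}}\Big)\exp\!\Big(-\frac{c_0\tau}{L^2}\Big)\,\di\tau ,
\]
and the remaining integral is of order $L$ for $d=1$, of order $\ln(L/(1\vee|z|))$ for $d=2$, and of order $(1\vee|z|)^{2-d}$ for $d\ge 3$ --- the same computation that gives the Green's function bound~\eqref{eq:10281908}; taking $\langle\cdot\rangle$ and $\limsup_{t\to\infty}$ then yields~\eqref{eq:15131709}. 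For the lower bound~\eqref{eq:1514170999}, since $\eta-\bar\eta$ is supported on the single vertex $x$ the Duhamel representation becomes the exact identity $w_t(0) = \lambda\,(\eta(x)-\bar\eta(x))\,I_x^{(t)}$, so after dividing by $\eta(x)-\bar\eta(x)$ (and using $P_\a\ge0$) it suffices to bound $\langle I_x^{(t)}\rangle$ from below. Choosing $\tilde c$ small enough that $x\in\Lambda_{\tilde c L}$ forces both $x\in\Lambda_{L/2}$ and $|x|\le c_1 L$, with $c_1$ the constant of Proposition~\ref{prop.NashAronson}, the Nash--Aronson lower bound~\eqref{Nash.infprop3.3} applies on the time window $|x|^2/c_-\le t-s\le c_1^2 L^2/c_-$, so that for $t\ge c_1^2 L^2/c_-$
\[
I_x^{(t)} \ge \frac{c_0}{c_-}\int_{|x|^2}^{c_1^2 L^2}\frac{\di\tau}{1\vee\tau^{d/2}} ,
\]
which is at least of order $L$, $\ln(L/(1\vee|x|))$, $(1\vee|x|)^{2-d}$ in dimensions $1$, $2$, $\ge3$ respectively (shrinking $\tilde c$ further if needed so that, when $d=2$, the window is non-degenerate and $\ln(c_1 L/(1\vee|x|))$ is comparable to $\ln(L/(1\vee|x|))$) --- the computation behind~\eqref{eq:10291908}. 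Letting $t\to\infty$ gives~\eqref{eq:1514170999}.

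I expect the conceptual core to be the first step: the observation that coupling the two flows through a common noise produces a \emph{linear} equation for the difference, whose random coefficients $\a$ nonetheless inherit uniform ellipticity from~\eqref{eq:V ellipticity}, so that Proposition~\ref{prop.NashAronson} can be applied pathwise. After that reduction the heat-kernel time-integrals are routine and essentially recapitulate the elliptic estimates of Proposition~\ref{prop.propueta}. The remaining points requiring care are minor: checking the continuity of $t\mapsto\a_t$ needed to apply Proposition~\ref{prop.NashAronson}, and the choice of the absolute constant $\tilde c$ in the logarithmic dimension $d=2$.
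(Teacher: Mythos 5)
Your proposal is correct and follows essentially the same route as the paper's proof: coupling the two Langevin dynamics through a common Brownian noise, linearizing the difference into a parabolic equation with a uniformly elliptic time-dependent environment, representing $w_t(0)$ via Duhamel's formula, and estimating the heat-kernel time-integrals with Proposition~\ref{prop.NashAronson} exactly as you describe (including the restriction $x\in\Lambda_{\tilde c L}$ with $\tilde c = c_1/2$ for the lower bound). No gaps.
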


\begin{remark}
We highlight that the left-hand side of~\eqref{eq:1514170999} does not involve an absolute value. This way of stating the result is strictly stronger than with absolute values as it implies that the sign of the difference $\left\langle \phi(y) \right\rangle_{\mu_{\Lambda_L}^\eta} - \left\langle \phi(y)  \right\rangle_{\mu_{\Lambda_L}^{ {\bar \eta}}}$ is the same as the one of $\eta(x) - \bar \eta(x)$, i.e., that the value of the thermal expectation $\left\langle \phi(y) \right\rangle_{\mu_{\Lambda_L}^\eta}$ increases with the value of $\eta(x)$.

It will be clear from the proof that this monotonicity is a direct consequence of the non-negativity of the heat kernel $P_\a$ (defined under the environment~\eqref{eq:15141308} below) and is an important ingredient in the proof of the lower bounds of Theorem~\ref{prop3.1009100bis} in Section~\ref{subsection4.3}.
\end{remark}

The proof of this theorem relies on coupling two different Langevin dynamics using the same Brownian motion. This argument was originally used by Funaki and Spohn in~\cite{FS}.

\begin{proof}
Let us fix an integer $L\geq 2$, two external fields $\eta , {\bar \eta} : \Lambda_L \to \R$, consider a collection of independent Brownian motions $\left\{ B_t(x) \, : \, x \in \Lambda_L \right\}$, and run two Langevin dynamics (with the same Brownian motions). Explicitly, we set
\begin{equation} \label{eq:11060408}
    \left\{ \begin{aligned}
    \di \phi_t(y) &= \sum_{e \ni y} V' \left( \nabla \phi_t ( e) \right) \di t + \lambda \eta(y) \di t + \sqrt{2} \di B_t(y) &&(t , y) \in  (0 , \infty) \times \Lambda_L, \\
    \phi_0(y)  &= 0 &&y \in \Lambda_L,\\
    \phi_t( y) &= 0 &&(t , y) \in (0 , \infty) \times \partial \Lambda_L,
    \end{aligned} \right.
\end{equation}
and
\begin{equation} \label{eq:11070408bis}
    \left\{ \begin{aligned}
    \di \bar \phi_t(y) &=  \sum_{e \ni y} V' \left( \nabla \bar \phi_t (e) \right) \di t + \lambda {\bar \eta}(y) \di t + \sqrt{2} \di B_t(y) &&(t , y) \in  (0 , \infty) \times \Lambda_L, \\
    \bar \phi_0(y)  &= 0 &&y \in \Lambda_L,\\
    \bar \phi_t ( y) &= 0 &&(t , y) \in (0 , \infty) \times \partial \Lambda_L.
    \end{aligned} \right.
\end{equation}
By ergodicity of the Langevin dynamics stated in Subsection~\ref{sectionLangevin}, the two dynamics convergence to the appropriate thermal averages
\begin{equation} \label{eq:13420408}
    \left\langle \phi_t(0) \right\rangle \underset{t \to \infty}{\longrightarrow} \left\langle \phi(0) \right\rangle_{\mu_{\Lambda_L}^\eta}  \hspace{5mm} \mbox{and} \hspace{5mm} \left\langle \bar \phi_t(0) \right\rangle \underset{t \to \infty}{\longrightarrow} \left\langle \phi(0) \right\rangle_{\mu_{\Lambda_L}^{{\bar \eta}}}.
\end{equation}
Taking the difference between the two stochastic differential equations~\eqref{eq:11060408} and~\eqref{eq:11070408bis}, and using that the two driving Brownian motions are the same, we obtain
\begin{equation} \label{eq:11500408}
    \left\{ \begin{aligned}
    \partial_t \left( \phi_t(y) - \bar \phi_t(y) \right)   &= \sum_{e \ni y} \left( V' \left( \nabla \phi_t (e) \right)  - V' \left( \nabla \bar \phi_t (e) \right) \right) - \lambda \left(  \eta(y) - \bar \eta (y)\right) &&(t , y) \in  (0 , \infty) \times \Lambda_L, \\
    \phi_0(y) - \bar \phi_0(y)  &= 0 &&y \in \Lambda_L,\\
    \phi_t(y) - \bar \phi_t(y) &= 0 &&(t , y) \in (0 , \infty) \times \partial \Lambda_L.
    \end{aligned} \right.
\end{equation}
The strategy is then to rewrite the equation~\eqref{eq:11500408} as a discrete linear parabolic equation. To this end, we note that, since the potential $V$ is assumed to be twice continuously differentiable, the following identity is valid, for any edge $e \in E \left( \Lambda_L\right)$ and any time $t \geq 0$,
\begin{equation*}
    V' \left( \nabla \phi_t (e) \right)  - V' \left( \nabla \bar \phi_t (e) \right) = \left(\int_0^1 V'' \left( s \nabla \phi_t(e) + (1 - s) \nabla  \bar \phi_t(e) \right) \, \di s \right) \times \left( \nabla \phi_t (e) - \nabla \bar \phi_t (e) \right).
\end{equation*}
Thus, if we let $\a : [0 , \infty) \times E\left( \Lambda_L^+ \right) \to [0, \infty)$ be the time-dependent environment defined by the formula
\begin{equation} \label{eq:15141308}
    \a (t  , e ) := \int_0^1 V'' \left( s \nabla \phi_t(e) + (1 - s) \nabla {\bar \phi_t}(e) \right) \, \di s,
\end{equation}
then we have the identity, for any point $y \in \Lambda_L$,
\begin{equation*}
    \sum_{e \ni y} \left( V' \left( \nabla \phi_t (e) \right)  - V' \left( \nabla {\bar \phi_t} (e) \right) \right) = \nabla \cdot \a \nabla \left( \phi_t - {\bar \phi_t} \right)(y),
\end{equation*}
where we used the notation~\eqref{eq:11330408} for discrete elliptic operators. The inequalities $c_- \leq V'' \leq c_+$ imply that the environment $\a$ is uniformly elliptic and satisfies $c_- \leq \a(t,y)\leq c_+$ for any pair $(t,e) \in [0, \infty) \times E \left( \Lambda_L \right)$.
Denoting by $w_t (z) := \phi_t(z) - {\bar \phi_t}(z)$, we obtain that the map $w$ solves the discrete parabolic equation
\begin{equation} \label{eq:11100508}
    \left\{ \begin{aligned}
    \partial_t w_t ( z) &= \nabla \cdot \a \nabla w_t(z) + \left( \eta(z) - \bar  \eta(z) \right) &&(t , z) \in  (0 , \infty) \times \Lambda_L, \\
    w_t(z) &= 0 &&(t , z) \in (0 , \infty) \times \partial \Lambda_L, \\
    w_0(z) &= 0 &&z \in  \Lambda_L.
    \end{aligned} \right.
\end{equation}
Using the notations introduced in Section~\ref{sectionheatkernel}, we denote by $P_\a$ the heat kernel associated with the environment $\a$. We next claim that one has the identity
\begin{equation} \label{eq:11180508}
    w_t (y) = \lambda \sum_{x \in \Lambda_L} \left( \eta(x) - \bar \eta(x) \right) \int_{0}^t P_\a \left( t , y; s , x \right) \, ds.
\end{equation}
The identity~\eqref{eq:11180508} is a consequence of Duhamel's formula (see~\cite[Theorem 2 page 50]{evans2010partial} where the formula is proved for the continuous heat equation in $\Rd$) applied to the equation~\eqref{eq:11100508}. In the present setting, it can be verified directly by computing the time derivative of the function $(t , y) \mapsto \lambda \sum_{x \in \Lambda_L} \left( \eta(x) - \bar \eta(x) \right) \int_{0}^t P_\a \left( t , y; s , x \right) \, ds$ and observing that this map is a solution of the equation~\eqref{eq:11100508}.

By the Nash--Aronson estimate on the heat-kernel $P_\a$ stated in Proposition~\ref{prop.NashAronson}, we obtain the inequality, for any time $t \geq 0$,
\begin{align*}
\int_{0}^t P_\a \left( t , y; s , x \right) \, ds & \leq \int_0^{t}  \frac{C_0}{1 \vee \left( c_-s \right)^{\frac d2}} \exp \left( - \frac{c_0|x - y|}{1 \vee \left( c_- s \right)^{\frac 12}} \right) \exp \left( - \frac{c_0c_- s}{L^2} \right) \, ds \\
& \leq  \left\{ \begin{aligned}
    &\frac{CL}{c_-}  &d = 1, \\
    &\frac{C}{c_-} \ln \left( \frac{C L}{1 \vee |x-y|} \right) &d = 2, \\
    &\frac{C}{c_-}\frac{1}{1 \vee |x -y|^{d- 2}} &d \geq 3,
    \end{aligned} \right.
\end{align*}
where the constant $C$ depends on $d$ and the ratio of ellipticity $c_+ / c_-$.
Taking the expectation with respect to the collection of Brownian motions $\left\{  B_t(y) \, : \, y \in \Lambda_L \right\}$ and using the identities~\eqref{eq:13420408}, we obtain
\begin{align} \label{eq:13500408}
    \left| \left\langle \phi(y) \right\rangle_{\mu_{\Lambda_L}^\eta} -  \left\langle \phi(y) \right\rangle_{\mu_{\Lambda_L}^{{\bar \eta}}} \right| & \leq \limsup_{t \to \infty} \left| \left\langle \phi_t(y) \right\rangle - \left\langle {\bar \phi_t}(y) \right\rangle \right| \\
    & \leq \limsup_{t \to \infty} \left| \left\langle  w(t , y) \right\rangle \right| \notag \\
    & \notag \leq
    \left\{ \begin{aligned}
    &C L \sum_{x \in \Lambda_L}\left| \eta(x) - {\bar \eta}(x) \right| &d = 1, \\
    &C \sum_{x \in \Lambda_L} \ln \left( \frac{C L}{1 \vee |x-y|} \right) \left| \eta(x) - {\bar \eta}(x) \right| &d = 2, \\
    &C  \sum_{x \in \Lambda_L} \frac{\left| \eta(x) - {\bar \eta}(x) \right|}{1 \vee |x-y|^{d- 2}} &d \geq 3,
    \end{aligned} \right.
\end{align}
where the constant $C$ depends on $d$ and the ratios $c_+ / c_-$ and $\lambda / c_-$.
This is the inequality~\eqref{eq:15131709}. It thus suffices to prove~\eqref{eq:1514170999}. Since the two external fields $\eta$ and ${\bar \eta}$ are equal everywhere except at the vertex $x$, the identity~\eqref{eq:11180508} becomes
\begin{equation*}
    w_t (y) := \lambda \left( \eta(x) - \bar \eta(x) \right) \int_{0}^t P_\a \left( t , y; s , x \right) \, ds.
\end{equation*}
Let us denote by $c_1$ the constant which appears in the statement of Proposition~\ref{prop.NashAronson} and set $\tilde c := c_1/2$. Assuming that the vertex $x$ belongs to the box $y+ \Lambda_{\tilde c L}$, we may apply the lower bound for the map $P_\a$ stated in~\eqref{Nash.infprop3.3} and use that the heat kernel is non-negative to obtain, for any time $t \geq c_1^2 L^2/c_-$,
    \begin{align*}
       \int_{0}^t P_\a \left( t , y; s , x \right) \, ds & \geq \int_{t  - c_1^2 L^2/c_-}^{t - |x-y|^2/c_-} P_\a \left( t , y; s , x \right) \, ds \\
        & \geq \int_{t  - c_1^2 L^2/c_-}^{t - |x-y|^2/c_-} \frac{c}{\left(c_-(t-s)\right)^{\frac d2} \vee 1} \, ds \\
        & \geq \frac{1}{c_-} \int_{|x-y|^2}^{c_1^2 L^2} \frac{c}{s^{\frac d2} \vee 1} \, ds.
    \end{align*}
The term in the right-hand side can be explicitly computed, and we obtain the lower bound, for any time $t \geq c_1^2 L^2/c_-$,
\begin{equation} \label{eq:13470508}
     \int_{0}^t P_\a \left( t , y; s , x \right) \, ds \geq
      \left\{ \begin{aligned}
    &\frac{c}{c_-} L &d = 1, \\
    &\frac{c}{c_-} \ln \left( \frac{L}{1 \vee |x-y|} \right) &d = 2, \\
    &\frac{c}{c_-}\frac{1}{1 \vee |x-y|^{d- 2}} &d \geq 3.
    \end{aligned} \right.
\end{equation}
A similar computation as in~\eqref{eq:13500408} completes the proof of the lower bound~\eqref{eq:1514170999}.
\end{proof}

\subsection{Real-valued random-field random surface models: Upper bounds} \label{section.section4.2}
In this section, we establish the upper bounds of the inequalities~\eqref{eq:TV0904041},~\eqref{eq:TV090404bis1} and~\eqref{eq:TV090404ter1} by combining the result of Theorem~\ref{Thm6} with the Efron--Stein concentration inequality stated in Proposition~\ref{prop.Efronstein}.

\begin{proof}[Proof of Theorem~\ref{prop3.1009100bis}: Upper bounds~\eqref{eq:TV0904041},~\eqref{eq:TV090404bis1} and~\eqref{eq:TV090404ter1}]
Let us fix an integer $L \ge 2$ and a vertex $y \in \Lambda_L$. The strategy is to apply the Efron--Stein inequality with the map $f : \eta \mapsto \left\langle \phi(y)\right\rangle_{\mu_{\Lambda_L}^{\eta}}$. Let us note that, by the inequality~\eqref{eq:15131709} applied with $\bar \eta = 0$ and the observation $\left\langle \phi(y)  \right\rangle_{\mu_{\Lambda_L}^{0}} = 0$, the $L^2$-norm of the mapping $\eta \mapsto \left\langle \phi(y)\right\rangle_{\mu_{\Lambda_L}^{\eta}}$ is finite. The assumption of Proposition~\ref{prop.Efronstein} is thus satisfied.

\smallskip

We let $\eta$ and $\tilde \eta$ be two independent copies of the random field. For each point $x \in \Lambda_L$, we denote by $\eta^x$ the resampled random field defined by the formula $\eta^x(z) = \eta(z)$ if $z \neq x$ and $\eta^x(x) = \tilde \eta (x)$. By Proposition~\ref{prop.Efronstein}, one has the estimate, for any $y \in \Lambda_L$,
\begin{equation} \label{eq:10380408}
    \var \left[ \left\langle \phi(y)\right\rangle_{\mu_{\Lambda_L}^{\eta}}  \right] \leq \frac 12 \sum_{x \in \Lambda_L} \E \left[ \left( \left\langle \phi(y)\right\rangle_{\mu_{\Lambda_L}^{\eta}} - \left\langle \phi(y)\right\rangle_{\mu_{\Lambda_L}^{\eta^x}}  \right)^2 \right].
\end{equation}
Applying the quenched upper bound~\eqref{eq:15131709} obtained in Theorem~\ref{Thm6} with the external fields $\eta := \eta$ and ${\bar \eta} := \eta^x$, we obtain the upper bound
\begin{equation*}
    \left( \left\langle \phi(y)\right\rangle_{\mu_{\Lambda_L}^{\eta}} - \left\langle \phi(y)\right\rangle_{\mu_{\Lambda_L}^{\eta^x}}  \right)^2 \leq \left\{ \begin{aligned}
    &C L^2 \left( \eta(x) - \tilde \eta(x) \right)^2 &d = 1, \\
    &C \left(\ln  \frac{L}{1 \vee |x-y|} \right)^2  \left( \eta(x) - \tilde \eta(x) \right)^2  &d = 2, \\
    &C \frac{  \left( \eta(x) - \tilde \eta(x) \right)^2}{1 \vee |x - y|^{2d- 4}} &d \geq 3.
    \end{aligned} \right.
\end{equation*}
Taking the expectation in the previous display (with respect to the random field $\eta$) and using that the random variables $\eta(x)$ and $\tilde \eta(x)$ have expectation $0$ and variance $1$ shows the inequality, for any $x \in \Lambda_L$,
\begin{equation} \label{eq:10302009}
    \E \left[ \left( \left\langle \phi(y)\right\rangle_{\mu_{\Lambda_L}^{\eta}} - \left\langle \phi(y)\right\rangle_{\mu_{\Lambda_L}^{\eta^x}}  \right)^2 \right] \leq \left\{ \begin{aligned}
    &C L^2 &d = 1, \\
    &C  \left(\ln \frac{CL}{1 \vee |x-y|} \right)^2  &d = 2, \\
    &C\frac{1}{ 1 \vee |x-y|^{2d- 4}} &d \geq 3.
    \end{aligned} \right.
\end{equation}
Summing the inequality~\eqref{eq:10302009} over all the points $x \in \Lambda_L$ and using the estimate~\eqref{eq:10380408} completes the proof of the upper bounds of~\eqref{eq:TV0904041},~\eqref{eq:TV090404bis1} and~\eqref{eq:TV090404ter1}.
\end{proof}

\subsection{Real-valued random-field random surface models: Lower bounds} \label{subsection4.3}
The objective of this section is to prove the lower bounds for the variance (in the random field) of the thermal average of the heights.

\begin{proof}[Proof of Theorem~\ref{prop3.1009100bis}: Lower bounds~\eqref{eq:TV0904041low},~\eqref{eq:TV090404bis1low} and~\eqref{eq:TV090404ter1low}]
We fix an integer $L \ge 2$, a vertex $y \in \Lambda_{L/2}$ and let $\tilde c$ be the constant which appears in the statement of Theorem~\ref{Thm6}. We may assume without loss of generality that $\tilde c \leq 1/2$. We let $x_1 , \ldots x_{(2L+1)^d}$ be an enumeration of the points of the box $\Lambda_{L}$ such that the collection $x_0 , \ldots, x_{\left(2 \lfloor \tilde c L \rfloor +1\right)^d}$ is an enumeration of the vertices of the box $y + \Lambda_{ \tilde c L}$. We let $\eta$ and $\tilde \eta$ be two independent copies of the random field. For each integer $k \in \{ 1 , \ldots, (2L+1)^d \}$, we introduce the notations $\eta_k := \eta(x_k)$, $\tilde \eta_k := \tilde \eta(x_k)$, denote by $\nu_k$ the law of $\eta_k$, and let $\mathcal{F}_k$ be the sigma-algebra generated by the random variables $\eta_1 , \ldots, \eta_k$. We additionally denote by $\eta^k$ the random field satisfying $\eta^k(x) = \eta(x)$ if $x \neq x_k$, and $\eta^k(x_k) = \tilde \eta(x_k)$.

For each integer $k \in \{1 , \ldots, (2L+1)^d \}$, we introduce two random variables. We let $X_k$ be the conditional expectation of $\left\langle \phi(y)\right\rangle_{\mu_{\Lambda_L}^{\eta}}$ given the $\eta_1,\dots \eta_k$ --- that is,
\[
X_k = \E \left[ \left\langle \phi(y)\right\rangle_{\mu_{\Lambda_L}^{\eta}}  | \mathcal{F}_{k} \right].
\]
The standard Pythagorean identity for martingales tells us that
\begin{equation}\label{eq:26051624}
\var \left[ \left\langle \phi(y)\right\rangle_{\mu_{\Lambda_L}^{\eta}}  \right] = \sum_{k = 1}^{(2L+1)^d}  \E \left[ \left( X_{k} - X_{k-1} \right)^2 \right].
\end{equation}
Next, we define $\tilde{X}_k$ as the conditional expectation of $\left\langle \phi(y)\right\rangle_{\mu_{\Lambda_L}^{\eta}}$ given $\eta_1,\dots \eta_{k-1}$ \emph{and} where the external field at $x_k$ is set to $\tilde{\eta}_k$. Since $\tilde{\eta}_k$ is independent of $\eta_k$, we find that
\[
X_{k-1} = \E\left[\tilde{X}_k | \mathcal{F}_k\right],
\]
and therefore 
\begin{align} \label{eq:16542009}
    \E \left[ \left( X_k - \tilde X_k \right)^2 \right] & = \E \left[X_k^2 + \tilde X_k^2 \right] - 2 \E \left[ X_k \tilde X_k \right] \\
    & = 2 \E \left[X_k^2 \right] - 2\E \left[ X_k  X_{k-1} \right] \notag \\
    & = 2 \E \left[\left( X_k - X_{k-1} \right)^2 \right]. \notag
\end{align}
Combining this with \eqref{eq:26051624} allows us to conclude that 
\begin{equation} \label{eq:122121}
    \var \left[ \left\langle \phi(y)\right\rangle_{\mu_{\Lambda_L}^{\eta}}  \right]  = \frac 12 \sum_{k = 1}^{(2L+1)^d} \E \left[ \left( X_{k} - \tilde X_k \right)^2 \right].
\end{equation}
We next prove the lower bound, for any integer $k \in \left\{ 1 ,  \ldots, \left(2 \lfloor \tilde c L \rfloor + 1 \right)^d \right\}$,
\begin{equation} \label{eq:10032107}
    \E \left[ \left( X_k - \tilde X_k \right)^2 \right] \geq \left\{ \begin{aligned}
     & cL^2 &d = 1, \\
     & c\left(\ln \frac{L}{1 \vee |x_k - y|} \right)^2 &d = 2, \\
     &\frac{ c}{ 1 \vee |x_k - y|^{2d-4}} &d \geq 3.
    \end{aligned} \right.
\end{equation}
Using that all the terms in the right side of~\eqref{eq:122121} are non-negative, that the collection of points $x_1 , \ldots x_{2^d \lfloor \tilde c L \rfloor^d \vee 1}$ is an enumeration of the points of the box $y +\Lambda_{\tilde c L}$ and the estimate~\eqref{eq:10032107} completes the proof of the lower bounds~\eqref{eq:TV0904041},~\eqref{eq:TV090404bis1} and~\eqref{eq:TV090404ter1}.

There remains to prove~\eqref{eq:10032107}. Let us note that the two fields $\eta$ and $\eta^k$ are equal on the set $\Lambda_L \setminus \{ x_k \}$. We can thus apply Theorem~\ref{Thm6} to obtain, on the event $\left\{ \eta_k \neq \tilde \eta_k \right\}$,
\begin{equation} \label{eq:11020508}
      \frac{\left\langle \phi(y)\right\rangle_{\mu_{\Lambda_L}^{\eta}} - \left\langle \phi(y)\right\rangle_{\mu_{\Lambda_L}^{\eta^{k}}}}{\eta_k - \tilde \eta_k}  \geq \left\{ \begin{aligned}
       &cL &d = 1, \\
       &c\ln \left( \frac{L}{1 \vee |x_k-y|} \right) &d = 2, \\
       &\frac{
       c}{\left( 1 \vee |x_k-y|^{d-2}\right)} &d \geq 3.
      \end{aligned}
      \right.
\end{equation}
Integrating the inequality~\eqref{eq:11020508} over the field variables $\eta_{ k+1}, \ldots, \eta_{L^d}$, we obtain, on the event $\{ \eta_k \neq \tilde \eta_k \}$,
\begin{equation} \label{eq:23020508}
    \frac{X_k - \tilde X_k}{\eta_k - \tilde \eta_k} = \int  \frac{\left\langle \phi(y)\right\rangle_{\mu_{\Lambda_L}^{\eta}} - \left\langle \phi(y)\right\rangle_{\mu_{\Lambda_L}^{\eta^{k}}}}{\eta_k - \tilde \eta_k}  \prod_{j > k} \nu_j (d \eta_j)  \geq \left\{ \begin{aligned}
       &cL  &d = 1, \\
       &c \ln \left( \frac{L}{1 \vee |x_k-y|} \right) &d = 2, \\
       &\frac{c}{1 \vee |x_k-y|^{d-2}} &d \geq 3.
      \end{aligned}
      \right.
\end{equation}
The lower bound~\eqref{eq:23020508} shows, for any realization of the random fields $\eta$ and $\tilde \eta$,
\begin{equation*}
    \left( X_k - \tilde X_k \right)^2 \geq   \left\{ \begin{aligned}
    &cL^2\left( \eta_k - \tilde \eta_k \right)^2 &d =1, \\
    & c \left( \eta_k - \tilde \eta_k \right)^2 \left( \ln  \frac{L}{ 1 \vee |x_k-y|} \right)^2 &d = 2, \\
    & \frac{c\left( \eta_k - \tilde \eta_k \right)^2}{1 \vee |x_k-y|^{2d-4}}  &d \geq 3.
    \end{aligned} \right.
\end{equation*}
Taking the expectation with respect to the external field, using that the two random variables $\eta_k$ and $\tilde \eta_k$ are independent and that their variance is equal to $1$ completes the proof of~\eqref{eq:10032107}.
\end{proof}

\begin{remark}\label{rem:anti-concentration}
Theorem~\ref{prop3.1009100bis} is proved for independent $(\eta(x))$ with the sole assumption that $\var(\eta(x))=1$ for all $x\in\Lambda_L$. We mention that, when further assumptions are imposed on $\eta$, a different approach to the variance lower bounds of the heights of the random surface is available. We present below a simple statement of this kind in the case that each $\eta(x)$ has a standard Gaussian distribution which yields, in addition to the variance lower bound, the following estimate: There exists $\ep > 0$ (depending on $d$, $c_+/c_-$ and $\lambda/c_-$) such that for each $y\in\Lambda_{L/2}$ and each interval $I \subseteq \R$ whose side length is equal to $L^{2-d/2}$ in dimensions $d = 1 , 2 , 3$, $\sqrt{\ln L}$ in dimension $d = 4$ and $1$ in dimensions $d \geq 5$, one has the estimate
\begin{equation} \label{eq:18271902}
    \P \left( \left\langle \phi(y) \right\rangle_{\mu_{\Lambda_L}^\eta} \in \ep I \right) \leq \frac{2}{3}.
\end{equation}
Estimates of this form are often referred to as anti-concentration bounds, and we refer to~\cite[Section 1.1]{MP15} for related arguments which can yield more detailed anti-concentration bounds.

Let $X$ be a standard Gaussian vector in $\R^n$. We first recall a useful bound on the total variation distance,
\begin{equation}\label{eq:total variation estimate}
  d_{\text{TV}}(\eta, \eta + t) \le C\|t\|_2
\end{equation}
for each deterministic vector $t\in\R^n$, where we write $d_{\text{TV}}(X,Y)$ for the total variation distance between the distributions of the random vectors $X$ and $Y$.

Now suppose $f:\R^n\to\R$ satisfies that for each $1\le x\le n$ there exists $\ell(x)>0$ so that
\begin{equation}\label{eq:lower Lipschitz condition}
  f(\alpha + r \delta_x) - f(\alpha) \ge \ell(x) r\quad\text{for all $\alpha\in\R^n$ and $r>0$},
\end{equation}
where $\delta_x\in\R^n$ is the vector satisfying $\delta_x(y) = 0$ for $y\ne x$ and $\delta_x(x)=1$. Such a function necessarily satisfies the following: There exists an $\varepsilon>0$ such that for each interval $I\subset\R$ satisfying $|I|\le \varepsilon\|\ell\|_2$ it holds that
\begin{equation}\label{eq:anti-concentration bound}
  \P(f(\eta)\in I)\le \frac{2}{3}.
\end{equation}
This bound clearly implies that $\var(f(\eta))\ge c\|\ell\|_2^2$. To see~\eqref{eq:anti-concentration bound} set $t:=\varepsilon\frac{\ell}{\|\ell\|_2}$ and observe that, by~\eqref{eq:lower Lipschitz condition}, it is impossible that for some $\alpha$, $f(\alpha)\in I$ and $f(\alpha+t)\in I$ simultaneously. Consequently, by~\eqref{eq:total variation estimate},
\begin{equation}
\begin{split}
  \P(f(\eta)\in I) &\le \frac{1}{2}(\P(f(\eta)\in I) + \P(f(\eta + t)\in I) + d_{\text{TV}}(\eta, \eta + t))\\
  &= \frac{1}{2}\P(\{f(\eta)\in I\}\cup\{f(\eta + t)\in I\}) + \frac{1}{2}d_{\text{TV}}(\eta, \eta + t)\\
  &\le \frac{1}{2} + \frac{1}{2}C\varepsilon
\end{split}
\end{equation}
which implies~\eqref{eq:anti-concentration bound} by choosing $\varepsilon$ sufficiently small.

Since for each $y\in\Lambda_{L/2}$ the function $f(\eta) = \left\langle \phi(y) \right\rangle_{\mu_{\Lambda_L}^\eta}$ satisfies bounds of the form~\eqref{eq:lower Lipschitz condition} by~\eqref{eq:1514170999}, the variance lower bounds of Theorem~\ref{prop3.1009100bis} follow from the above argument.
\end{remark}

\section{Integer-valued random-field Gaussian free field} \label{section66}

In this section, we study the integer-valued random-field Gaussian free field. We emphasize that, unlike our discussion of real-valued random surfaces with random fields, we assume here that the potential is $V(x) = \tfrac12 x^2$. We further assume that $\eta$ is independent and satisfies $\E[\eta(x)]=0$ and $\var[\eta(x)] =1$.

\subsection{Upper bounds on gradient and height fluctuations} \label{subsec3.2}

The upper bounds of Theorem~\ref{prop3.100910disc} and Theorem~\ref{prop3.1009100bisdisc} will follow from the following lemma which compares the gradient of the integer-valued random-field Gaussian free field with the gradient of $u_{\Lambda,\eta}$, the ground state of the real-valued random-field Gaussian free field, at any fixed external field $\eta$.

\begin{lemma} \label{Thm7DGFF}
  Let $d\ge 1$, $\Lambda\subset\Z^d$ be a finite subset of $\Zd$, $\lambda>0$ be the strength of the random field, $\beta > 0$ be an inverse temperature, $\eta:\Lambda\to\R$ be an external field and $u_{\Lambda, \eta}:\Lambda^+\to\R$ be the solution of~\eqref{def.ueta}. Suppose $\phi$ is sampled from the probability distribution $\mu^{\IV , \beta, \eta}_{\Lambda}$. Then there exists a constant $C$ depending only on the dimension $d$ such that
  \begin{equation}
  \label{eq:Thm7DGFF}
    \left\langle \exp \left(  \frac{\beta}{8} \left\| \nabla \phi - \lambda \nabla u_{\Lambda, \eta} \right\|_{L^2  \left( \Lambda^+ \right)}^2 \right)\right\rangle_{\mu_\Lambda^{\IV, \beta, \eta}} \leq \exp \left( C (1+\beta) \left| \Lambda^+ \right| \right).
  \end{equation}
\end{lemma}

\begin{proof}[Proof of Lemma~\ref{Thm7DGFF}]
We first treat the specific case $\eta = 0$ and prove the following slightly stronger version of the inequality~\eqref{eq:Thm7DGFF}: There exists a constant $C$ depending on $d$ such that
\begin{equation} \label{eq:12012309}
    \left\langle \exp \left(  \frac{\beta}{4} \left\| \nabla \phi \right\|_{L^2  \left( \Lambda^+ \right)}^2 \right)\right\rangle_{\mu_\Lambda^{\IV, \beta, 0}} \leq \exp \left( C (1 + \beta) \left| \Lambda^+ \right| \right).
\end{equation}
To prove the estimate~\eqref{eq:12012309}, we first note that, for each $\phi : \Lambda^+ \to \Z$ normalized to be $0$ on the boundary $\partial \Lambda$, the cardinality of the set $\left\{ \psi : \Lambda^+ \to \Z \, : \, \psi \equiv 0 ~ \mbox{on} ~ \partial \Lambda ~ \mbox{and} ~ \lfloor \psi/2 \rfloor = \phi \right\}$ is smaller than $2^{\left| \Lambda \right|}$. From this observation, one deduces the inequality
\begin{align} \label{eq:09031309}
    \sum_{\psi : \Lambda^+ \to \Z} \exp \left( -  \frac{\beta}{2} \left\| \nabla \left\lfloor \frac{\psi}{2} \right\rfloor \right\|_{L^2 \left( \Lambda^+ \right)}^2 \right) &=
    \sum_{\phi : \Lambda^+ \to \Z} \sum_{\psi : \, \left\lfloor \frac{\psi}{2} \right\rfloor = \phi} \exp \left( - \frac{\beta}{2} \left\| \nabla \phi \right\|_{L^2 \left( \Lambda^+ \right)}^2 \right) \\
    & \leq 2^{\left| \Lambda \right|}  \sum_{\phi : \Lambda^+ \to \Z} \exp \left( - \frac{\beta}{2} \left\| \nabla \phi \right\|_{L^2 \left( \Lambda^+ \right)}^2 \right). \notag
\end{align}
Using that the difference between a real number and its floor is smaller than $1$, and the inequality $(a + b)^2 \leq 2a^2 + 2b^2$, we obtain
\begin{equation} \label{eq:09041309}
     \left\| \nabla \left\lfloor \frac{\psi}{2} \right\rfloor \right\|_{L^2 \left( \Lambda^+ \right)}^2 = \left\|  \frac{\nabla \psi}{2} + \left( \nabla \left\lfloor \frac{\psi}{2} \right\rfloor - \frac{\nabla\psi}{2} \right) \right\|_{L^2 \left( \Lambda^+ \right)}^2 \leq \frac12 \left\| \nabla \psi \right\|_{L^2 \left( \Lambda^+ \right)}^2 + C\left| \Lambda^+ \right|.
\end{equation}
Bringing a factor of $1/2$ `into' the $L^2$ norm and then applying inequalities~\eqref{eq:09031309} and~\eqref{eq:09041309}, we can derive the bound 
\begin{align} \label{eq:09091309}
    \sum_{\psi : \Lambda^+ \to \Z} \exp \left( -  \frac{\beta}{4} \left\| \nabla \psi \right\|_{L^2 \left( \Lambda^+ \right)}^2 \right) & \leq e^{C \beta \left| \Lambda_+\right|}\sum_{\psi : \Lambda^+ \to \Z} \exp \left( -  \frac{\beta}{2} \left\| \nabla \left\lfloor \frac{\psi}{2} \right\rfloor \right\|_{L^2 \left( \Lambda^+ \right)}^2 \right) \\
    & \leq  2^{\left| \Lambda\right|} \times  e^{C \beta \left|\Lambda^+ \right|} \sum_{\phi : \Lambda^+ \to \Z} \exp \left( - \frac{\beta}{2} \left\| \nabla \phi \right\|_{L^2 \left( \Lambda^+ \right)}^2 \right). \notag
\end{align}
We then expand the thermal expectation as a normalized sum, writing
\begin{align} \label{eq:09331309}
    \left\langle \exp \left( \frac \beta4  \left\| \nabla \phi \right\|_{L^2 \left( \Lambda^+ \right)}^2 \right) \right\rangle_{\mu^{\IV ,\beta, \eta}_L} & = \frac{1}{Z^{\IV, \beta, 0}_{\Lambda}} \sum_{\phi : \Lambda^+ \to \Z} \exp \left( \frac \beta4  \left\| \nabla \phi \right\|_{L^2 \left( \Lambda^+ \right)}^2 \right) \times \exp \left( -
    \frac \beta2 \left\| \nabla \phi \right\|_{L^2 \left( \Lambda^+ \right)}^2 \right)\\
    & = \frac{1}{Z^{\IV, \beta,0}_{\Lambda}} \sum_{\phi :  \Lambda^+ \to \Z} \exp \left( - \frac \beta4 \left\| \nabla \phi \right\|_{L^2 \left( \Lambda^+ \right)}^2 \right) \notag.
\end{align}
Using the explicit formula for the partition function $Z^{\IV, \beta, 0}_{\Lambda} := \sum_{\phi : \Lambda^+ \to \Z} \exp \left( - \frac{\beta}{2} \left\| \nabla \phi \right\|_{L^2 \left( \Lambda^+ \right)}^2 \right)$ and combining the inequality~\eqref{eq:09091309} with the identity ~\eqref{eq:09331309} and the inequality $\left| \Lambda \right| \leq \left| \Lambda^+ \right|$ completes the proof of~\eqref{eq:12012309}.

We now prove the inequality~\eqref{eq:Thm7DGFF} for a general external field $\eta : \Lambda \to \R$. Applying the estimate~\eqref{eq:12012309} at inverse temperature $\frac 54 \beta$, we see that it is sufficient to prove that there exists a constant $C > 0$ depending on $d$ such that
\begin{equation} \label{eq:14132309}
    \left\langle \exp \left(  \frac{\beta}{8} \left\| \nabla \phi - \lambda \nabla u_{\Lambda, \eta} \right\|_{L^2  \left( \Lambda^+ \right)}^2 \right)\right\rangle_{\mu_\Lambda^{\IV, \beta, \eta}} \leq \exp \left( C \beta \left| \Lambda^+ \right| \right) \left\langle \exp \left(  \frac{5\beta}{16} \left\| \nabla \phi \right\|_{L^2  \left( \Lambda^+ \right)}^2 \right)\right\rangle_{\mu_\Lambda^{\IV, \frac {5 \beta}4, 0}}.
\end{equation}
The rest of the argument is thus devoted to the proof of the inequality~\eqref{eq:14132309}. Using the identity $-\Delta u_{\Lambda,\eta} = \eta$ and performing a discrete integration by parts, one obtains, for any integer-valued surface $\phi : \Lambda^+ \to \Z$ normalized to be $0$ on $\partial \Lambda$,
\begin{align} \label{eq:10162309}
    \frac 12 \left\| \nabla \phi \right\|^2_{L^2 \left( \Lambda^+ \right)} - \lambda \sum_{x \in \Lambda} \phi(x) \eta(x)
    & = \frac 12 \left\| \nabla \phi \right\|^2_{L^2 \left( \Lambda^+ \right)} - \lambda \sum_{e \in E \left(\Lambda^+\right)} \nabla \phi(e) \nabla u_{\Lambda, \eta}(e) \\
    & = \frac 12 \left\| \nabla \phi - \lambda \nabla u_{\Lambda, \eta} \right\|^2_{L^2 \left( \Lambda^+ \right)}  - \frac 12 \left\| \lambda \nabla u_{\Lambda, \eta} \right\|^2_{L^2 \left( \Lambda^+ \right)}. \notag
\end{align}
Using the previous computation, we see that
\begin{align} \label{eq:10042409}
   \lefteqn{\left\langle \exp \left(  \frac{\beta}{8} \left\| \nabla \phi - \lambda \nabla u_{\Lambda, \eta} \right\|_{L^2  \left( \Lambda^+ \right)}^2 \right)\right\rangle_{\mu_\Lambda^{\IV, \beta,  \eta}}} \qquad & \\ & =  \frac1{{Z^{\IV, \beta, \eta}_{\Lambda}}}\sum_{\phi : \Lambda^+ \to \Z} \exp \left( \frac \beta8 \left\| \nabla \phi - \lambda \nabla u_{ \Lambda, \eta} \right\|_{L^2 \left( \Lambda^+ \right)}^2\right) \exp \left( - \frac \beta2 \left\| \nabla \phi - \lambda \nabla u_{ \Lambda, \eta} \right\|_{L^2 \left( \Lambda^+ \right)}^2  +  \frac \beta2  \left\| \lambda \nabla u_{ \Lambda, \eta} \right\|_{L^2 \left( \Lambda^+ \right)}^2\right) \notag \\
   & = \frac{1}{Z^{\IV, \beta, \eta}_{\Lambda} \times \exp \left( - \frac \beta2  \left\| \lambda \nabla u_{ \Lambda, \eta} \right\|_{L^2 \left( \Lambda_L \right)}^2 \right)} \sum_{\phi : \Lambda^+ \to \Z} \exp \left( - \frac{3\beta}{8} \left\| \nabla \phi - \lambda \nabla u_{ \Lambda, \eta} \right\|_{L^2 \left( \Lambda^+ \right)}^2 \right). \notag
\end{align}
Additionally, from the definition of the partition function $Z^{\IV, \beta, \eta}_{\Lambda_L}$ and the computation~\eqref{eq:10162309}, we obtain the identity
\begin{align} \label{eq:10082409}
    Z^{\IV, \beta, \eta}_{\Lambda} \times \exp \left( - \frac\beta2  \left\|\lambda \nabla u_\eta \right\|_{L^2 \left( \Lambda \right)}^2 \right) & = \sum_{\phi : \Lambda \to \Z} \exp \left( - \frac{\beta}2 \left\| \nabla \phi \right\|^2_{L^2 \left( \Lambda^+ \right)} + \beta \lambda \sum_{x \in \Lambda} \phi(x) \eta(x) -  \frac\beta2  \left\| \lambda \nabla u_{\Lambda,\eta} \right\|_{L^2 \left( \Lambda \right)}^2 \right) \\
    & = \sum_{\phi : \Lambda^+ \to \Z} \exp \left( - \frac\beta2 \left\| \nabla \phi -\lambda \nabla u_{ \Lambda, \eta} \right\|_{L^2 \left( \Lambda^+ \right)}^2 \right).  \notag
\end{align}
Let us then introduce the mapping $w_{\Lambda, \eta} := \lambda u_{\Lambda, \eta} - \lfloor \lambda u_{\Lambda, \eta} \rfloor$. Since the function $\lfloor \lambda u_{\Lambda, \eta} \rfloor$ is integer-valued, we may perform the discrete change of variable $\phi \mapsto \phi - \lfloor \lambda u_{\Lambda, \eta} \rfloor$, and write
\begin{equation} \label{eq:15072309}
    \sum_{\phi : \Lambda^+ \to \Z} \exp \left( - \frac{3\beta}{8} \left\| \nabla \phi -\lambda \nabla u_{ \Lambda, \eta} \right\|_{L^2 \left( \Lambda^+ \right)}^2 \right) = \sum_{\phi : \Lambda^+ \to \Z} \exp \left( - \frac{3\beta}{8} \left\| \nabla \phi -\nabla w_{ \Lambda, \eta} \right\|_{L^2 \left( \Lambda^+ \right)}^2 \right).
\end{equation}
By definition, the mapping $w_{ \Lambda, \eta}$ is bounded by $1$. Applying the inequality $(a + b)^2 \geq \frac 56 a^2 - 5 b^2$, we obtain
\begin{equation} \label{eq:15082309}
    \left\| \nabla \phi -\nabla w_{ \Lambda, \eta} \right\|_{L^2 \left( \Lambda^+ \right)}^2 \geq \frac56\left\| \nabla \phi \right\|_{L^2 \left( \Lambda^+ \right)}^2 - C \left|\Lambda^+ \right|.
\end{equation}
A combination of~\eqref{eq:15072309} and~\eqref{eq:15082309} implies
\begin{equation} \label{eq:10092409}
    \sum_{\phi : \Lambda^+ \to \Z} \exp \left( - \frac{3\beta}{8} \left\| \nabla \phi - \lambda \nabla u_{ \Lambda, \eta} \right\|_{L^2 \left( \Lambda^+ \right)}^2 \right) \leq \exp \left( C \beta \left| \Lambda^+\right| \right) \sum_{\phi : \Lambda^+ \to \Z} \exp \left( - \frac{5\beta}{16} \left\| \nabla \phi \right\|_{L^2 \left( \Lambda^+ \right)}^2 \right).
\end{equation}
A similar computation, but using this time the inequality $\left\| \nabla \phi -\nabla w_{ \Lambda, \eta} \right\|_{L^2 \left( \Lambda^+ \right)}^2 \leq \frac54\left\| \nabla \phi \right\|_{L^2 \left( \Lambda^+ \right)}^2 + C \left|\Lambda^+ \right|$, yields the lower bound
\begin{align} \label{eq:10102409}
    \sum_{\phi : \Lambda^+ \to \Z} \exp \left( - \frac\beta2 \left\| \nabla \phi - \lambda \nabla u_{ \Lambda, \eta} \right\|_{L^2 \left( \Lambda^+ \right)}^2 \right) & = \sum_{\phi : \Lambda^+ \to \Z} \exp \left( - \frac\beta2 \left\| \nabla \phi -\nabla w_{ \Lambda, \eta} \right\|_{L^2 \left( \Lambda^+ \right)}^2 \right) \\
    & \geq \exp \left( - C \beta \left|\Lambda^+ \right| \right) \sum_{\phi : \Lambda^+ \to \Z} \exp \left( - \frac{5\beta}8 \left\| \nabla \phi \right\|_{L^2 \left( \Lambda^+ \right)}^2 \right). \notag
\end{align}
A combination of the identities~\eqref{eq:10042409},~\eqref{eq:10082409} with the inequalities~\eqref{eq:10092409} and~\eqref{eq:10102409} implies the upper bound
\begin{align*}
    \left\langle \exp \left(  \frac{\beta}{8} \left\| \nabla \phi - \lambda \nabla u_{\Lambda, \eta} \right\|_{L^2  \left( \Lambda^+ \right)}^2 \right)\right\rangle_{\mu_\Lambda^{\IV, \beta,  \eta}} & \leq \exp \left( C \beta \left|\Lambda^+ \right| \right)  \frac{ \sum_{\phi : \Lambda^+ \to \Z} \exp \left( - \frac{5\beta}{16} \left\| \nabla \phi \right\|_{L^2 \left( \Lambda^+ \right)}^2 \right)}{\sum_{\phi : \Lambda^+ \to \Z} \exp \left( - \frac{5\beta}8 \left\| \nabla \phi \right\|_{L^2 \left( \Lambda^+ \right)}^2 \right)} \\
    & = \exp \left( C \beta \left|\Lambda^+ \right| \right) \left\langle \exp \left(  \frac{5\beta}{16} \left\| \nabla \phi \right\|_{L^2  \left( \Lambda^+ \right)}^2 \right)\right\rangle_{\mu_\Lambda^{\IV, \frac {5 \beta}4, 0}}.
\end{align*}
The proof of the inequality~\eqref{eq:14132309} is complete.

\end{proof}

\begin{proof}[Proof of the upper bounds of Theorem~\ref{prop3.100910disc} and Theorem~\ref{prop3.1009100bisdisc}]
Lemma~\ref{Thm7DGFF} and Jensen's inequality imply that 
\begin{equation}\label{eq:JensenComparisonBound}
\| \nabla \phi - \lambda \nabla u_{\Lambda_L,\eta} \|_{\underline{L}^2\left(\Lambda^+_L,\mu_{\Lambda_L}^{\IV,\beta,\eta}\right)} \leq C\left(1 + \beta^{-1}\right). 
\end{equation}
The upper bound of Theorem~\ref{prop3.100910disc} now follows from Proposition~\ref{prop.propueta} and the triangle inequality. We also obtain in this way a version of the lower bound of Theorem~\ref{prop3.100910disc} which is, however, suboptimal for small $\beta$. To obtain a lower bound which is temperature independent (as stated in Theorem~\ref{prop3.100910disc}), we provide a different argument in the following section.

For the upper bound of Theorem~\ref{prop3.1009100bisdisc}, one can apply the Poincar\'e inequality and obtain
\begin{equation} \label{eq:11132907}
    \left\| \phi - \lambda u_{\Lambda_L,\eta} \right\|_{\underline{L}^2 \left( \Lambda_L^+ , \mu^{\IV, \beta ,\eta}_{\Lambda_L} \right)} \leq C L \left\| \nabla \phi - \lambda \nabla u_{\Lambda_L,\eta} \right\|_{\underline{L}^2 \left( \Lambda_L^+ , \mu^{\IV,\beta ,\eta}_{\Lambda_L} \right)} \leq C \left( 1 + \beta^{-1} \right) L.
\end{equation}
Again, applying the triangle inequality and appealing to Proposition~\ref{prop.propueta} for an upper bound on $\|u_{\Lambda_L,\eta} \|_{\underline{L}^2 ( \Lambda_L^+)}$ for $d=1,2$ completes the proof. 
\end{proof}

\subsection{Lower bounds on gradient and height fluctuations} \label{subsec4.1}

The lower bounds of Theorem~\ref{prop3.100910disc} and Theorem~\ref{prop3.1009100bisdisc} are also deduced from a comparison of the gradient $\nabla \phi$ of the integer-valued random-field Gaussian free field to the gradient $\nabla u_{\Lambda,\eta}$ of the ground state of the real-valued random-field Gaussian free field. While it is possible to use Lemma~\ref{Thm7DGFF} to this end, the resulting bounds would deteriorate at high temperatures. Instead, we rely on the following lemma which shows that the \emph{thermal expectation} $ \left\langle \nabla \phi \right\rangle_{\mu^{\IV , \beta, \eta}_{\Lambda}}$ is close to $\lambda \nabla u_{\Lambda,\eta}$ uniformly in the temperature parameter (closeness of the thermal expectation suffices for the lower bounds, due to Jensen's inequality, but is insufficient for the upper bounds).

\begin{lemma} \label{lemma6.1}
 Let $d\ge 1$, $\Lambda\subset\Z^d$ be a finite subset of $\Zd$, $\lambda>0$ be the strength of the random field, $\beta > 0$ an inverse temperature, $\eta:\Lambda\to\R$ be an external field. Let $\phi$ be distributed according to the integer-valued random-field Gaussian free field $\mu^{\IV , \beta, \eta}_{\Lambda}$. Then for any map $w : \Lambda^+ \to \Z$ normalized to be $0$ on the boundary $\partial \Lambda$,
\begin{equation} \label{eq:17431209}
    \left| \sum_{e \in E \left(\Lambda^+\right)} \left( \left\langle \nabla \phi(e) \right\rangle_{\mu^{\IV , \beta, \eta}_{\Lambda}} - \lambda \nabla u_{\Lambda, \eta} (e) \right) \nabla w(e)\right| \leq  \frac12 \left\|  \nabla w \right\|_{L^2 \left( \Lambda^+ \right)}^2.
\end{equation}
Consequently, there exists a constant $C$ depending only on the dimension $d$ such that
\begin{equation} \label{eq:10231406}
    \left\| \left\langle \nabla \phi \right\rangle_{\mu^{\IV , \beta, \eta}_{\Lambda}} - \lambda \nabla u_{\Lambda, \eta} \right\|_{\underline{L}^2(\Lambda^+)} \leq C.
\end{equation}
\end{lemma}

\begin{proof}
We only prove the inequality~\eqref{eq:17431209} in the case $\eta = 0$; the general case can be obtained by a notational modification of the argument. By performing the discrete change of variable $\phi \to \phi + w$, we have the identity
\begin{equation} \label{eq:17301209}
    \sum_{\phi : \Lambda \to \Z} \exp \left( - \frac \beta2 \left\| \nabla \phi \right\|_{L^2 \left( \Lambda^+ \right)}^2  \right) \\ = \sum_{\phi : \Lambda \to \Z} \exp \left( - \frac \beta2 \left\| \nabla \phi + \nabla w \right\|_{L^2 \left( \Lambda^+ \right)}^2\right).
\end{equation}
Subtracting the right and left hand sides of the identity~\eqref{eq:17301209}, and expanding the square, we obtain the identity
\begin{equation*}
    \sum_{\phi : \Lambda \to \Z} \exp \left( - \frac \beta2 \left\| \nabla \phi \right\|_{L^2 \left( \Lambda^+ \right)}^2 \right) \left( \exp \left(  - \beta \sum_{e \in E \left(\Lambda^+\right)} \nabla  \phi(e) \nabla w(e) - \frac \beta2 \left\| \nabla w \right\|_{L^2 \left( \Lambda^+ \right)}^2 \right)  - 1 \right) = 0.
\end{equation*}
Using the identity $e^y - 1 \geq y$, we obtain the estimate
\begin{equation} \label{eq:17421209}
    \sum_{\phi : \Lambda \to \Z} \exp \left( - \frac \beta2 \left\| \nabla \phi \right\|_{L^2 \left( \Lambda^+ \right)}^2 \right) \left( - 2\sum_{e \in E \left(\Lambda^+\right)} \nabla  \phi(e) \nabla w(e) -  \left\|  \nabla w \right\|_{L^2 \left( \Lambda^+ \right)}^2 \right) \leq 0.
\end{equation}
Dividing both sides of the inequality~\eqref{eq:17421209} by the partition function $Z^{\IV, \beta, 0}_{\Lambda}$ yields the inequality: For any $w : \Lambda \to \Z$,
\begin{equation} \label{eq:174312099}
    2\sum_{e \in E \left(\Lambda^+\right)} \left\langle \nabla \phi(e) \right\rangle_{\mu^{\IV , \beta, \eta}_{\Lambda}} \nabla w(e) +  \left\|  \nabla w \right\|_{L^2 \left( \Lambda^+ \right)}^2 \geq 0.
\end{equation}
Using the definition of the mapping $u_{\Lambda, \eta}$, the inequality~\eqref{eq:174312099} can be equivalently rewritten as follows: For any $w : \Lambda \to \Z$,
\begin{equation} \label{eq:10301406}
    2\sum_{e \in E \left(\Lambda^+\right)} \left( \left\langle \nabla \phi(e) \right\rangle_{\mu^{\IV , \beta, \eta}_{\Lambda}} - \lambda \nabla u_{\Lambda, \eta} (e) \right) \nabla w(e) +  \left\|  \nabla w \right\|_{L^2 \left( \Lambda^+ \right)}^2 \geq 0.
\end{equation}
We then deduce the inequality~\eqref{eq:17431209} by applying inequality~\eqref{eq:10301406} with either the function $w$ or the function $-w$.

We next prove~\eqref{eq:10231406}. To this end, let us consider the integer-valued function $w = \lfloor  \lambda u_{\Lambda, \eta}  - \left\langle \phi \right\rangle_{\mu^{\IV , \beta, \eta}_{\Lambda}} \rfloor$. Using that the difference between the functions $w$ and $\lambda u_{\Lambda, \eta}  - \left\langle \phi \right\rangle_{\mu^{\IV , \beta, \eta}_{\Lambda}}$ is smaller than $1$ in absolute value and the inequality $(a+b)^2 \leq \frac54 a^2 + 5 b^2$, we obtain the inequality
\begin{equation*}
    \left\|  \nabla w \right\|_{\underline{L}^2 \left( \Lambda^+ \right)}^2 \leq \frac 54 \left\|  \left\langle \nabla \phi \right\rangle_{\mu^{\IV , \beta, \eta}_{\Lambda}} - \lambda \nabla u_{\Lambda, \eta}  \right\|_{\underline{L}^2 \left( \Lambda^+ \right)}^2 + C.
\end{equation*}
A similar argument yields
\begin{align*}
    \lefteqn{2\sum_{e \in E \left(\Lambda^+\right)} \left( \left\langle \nabla \phi(e) \right\rangle_{\mu^{\IV , \beta, \eta}_{\Lambda}} - \lambda \nabla u_{\Lambda, \eta} (e) \right) \nabla w(e)} \qquad & \\ &
    \leq - 2\left\| \left\langle \nabla \phi \right\rangle_{\mu^{\IV , \beta, \eta}_{\Lambda}} - \lambda \nabla u_{\Lambda, \eta} \right\|_{L^2 (\Lambda^+)}^2 + C  \sum_{e \in E \left(\Lambda^+\right)} \left| \left\langle \nabla \phi(e) \right\rangle_{\mu^{\IV , \beta, \eta}_{\Lambda}} - \lambda \nabla u_{\Lambda, \eta} (e) \right| \\
    & \leq - \frac74 \left\| \left\langle \nabla \phi \right\rangle_{\mu^{\IV , \beta, \eta}_{\Lambda}} - \lambda \nabla u_{\Lambda, \eta} \right\|_{L^2 (\Lambda^+)}^2 + C \left|\Lambda^+\right|
\end{align*}
where we used the inequality $ab \leq \frac{1}{4C} a^2 + 16C b^2$ in the second line. A combination of the three previous displays implies the inequality~\eqref{eq:10231406}.
\end{proof}

\begin{remark}
We point out that the inequality~\eqref{eq:17431209} also follows from the non-negativity of the Kullback-Leibler divergence between the distribution of $\phi$ and $\phi+w$, when $\phi$ is sampled from $\mu^{\IV , \beta, \eta}_{\Lambda}$.
\end{remark}

\begin{proof}[Proof of the lower bounds of Theorem~\ref{prop3.100910disc} and Theorem~\ref{prop3.1009100bisdisc}]
We first prove the lower bounds of Theorem~\ref{prop3.100910disc}. To this end, we use the triangle and Jensen's inequalities to write
\begin{align*}
    \left\| \lambda \nabla u_{\Lambda_L , \eta} \right\|_{\underline{L}^2(\Lambda^+_L )} - \left\| \left\langle \nabla \phi \right\rangle_{\mu^{\IV , \beta, \eta}_{\Lambda}} - \lambda \nabla u_{\Lambda_L, \eta} \right\|_{\underline{L}^2(\Lambda^+_L)} & \leq \left\| \left\langle \nabla \phi \right\rangle_{\mu^{\IV , \beta, \eta}_{\Lambda}} \right\|_{\underline{L}^2(\Lambda^+_L)} \\
    & \leq  \left\| \nabla \phi \right\|_{\underline{L}^2(\Lambda^+_L , \mu^{\IV , \beta, \eta}_{\Lambda})}.
\end{align*}
The lower bounds of Theorem~\ref{prop3.100910disc} are then obtained by appealing to the inequality~\eqref{eq:10231406} and to Proposition~\ref{prop.propueta} for a lower bound on $\|\nabla u_{\Lambda_L,\eta} \|_{\underline{L}^2 ( \Lambda_L^+)}$.

Let us now prove the lower bounds of Theorem~\ref{prop3.1009100bisdisc}.
Fix an integer $L \geq 1$. Let $\{X_n\}$ be the simple random walk on $\Zd$ and define $\tau_L:= \min \{n \geq 0\, :\, X_n \in \Zd \setminus \Lambda_L\}$. For $x \in \Zd$, denote by $\P_x$ and $\E_x$ the law and expectation with respect to $\{X_n\}$ started from $X_0 = x$. Define a map $P_L : \Zd \to \R$ by
\begin{equation} \label{eq:09260408}
    P_L(x) := \frac1{2d}\mathbb{E}_x[ \tau_L ].
\end{equation}
The Markov property for the random walk implies that, for any $x \in \Lambda_{L}$, 
\[
2d P_L(x) = \sum_{y \in \Zd}  \mathbb{P}_x[X_1 = y] \cdot \mathbb{E}_y[1 + \tau_L ] = 1 + \sum_{\substack{y \in \Zd \\ x \sim y}} P_L(y);
\]
rearranging this equation shows that $-\Delta P_L = 1$ in $\Lambda_L$. Furthermore, standard estimates on the random walk imply that
\[
c  \inf_{z \in \Zd \setminus \Lambda_{ L}} |x-z|^2 \leq P_L(x) \leq  C  \inf_{z \in \Zd \setminus \Lambda_{ L}} |x-z|^2. 
\]
Therefore, we find that 
\[
\|P_L\|_{\underline{L}^2(\Lambda_L)} \geq cL^2  , \; \; \text{ and } \; \;   \|\nabla P_L\|_{\underline{L}^2(\Lambda_L)} = \left(\frac{1}{|\Lambda_L|} \sum_{x \in \Lambda_L} P_L(x) \Delta P_L(x)\right)^{1/2} \leq C L,
\]
where we use discrete integration by parts in the bound on the norm of $\nabla P_L$.

By Jensen's and the triangle inequalities,  
\begin{equation} \label{eq:1941}
\left\| \phi \right\|_{\underline{L}^2 \left(\Lambda_L^+ ,  \mu^{\IV, \beta ,\eta}_{\Lambda_L}  \right)}  \geq  \left\| \left\langle \phi \right\rangle_{\mu^{\IV, \beta ,\eta}_{\Lambda_L}} \right\|_{\underline{L}^2 \left(\Lambda_L^+  \right)} \geq \| \lambda u_{\Lambda_L,\eta} \|_{\underline{L}^2(\Lambda_L)}  -\left\| \left\langle \phi \right\rangle_{\mu^{\IV, \beta ,\eta}_{\Lambda_L}} - \lambda u_{\Lambda_L,\eta} \right\|_{\underline{L}^2 \left(\Lambda_L^+\right)}. 
\end{equation}
We next estimate the second term in the right-hand side of the previous display. To this end, we apply Poincar\'e inequality and the estimate~\eqref{eq:10231406}. We obtain
\begin{equation} \label{eq:1942}
    \left\| \left\langle \phi \right\rangle_{\mu^{\IV, \beta ,\eta}_{\Lambda_L}} - \lambda u_{\Lambda_L,\eta} \right\|_{\underline{L}^2 \left(\Lambda_L^+  \right)} \leq CL \left\| \left\langle \nabla  \phi \right\rangle_{\mu^{\IV, \beta ,\eta}_{\Lambda_L}} - \lambda \nabla u_{\Lambda_L,\eta} \right\|_{\underline{L}^2 \left(\Lambda_L^+ \right)} \leq CL.
\end{equation}
Thus, 
\begin{equation}\label{eq:ProbRepIVLB}
\mathbb{P}\left[\left\| \phi \right\|_{\underline{L}^2 \left(\Lambda_L^+ ,  \mu^{\IV, \beta ,\eta}_{\Lambda_L}  \right)} > c L^{2 - d/2}\right] \geq \mathbb{P}\left[\| \lambda u_{\Lambda_L,\eta} \|_{\underline{L}^2(\Lambda_L)} > c L^{2-d/2} + CL\right].
\end{equation}
To complete the proof, we wish to show that $\| u_{\Lambda_L,\eta} \|_{\underline{L}^2(\Lambda_L)}$ has large fluctuations. By the Cauchy-Schwarz inequality, the construction of $P_L$ and discrete integration by parts, we may write 
\begin{align*}
\|\lambda  u_{\Lambda_L,\eta} \|_{\underline{L}^2(\Lambda_L)} & \geq \frac{1}{|\Lambda_L|} \sum_{x \in \Lambda_L} \lambda  u_{\Lambda_L,\eta}(x) \\
& \geq \frac{1}{|\Lambda_L|} \sum_{x \in \Lambda_L} \lambda  u_{\Lambda_L,\eta}(x) (- \Delta P_L(x)) \\
& = \frac{1}{|\Lambda_L|} \sum_{x \in \Lambda_L} \lambda \eta(x) P_L(x).
\end{align*}
We will use the Central Limit Theorem to prove that, as $L$ grows, 
\begin{equation}\label{eq:CLT}
\frac{1}{\lambda |\Lambda_L|^{1/2} \| P_L\|_{\underline{L}^2(\Lambda_L) }} \sum_{x \in \Lambda_L} \lambda \eta(x) P_L(x) \rightarrow  N\left(0, 1 \right),
\end{equation}
where $N(0,1)$ is the standard normal distribution. Indeed, the summands in~\eqref{eq:CLT} are independent with mean zero, and $\lambda \eta(x) P_L(x)$ has variance $\lambda^2 P_L(x)^2$.  Since $|P_L(x)| \leq C L^2$, $\sum_{x \in \Lambda_L} P_L(x)^2 > cL^{4 +d} $ and the $(\eta(x))$ are identically distributed, the sum~\eqref{eq:CLT} satisfies the Lindeberg-Feller Central Limit Theorem (see~\cite[Theorem 3.4.10]{Dur2019book}).

To connect the two previous displays, we point out that
\[
\frac{1}{|\Lambda_L|} \ge c\lambda L^{2 - d/2} \cdot \frac{1}{\lambda |\Lambda_L|^{1/2} \| P_L\|_{\underline{L}^2(\Lambda_L) }}.
\]
Thus, combining the previous four displays, we find that  
\[
\liminf_{L \to \infty} \mathbb{P}\left[\left\| \phi \right\|_{\underline{L}^2 \left(\Lambda_L^+ ,  \mu^{\IV, \beta ,\eta}_{\Lambda_L}  \right)} > c L^{2 - d/2}\right] \geq \liminf_{L \to \infty} \mathbb{P}\left[N(0,1)  >\frac{1}{\lambda}  \left( c + CL^{d/2 -1} \right)\right] \geq e^{-C/ \lambda^2}
\]
in dimensions $d=1,2$. Since this bound is uniform in $L$, we deduce the desired lower bound on the expectation of  $\left\| \phi \right\|_{\underline{L}^2 \left(\Lambda_L^+ ,  \mu^{\IV, \beta ,\eta}_{\Lambda_L}\right)}$.
\end{proof}

\begin{remark} \label{remark6.4}
We complete this section by establishing the lower bound stated in Remark~\ref{remark1.1}.
Using that the random field $(\eta(x))_{x \in \Zd}$ is assumed to be i.i.d. with expectation zero and variance one, we see that there exists a constant $\kappa > 0$ depending on the law of the random field such that, for any $x \in \Zd$,
\begin{equation*}
    \P \left[ \left| \eta(x) \right| \geq 1 \right] \geq 2\kappa.
\end{equation*}
We may then assume, without loss of generality, that $\P \left[  \eta(x)  \geq 1 \right] \geq \kappa$.
We next let $M$ be the smallest integer larger than $1$ such that
\begin{equation*}
    \left| \Lambda_M \right| \geq \frac 1\lambda \left| \partial \Lambda_M \right|.
\end{equation*}
We assume that $L \geq 2M$ and consider the random set
\begin{equation*}
    \mathcal{E}_{L, \lambda} := \left\{ x \in \Lambda_{\lceil L/2\rceil} \cap (2M+1) \Zd, \,\forall y \in (x + \Lambda_M), \, \eta(y) \geq 1 \right\}.
\end{equation*}
We next set
\begin{equation*}
    E_{L, \lambda} := \bigcup_{x \in\mathcal{E}_{L, \lambda}} \left( x + \Lambda_M \right).
\end{equation*}
Since the random field is i.i.d., there exists a constant $c > 0$ depending only on the dimension, the law of the random field, and the disorder strength, such that
\begin{equation} \label{eq:23002002}
    \E \left[ \frac{\left| E_{L , \lambda} \right|}{\left| \Lambda_L^+ \right|} \right] \geq c.
\end{equation}
We next introduce the function $w := \indc_{E_{L , \lambda}}$ and observe that, by the definition of the set $E_{L , \lambda}$,
\begin{equation} \label{eq:22282002}
    \sum_{e \in E \left(\Lambda^+_L \right)}  \lambda \nabla u_{\Lambda_L, \eta} (e)  \nabla w(e) = \sum_{x \in \Lambda_L^+} \lambda \eta(x) \indc_{E_{L , \lambda}} \geq \lambda \left| E_{L , \lambda} \right| ~\mbox{and}~ \left\|  \nabla w \right\|_{L^2 \left( \Lambda^+_L \right)}^2 \leq  \left| E_{L , \lambda} \right| \times \left( \frac{\left| \partial \Lambda_M \right|}{\left| \Lambda_M \right|} \right). 
\end{equation}
Combining the bounds of~\eqref{eq:22282002} with~\eqref{eq:17431209}, the definition of the integer $M$ and the triangle inequality, we deduce that
\begin{align*}
    \left| \sum_{e \in E \left(\Lambda^+_L \right)} \left\langle \nabla \phi(e) \right\rangle_{\mu^{\IV , \beta, \eta}_{\Lambda_L}} \nabla w(e) \right|
   & \geq \left| \sum_{e \in E \left(\Lambda^+_L\right)} \lambda \nabla u_{\Lambda_L, \eta} (e) \nabla w(e)\right| - \frac12 \left\|  \nabla w \right\|_{L^2 \left( \Lambda^+_L \right)}^2 \\
    & \geq  \lambda  \left| E_{L , \lambda} \right| - \frac{\lambda}{2} \left| E_{L , \lambda} \right| \\
    & \geq \frac{\lambda}{2}  \left| E_{L , \lambda} \right| .
\end{align*}
Using the Cauchy-Schwarz inequality and noting that $\left| \nabla w(e) \right| \leq 1$ for any edge $e \in E \left( E_{\Lambda_L^+} \right)$  since $w$ is an indicator function, we deduce that
\begin{equation*}
    \left\| \nabla \phi \right\|_{\underline{L}^2(\Lambda^+_L , \mu^{\IV , \beta, \eta}_{\Lambda})} \geq \frac{{c'}}{\left| \Lambda_{L}^+ \right|} \sum_{e \in E \left(\Lambda_L \right)} \left\langle \left| \nabla \phi(e) \right|  \right\rangle_{\mu^{\IV , \beta, \eta}_{\Lambda_L}} \geq \frac{{c'}\lambda   \left| E_{L , \lambda} \right|}{2 \left| \Lambda_{L}^+ \right|},
\end{equation*}
where the constant $c'$ depends only on dimension (due to the normalization counting vertices instead of edges). Taking the expectation on both sides of the previous inequality and using the estimate~\eqref{eq:23002002} completes the proof.
\end{remark}

\subsection{Height fluctuations in dimensions $d \geq 3$}
We conclude the paper by proving Theorem~\ref{thm10162010}. The argument relies on a Peierls-type argument, and we will make use of the two following propositions. The first is a result of Fisher--Fr\"{o}hlich--Spencer~\cite{fisher1984ising} (see also Chalker~\cite{C83}).

\begin{proposition}[Fisher--Fr\"{o}hlich--Spencer~\cite{fisher1984ising}] \label{eq:11182010}
Let $d \geq 3$ and $\left( \eta(x) \right)_{x \in \Zd}$ be a collection of independent standard Gaussian random variables. There exist two constants $c, \lambda' > 0$ such that for any $\lambda \in (0, \lambda']$ and any $v \in \Zd$, the event
\begin{equation} \label{eq:10242010}
    \mathcal{E}_{\lambda, v,N} := \left\{\eta : \Zd \to \R \, : \, \begin{gathered}\text{For all $\Lambda \subseteq \Zd$ with $v \in \Lambda$ and with $\Lambda$ connected, having connected}\\ \text{complement, and } |\partial \Lambda| \geq N, \text{ it holds that $\lambda \left| \sum_{x \in \Lambda} \eta(x) \right| \leq |\partial \Lambda|$}\end{gathered} \right\}
\end{equation}
satisfies
\begin{equation*}
    \P \left(  \mathcal{E}_{\lambda, v,N} \right) \geq  1 - e^{-\frac{c N^{1/3}}{\lambda^2}}.
\end{equation*}
\end{proposition}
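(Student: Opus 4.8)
The plan is to bound $\P(\mathcal{E}_{\lambda,v}^c)$, i.e.\ the probability that some connected, bounded $\Lambda\ni v$ violates $\lambda\,|\sum_{x\in\Lambda}\eta(x)|\le|\partial\Lambda|$, and show it is at most $e^{-c/\lambda^2}$. I would rely on two elementary inputs. The first is a Gaussian tail estimate: for a fixed such $\Lambda$ the sum $\Sigma(\Lambda):=\sum_{x\in\Lambda}\eta(x)$ is centered Gaussian with variance $|\Lambda|$, so together with the lattice isoperimetric inequality $|\partial\Lambda|\ge c_d|\Lambda|^{(d-1)/d}$,
\[
\P\big(\lambda|\Sigma(\Lambda)|>|\partial\Lambda|\big)\le 2\exp\Big(-\frac{|\partial\Lambda|^2}{2\lambda^2|\Lambda|}\Big)\le 2\exp\Big(-\frac{c_d\,|\partial\Lambda|^{(d-2)/(d-1)}}{\lambda^2}\Big).
\]
The second is a Peierls-type entropy bound: the number of connected, bounded $\Lambda\ni v$ with $|\partial\Lambda|=m$ is at most $\kappa_d^{\,m}$, which follows from the standard bound on the number of connected subgraphs of given size in a bounded-degree graph (Bollob\'as~\cite[Chapter 45]{B06}). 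Summing these two estimates over all $\Lambda$ with $|\partial\Lambda|\le m_0(\lambda)$, where $m_0$ is chosen so that successive terms decay geometrically (this holds up to $m_0$ of order $\lambda^{-2(d-1)}$), yields a series dominated by its first term $|\partial\Lambda|=2d$, the single vertex $\Lambda=\{v\}$, which contributes $2\exp(-2d^2/\lambda^2)$. This already gives the desired $1-e^{-c/\lambda^2}$ for the contribution of small contours.

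The remaining step — large contours — is where the argument is genuinely subtle, and is the content of the Fisher--Fr\"ohlich--Spencer~\cite{fisher1984ising} result that the statement quotes. Here the naive union bound breaks down because, since $(d-2)/(d-1)<1$, the entropy $\kappa_d^{\,m}$ eventually overwhelms the Gaussian decay $\exp(-c_d m^{(d-2)/(d-1)}/\lambda^2)$. I would instead use a renormalization/multi-scale scheme: fix geometric scales $\ell_k=\ell_0^{\,k}$, partition $\Zd$ at scale $k$ into cubes of side $\ell_k$, and pass to the block sums $\Sigma(Q)\sim N(0,\ell_k^{\,d})$. A connected set of diameter at least $\ell_k$ has $|\partial\Lambda|\ge\ell_k/d$, so on the part of $\Lambda$ that is a union of scale-$k$ cubes the field produces a fluctuation of order $\lambda\ell_k^{\,d/2}$ per block against a surface cost of order $\ell_k^{\,d-1}$ per coarse boundary cell; the governing ratio is $\lambda\ell_k^{\,1-d/2}$, which for $d\ge3$ decays geometrically in $k$ — this is the renormalization of the effective disorder strength to $\lambda_k\approx\lambda\,\ell_k^{-(d-2)/2}$. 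One then inducts over scales: on an event of probability $1-e^{-c/\lambda^2}$ no ``bad coarse contour'' survives at any scale, which forces $\lambda|\Sigma(\Lambda)|\le|\partial\Lambda|$ for $\Lambda$ of arbitrarily large diameter. The scale-$k$ exceptional probability has the form $\exp(-c\ell_k^{\,d-2}/\lambda^2)$ times a coarse entropy factor, and because $\lambda_k\to0$ the exponential beats the entropy, so the contributions are summable over $k\ge1$; the $k=0$ term is exactly the microscopic Peierls bound above.

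The hard part is the renormalization step itself: one must show that the coarse boundary of a set at scale $k$ controls its finer boundary up to constants, and that the residual ``boundary-block'' contributions at each scale can be recursed to finer scales in a way that preserves the gain from the decaying effective coupling — this geometric bookkeeping, together with controlling the dependence structure of the block fields across scales, is the technical core. Since \cite{fisher1984ising} establishes precisely this (and it is also recovered by later renormalization-group treatments of the random-field Ising model), in practice one invokes that theorem directly rather than reproducing the multi-scale analysis.
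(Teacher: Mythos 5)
The paper offers no proof of this proposition: it is imported directly from Fisher--Fr\"ohlich--Spencer \cite{fisher1984ising}, with only the remark that their $d=3$ argument extends to all $d\ge 3$ without modification. Your proposal --- a correct Gaussian-tail/isoperimetry/Peierls union bound for contours with $|\partial\Lambda|\lesssim \lambda^{-2(d-1)}$, a correct diagnosis of why the naive union bound fails at larger scales since $(d-2)/(d-1)<1$, and an appeal to \cite{fisher1984ising} for the multi-scale core --- is consistent with, and more detailed than, the paper's treatment, and I see no error in it.
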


We mention that the original proof of Fisher--Fr\"{o}hlich--Spencer~\cite{fisher1984ising} is written for the case $d = 3$ only, and states that 
\[
 \P \left(  \mathcal{E}_{\lambda, v,1} \right) \geq  1 - e^{-\frac{c}{\lambda^2}}.
\]
However, the proof proves the stronger statement that we have claimed above, and the argument applies to  any dimension $d \geq 3$ (as remarked at the end of the proof in~\cite{fisher1984ising}). In fact, the exponent $1/3$ can be improved to a constant approaching $1$ as $d$ grows.

The second result is a classical upper bound on the number of bounded connected subsets of $\Zd$ containing a fixed vertex $v$ and with a fixed boundary size.

\begin{proposition}[Lemma 5.3.5 of \cite{ruelle1999statistical}, \cite{LM98, BB07}]\label{prop:10252010}
There exists a constant $\alpha_d > 0$ depending only on the dimension such that for each integer $N \in \N$ and each $v \in \Zd$, the set
\begin{equation*}
    A_{N,v} :=  \left\{ \Lambda \subseteq \Zd \, : \, v \in \Lambda, \Lambda \mbox{ is connected, finite, } \partial \Lambda  \mbox{ is } \star-\mbox{connected and} \, \left| \partial \Lambda  \right| = N \right\}
\end{equation*}
satisfies
\begin{equation*}
    \left| A_{N,v} \right| \leq e^{\alpha_d N}.
\end{equation*}
\end{proposition}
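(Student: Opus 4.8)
The plan is to encode each $\Lambda\in A_{N,v}$ injectively by its external boundary $\partial\Lambda$, to confine $\partial\Lambda$ to a polynomially large box around $v$ using a discrete isoperimetric inequality, and then to count using the bounded-degree enumeration of connected sets recalled in~\cite{B06}.

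First I would observe that $\Lambda$ is recovered from $\partial\Lambda$: since no vertex of $\Lambda$ has a neighbour outside $\Lambda^+=\Lambda\cup\partial\Lambda$, the set $\Lambda$ is a union of connected components of $\Z^d\setminus\partial\Lambda$, and being connected and containing $v$ it is precisely the connected component of $v$ in $\Z^d\setminus\partial\Lambda$. Hence $\Lambda\mapsto\partial\Lambda$ is injective on $A_{N,v}$, and it suffices to bound the number of sets that arise as $\partial\Lambda$. Next, the discrete isoperimetric inequality on $\Z^d$ gives $|\Lambda|\le C_d\,|\partial\Lambda|^{d/(d-1)}=C_d N^{d/(d-1)}$; since any vertex of $\partial\Lambda$ is joined to $v$ by a path inside $\Lambda$ of length at most $|\Lambda|$, it follows that $\Lambda\cup\partial\Lambda\subseteq B_\infty(v,R)$ with $R:=C_d N^{d/(d-1)}+1$, a box of cardinality $e^{O_d(\log N)}$.

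I would then use the classical fact that the external boundary of a finite connected subset of $\Z^d$ is connected in the $*$-lattice $\Z^d_*$, the graph on $\Z^d$ whose edges join vertices at $\ell^\infty$-distance one (the boundary-connectivity results of Tim\'ar or of Deuschel--Pisztora). Thus $\partial\Lambda$ is a $*$-connected subset of $B_\infty(v,R)$ of cardinality $N$. The graph $\Z^d_*$ has maximal degree $3^d-1$, so by the standard bound on the number of connected sets of given cardinality containing a fixed vertex in a graph of bounded degree~\cite[Chapter 45]{B06}, the number of $*$-connected subsets of $\Z^d_*$ of cardinality $N$ containing a prescribed vertex is at most $C_d^N$ for a constant $C_d$ depending only on $d$. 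Summing over the at most $(2R+1)^d$ possible root vertices inside $B_\infty(v,R)$ bounds the number of admissible $\partial\Lambda$ by $(2R+1)^d C_d^N$, and since $(2R+1)^d=e^{O_d(\log N)}$ this is at most $e^{C_d N}$ after enlarging $C_d$. Combined with the injectivity of the first step this gives $|A_{N,v}|\le e^{C_d N}$.

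The main obstacle is the geometric input of the third step: the $*$-connectivity of the external boundary is delicate when $\Lambda$ fails to be simply connected, since $\partial\Lambda$ then splits into the outer boundary together with the boundaries of the bounded components (``holes'') of $\Z^d\setminus\Lambda$, and a careless encoding of those pieces costs $e^{\Theta(N\log N)}$. One must therefore either restrict to the configurations for which the statement is applied (where filling holes does not increase $|\partial\Lambda|$, so holes may be removed at no cost), or replace $\partial\Lambda$ by the connected contour-surface attached to the edge boundary $\partial_E\Lambda$ and treat the components of the complement one at a time, as in~\cite{LM98}. The remaining ingredients — recovery of $\Lambda$ from its boundary, the isoperimetric localisation, and the bounded-degree enumeration — are routine.
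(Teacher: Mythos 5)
The paper does not prove this proposition; it is quoted as an external result (Bollob\'as~\cite{B06}, Lebowitz--Mazel~\cite{LM98}, Balister--Bollob\'as~\cite{BB07}), so there is no internal argument to compare yours with. Your first two steps are fine for $d\ge 2$: the map $\Lambda\mapsto\partial\Lambda$ is indeed injective on $A_{N,v}$ (your recovery of $\Lambda$ as the component of $v$ in $\Zd\setminus\partial\Lambda$ is correct), and isoperimetry does confine everything to a box of radius polynomial in $N$, so the rooting cost is harmless. (As a side remark, the proposition as literally stated is false for $d=1$, where every interval containing $v$ has $|\partial\Lambda|=2$; both the statement and your isoperimetric step implicitly require $d\ge2$, which is harmless since the paper only uses $d\ge3$.)

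The problem is that your third step is exactly where all the difficulty of the proposition lives, and you do not carry it out. As you yourself observe, $\partial\Lambda$ is $*$-connected only when $\Zd\setminus\Lambda$ is connected; in general it splits into one $*$-connected piece per component of the complement, there can be order $N$ such pieces, and rooting each piece independently in the box costs $e^{\Theta(N\log N)}$. Neither of your two proposed remedies closes this gap. Restricting to hole-free $\Lambda$ proves a different statement, and is in any case unavailable in the paper's application: the Peierls argument sums $e^{-\beta|\partial D|/4}$ over \emph{all} bounded connected $D\ni v$, and the level components $D_+(\phi)$ can certainly have holes. Filling the holes does give $\partial\bar\Lambda\subseteq\partial\Lambda$, but the map $\Lambda\mapsto\bar\Lambda$ is many-to-one, so injectivity is lost and the hole configurations must still be enumerated, which reproduces the original problem recursively. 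Your second remedy (``treat the components of the complement one at a time, as in~\cite{LM98}'') simply defers to the reference whose corollary you are trying to prove; the nested/tree-like encoding of the boundary components that makes the total cost exponential rather than $e^{\Theta(N\log N)}$ is the actual content of that result. So the proposal is an accurate map of the terrain, but the summit is not reached: as written it establishes the bound only for $\Lambda$ with connected complement.
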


We are now ready to give the proof of Theorem~\ref{thm10162010}.

\begin{proof}[Proof of Theorem~\ref{thm10162010}]
We need only prove the estimate~\eqref{eq:10142010} as the bound~\eqref{eq:10152010} on the $k$-th moment can be deduced from~\eqref{eq:10142010} by (suitably) integrating over the $t$ variable. Fix $d \geq 3$, a side length $L \geq 2$, a vertex $v \in \Lambda_L$, and an integer $t \geq 1$. Let $\lambda'$ and $\alpha_d$ be the constants appearing in the statements of Proposition~\ref{eq:11182010} and Proposition~\ref{prop:10252010}. We also define $s_0 = t^{\frac{d-2}{2(d-1)}}$. Let $\beta_0 \in (1 , \infty)$, and $c \in (0 , 1)$ be constants depending only on the dimension whose value will be selected later in the argument. Set $ \lambda_0 := (1/4 \wedge c) \lambda'$ and fix $\lambda \in (0,\lambda_0)$ and $\beta \in (\beta_0,\infty)$.

By an application of Proposition~\ref{eq:11182010}, we see that it is sufficient for the proof of Theorem~\ref{thm10162010} to prove the inclusion
\begin{equation} \label{eq:13382010}
  \mathcal{E}_{4\lambda,v,  t^{1/2}} \cap  \mathcal{E}_{\lambda/(cs_0),v,1} \subseteq \left\{ \eta : \Zd \to \R \, : \, \mu^{\IV , \beta, \eta}_{\Lambda_L} \left( |\phi(v)| < t \right) \geq 1-  e^{- c \beta t^{1/2}} \right\} .
\end{equation}
Let us fix a realization of the random field $\eta \in \mathcal{E}_{4\lambda,v,  t^{1/2}} \cap  \mathcal{E}_{\lambda/(cs_0),v,1}$, and let $\phi : \Lambda_L \to \Z$ be a random surface distributed according to the random-field integer-valued Gaussian free field $\mu^{\IV , \beta, \eta}_{\Lambda_L}$. We denote by $D_+^0 \left( \phi \right)$ the connected component of $v$ in the set $\left\{ x \in \Lambda_L \, : \, \phi(x) \geq 1  \right\}$, and set $D_+^0 \left( \phi \right) = \emptyset$ if $\phi(v) \leq 0$.
We then define the set $D_+(\phi)$ to be the union of the set $D_+^0(\phi)$ and of all the finite connected components of $\Zd \setminus D_+^0(\phi)$. This procedure ensures that the sets $D_+(\phi)$ and $\Zd \setminus D_+(\phi)$ are connected and thus that the boundary $\partial D_+(\phi)$ is $\star$-connected (see~\cite[Lemma 2]{T13}), and will allow us to apply Proposition~\ref{eq:11182010}.
Symmetrically, we let $D_-^0 \left( \phi \right)$ be the connected component of $v$ in the set $\left\{ x \in \Lambda_L \, : \, \phi(x) \leq -1  \right\}$, set $D_-^0 \left( \phi \right) = \emptyset$ if $\phi(v) \geq 0$, and let $D_-(\phi)$ to be the union of the set $D_-^0(\phi)$ and of all the finite connected components of $\Zd \setminus D_-^0(\phi)$. We note that either $D_-(\phi)$ or $D_+(\phi)$ is empty, and both are empty if and only if $\phi(v) = 0$.

For convenience, we will assume that $D_+(\phi)$ is nonempty, as the other case follows from symmetry. We say that $\phi$ is $s$-steep if there exists a cutset of edges $\mathcal{S}$ that separates $v$ from $\partial D_+(\phi)$ such that $\nabla \phi(e) \geq  s$ for each  $e \in \mathcal{S}$. We first claim that, for some constant $c_d $ depending on dimension only,
\begin{equation}\label{eq:phi v large alternative}
   \text{$|\phi(v)| \geq t$ implies that either $|\partial D_+(\phi)| > t^{1/2}$ or $\phi$ is $c_d s_0$-steep.}
\end{equation} Indeed, assume that $|\partial D_+(\phi)| \leq t^{1/2}$ and $\phi$ is not $c_d s_0$-steep and let us show that $\phi(v) \leq t$. Let $c_d>0$ be sufficiently small for the following arguments. By standard isoperimetric estimates on $\mathbb{Z}^d$ (see~\cite[Lemma 3.1]{magazinov2020concentration}, following~\cite{bollobas1991edge}), there are at most $t^{\frac{d}{2(d-1)}}/c_{d}$ edges with at least one endpoint in $D_{+}(\phi)$. Furthermore, standard duality between paths and cutsets tells us that, whenever $\phi$ is not $c_d s_0$-steep, there exists a path from $\partial D_+(\phi)$ to $v$ all of whose edges $e$ satisfy $|\nabla\phi(e)| < c_d s_0$ and have at least one endpoint in $\partial D_+(\phi)$. Let $\gamma_{wv}$ be such a path, whose starting point is some $w\in\partial D_+(\phi)$. Then
\[
\phi(v)= \phi(w) + \sum_{e \in \gamma_{wv}} \nabla \phi(e) \leq \phi(w) + c_d s_0  \cdot |\gamma_{wv}| \leq t, 
\]
where we used $\phi(w) \leq 0$ since $w\in\partial D_+(\phi)$ and the fact that all edges of $\gamma_{wv}$ have at least one endpoint in $D_+(\phi)$. This establishes~\eqref{eq:phi v large alternative}.

Thus, the inclusion~\eqref{eq:13382010} follows from showing that for any $\eta \in \mathcal{E}_{4\lambda,v, t^{1/2}} \cap  \mathcal{E}_{\lambda/(cs_0),v,1}$,
\begin{equation}\label{eq:two events to control}
\mu^{\IV , \beta, \eta}_{\Lambda_L} \left( |\partial D_+ \left( \phi \right)| > t^{1/2} \right) +  \mu^{\IV , \beta, \eta}_{\Lambda_L} \left( \phi \text{ is } c_ds_0\text{-steep} \right) \leq e^{- c\beta t^{1/2}}.
\end{equation}

We start the proof of~\eqref{eq:two events to control} by estimating the first term on its left-hand side. To this end we decompose the event $\{ |\partial D_+ \left( \phi \right)| > t^{1/2}\}$ according to the identity
\begin{equation} \label{eq:11042110}
   \mu^{\IV , \beta, \eta}_{\Lambda_L} \left( |\partial D_+ \left( \phi \right)| > t^{1/2} \right)  = \sum_{\substack{ D \subseteq \Lambda_L \\ |\partial D| > t^{1/2}}} \mu^{\IV , \beta, \eta}_{\Lambda_L} \left(  D_+ \left( \phi \right) = D \right),
\end{equation}
where the sum in the right-hand side is computed over all the bounded connected subsets $D$ containing the vertex $v$ such that $\partial D$ is $\star$-connected and has more than $t^{1/2}$ vertices. Let us fix such a set $D \subseteq \Lambda_L$. We have the identity
\begin{equation} \label{eq:10472110}
      \mu^{\IV , \beta, \eta}_{\Lambda_L} \left(D_+ \left( \phi \right) = D \right) = \frac{\sum_{\substack{\phi: \Lambda_L \to \Z \\ D_+\left( \phi \right) = D}} \exp \left( - \frac{\beta}{2} \left\| \nabla \phi\right\|_{L^2 \left( \Lambda_L^+ \right)}^2 + \beta \lambda \sum_{x \in \Lambda_L} \eta(x) \phi(x)  \right)}{\sum_{\phi: \Lambda_L \to \Z} \exp \left( - \frac{\beta}{2} \left\| \nabla \phi\right\|_{L^2 \left( \Lambda_L^+ \right)}^2 + \beta \lambda \sum_{x \in \Lambda_L} \eta(x) \phi(x)  \right)}.
\end{equation}
For each integer-valued $\phi : \Lambda_L \to \Z$  (with zero boundary values) such that $D_+ \left( \phi \right) = D$, we define $\tilde \phi$ according to the formula $\tilde\phi := \phi - \indc_D$. Using that, by the identity $D_+ \left( \phi \right) = D$, the map $\phi$ satisfies $\phi \geq 1$ on $D$ and $\phi \leq 0$ on $\partial D$, we have the inequality, for any edge $e \in E \left( \Lambda_L^+\right)$,
\begin{equation} \label{eq:18532010}
    \left( \nabla \tilde \phi (e) \right)^2 = \left( \nabla \phi (e) \right)^2 - \left( 2 \left| \nabla \phi(e) \right| - 1 \right) \left| \nabla \indc_D(e) \right| \leq \left( \nabla  \phi (e) \right)^2 - \left| \nabla \indc_D(e) \right|.
\end{equation}
Summing the inequality~\eqref{eq:18532010} over the edges $e \in E \left( \Lambda_L^+ \right)$ and using that the cardinality of the support of the mapping $\nabla \indc_D$ is at least $\left| \partial D \right|$ (since an edge belongs to the support of this map if and only if it has exactly one endpoint in $D$ and one endpoint in $\partial D$), we obtain
\begin{equation} \label{eq:10402110}
    \left\| \nabla \tilde \phi \right\|_{L^2 \left( \Lambda_L^+ \right)}^2 \leq \left\| \nabla \phi \right\|_{L^2 \left( \Lambda_L^+ \right)}^2 - \left| \partial D \right|.
\end{equation}
Combining the inequality~\eqref{eq:10402110} with the assumption $\eta \in \mathcal{E}_{4\lambda,v, t^{1/2}}$ (and noting that all sets we are interested in here have boundary of at least $t^{1/2}$ vertices), we obtain
\begin{align*}
     \frac 12\left\| \nabla \tilde \phi\right\|_{L^2 \left( \Lambda_L^+ \right)}^2 - \lambda \sum_{x \in \Lambda_L} \eta(x) \tilde \phi(x) & = \frac 12 \left\| \nabla \tilde \phi\right\|_{L^2 \left( \Lambda_L^+ \right)}^2 - \lambda \sum_{x \in \Lambda_L} \eta(x)  \phi(x) + \lambda \sum_{x \in D} \eta(x)   \\
     & \leq \frac 12 \left\| \nabla  \phi\right\|_{L^2 \left( \Lambda_L^+ \right)}^2 - \lambda \sum_{x \in \Lambda_L} \eta(x)  \phi(x) - \frac{\left| \partial D \right|}{2} + \lambda \sum_{x \in D} \eta(x) \\
     & \leq  \frac 12 \left\| \nabla  \phi\right\|_{L^2 \left( \Lambda_L^+ \right)}^2 - \lambda \sum_{x \in \Lambda_L} \eta(x)  \phi(x) - \frac{\left| \partial D \right|}{4}.
\end{align*}
Using the previous computation, we may write
\begin{align} \label{eq:10492110}
    \lefteqn{\sum_{\substack{\phi: \Lambda_L \to \Z \\ D_+\left( \phi \right) = D}} \exp \left( - \frac{\beta}{2} \left\| \nabla \phi\right\|_{\underline{L}^2 \left( \Lambda_L^+ \right)}^2 + \beta \lambda \sum_{x \in \Lambda_L} \eta(x) \phi(x)  \right)} \qquad & \\ & \leq \exp \left( - \frac{ \beta\left| \partial D \right|}{4} \right)  \sum_{\substack{\phi: \Lambda_L \to \Z \\ D_+\left( \phi \right) = D}} \exp \left( - \frac{\beta}{2} \left\| \nabla \tilde \phi\right\|_{\underline{L}^2 \left( \Lambda_L^+ \right)}^2 + \beta \lambda \sum_{x \in \Lambda_L} \eta(x) \tilde \phi(x)  \right) \notag \\
    & \leq \exp \left( - \frac{\beta \left| \partial D \right|}{4} \right) \sum_{\phi: \Lambda_L \to \Z} \exp \left( - \frac{\beta}{2} \left\| \nabla  \phi\right\|_{\underline{L}^2 \left( \Lambda_L^+ \right)}^2 + \beta \lambda \sum_{x \in \Lambda_L} \eta(x)  \phi(x)  \right). \notag
\end{align}
A combination of the identity~\eqref{eq:10472110} and the inequality~\eqref{eq:10492110} implies
\begin{equation}
\label{eq:11032110}
     \mu^{\IV , \beta, \eta}_{\Lambda_L} \left(D_+ \left( \phi \right) = D \right)\leq \exp \left(-\frac{\beta \left|\partial D \right|}{4} \right).
\end{equation}
Putting the estimate~\eqref{eq:11032110} back into~\eqref{eq:11042110}, using Proposition~\ref{prop:10252010}, the assumption $\beta \geq \beta_0$, selecting $\beta_0 $ large enough and $c$ small enough, and summing over sets with boundary of size at least $t^{1/2}$,  we deduce that
\begin{equation}\label{eq:estimate of first term}
    \mu^{\IV , \beta, \eta}_{\Lambda_L} \left( |\partial D_+ \left( \phi \right)| > t^{1/2} \right) \leq \sum_{ \substack{ D \subseteq \Lambda_L \\ |\partial D| > t^{1/2}}} e^{-\frac{\beta \left| \partial D \right|}{4}} \leq \sum_{N = \lceil t^{1/2} \rceil }^{L^d} \left| A_{N,v} \right| e^{-\frac{\beta N}{4}}  \leq \sum_{N = \lceil t^{1/2} \rceil}^{L^d} e^{\alpha_d N } e^{-\frac{\beta N}{4}}  \leq \tfrac12 e^{- c\beta t^{1/2}}.
\end{equation}

Next, we wish to analyze the second term on the left-hand side of~\eqref{eq:two events to control}, i.e., the probability that $\phi$ is $c_d s_0$-steep. The logic is very similar to the proof above, except that the set $D_+(\phi)$ will be replaced by a set where the gradient of $\phi$ is large. More specifically, for every $c_d s_0$-steep $\phi$, let $\bar{\mathcal{S}}$ be the outermost cutset of edges separating $v$ from $\partial D_+(\phi)$ on which $\nabla \phi(e) \geq c_d s_0$ for every edge $e$; this is well defined, e.g., using an exploration procedure from the boundary of $D_+(\phi)$. We then define $\bar{I}$ to be the set of vertices in the interior of $\bar{\mathcal{S}}$ -- i.e., the vertices $u$ such that every path from $u$ to $\partial D_+(\phi)$ must include at least one edge in $\bar{\mathcal{S}}$. If $\phi$ is not $c_d s_0$-steep, set $\bar{I}$ to be empty. Thus,
\begin{equation} \label{eq:DecompI}
 \mu^{\IV , \beta, \eta}_{\Lambda_L} \left( \phi \text{ is } c_d s_0\text{-steep} \right) =  \mu^{\IV , \beta, \eta}_{\Lambda_L} \left( \bar{I} \neq \emptyset \right) = \sum_{ I \subseteq \Lambda_L}  \mu^{\IV , \beta, \eta}_{\Lambda_L} \left(  \bar{I} = I \right),
\end{equation}
where the sum is, again, over the connected subsets of $\Lambda_L$ containing $v$ whose complement is also connected.  Repeating the procedure used above, we define $\hat{\phi} = \phi - \lceil c_d s_0 \rceil \cdot \indc_{I}$, and observe that
\begin{equation} \label{eq:SubtractingLargeBound}
    \left( \nabla \hat \phi (e) \right)^2 = \left( \nabla \phi (e) \right)^2 - \lceil c_d s_0 \rceil \left( 2 \nabla \phi(e)  -  \lceil c_d s_0 \rceil \right)  |\nabla \indc_{I}|  \leq \left( \nabla  \phi (e) \right)^2 - \lceil c_d s_0 \rceil^2 |\nabla \indc_{I}|,
\end{equation}
where we observe that every edge in the support of $|\nabla \indc_{I}|$ must be an edge of the cutset $\bar{\mathcal{S}}$, and hence $\nabla \phi(e)$ will be at least $\lceil c_d s_0\rceil$, granting the final inequality. Summing the inequality~\eqref{eq:SubtractingLargeBound} over all edges, using the assumption $\eta \in \mathcal{E}_{\lambda/(cs_0), v,1}$ and choosing the constant $c$ small enough (e.g., smaller than $c_d/4$), we have that
\begin{align*}
     \frac 12\left\| \nabla \hat \phi\right\|_{L^2 \left( \Lambda_L^+ \right)}^2 - \lambda \sum_{x \in \Lambda_L} \eta(x) \hat \phi(x) & = \frac 12 \left\| \nabla \hat \phi\right\|_{L^2 \left( \Lambda_L^+ \right)}^2 - \lambda \sum_{x \in \Lambda_L} \eta(x)  \phi(x) + \lambda \lceil c_d s_0\rceil \sum_{x \in I} \eta(x)   \\
     & \leq \frac 12 \left\| \nabla  \phi\right\|_{L^2 \left( \Lambda_L^+ \right)}^2 - \lambda \sum_{x \in \Lambda_L} \eta(x)  \phi(x) - \lceil c_d s_0 \rceil^2 \left( \frac{  \left|\partial I\right|}{2} - \frac{\lambda}{\lceil c_d s_0 \rceil} \sum_{x \in I} \eta(x) \right) \\
     & \leq  \frac 12 \left\| \nabla  \phi\right\|_{L^2 \left( \Lambda_L^+ \right)}^2 - \lambda \sum_{x \in \Lambda_L} \eta(x)  \phi(x) - \frac{c_d^2 s_0^2}{4}  \cdot |\partial I|.
\end{align*}
Repeating the earlier logic, we deduce that
\[
  \mu^{\IV , \beta, \eta}_{\Lambda_L} \left(  \bar{I} = I \right) \leq \exp \left(-\frac{\beta c_d^2 s_0^2  \cdot |\partial I|}{4} \right).
\]
Applying Proposition~\ref{prop:10252010} and selecting $\beta_0$ large enough and $c$ small enough yields
\begin{equation}\label{eq:estimate of second term}
 \mu^{\IV , \beta, \eta}_{\Lambda_L} \left( \phi \text{ is } c_d s_0\text{-steep}\right) \leq \tfrac12 e^{-  c \beta t^{1/2}},
\end{equation}
where we used the definition of $s_0$ and the fact that $d\ge 3$. Plugging~\eqref{eq:estimate of first term} and~\eqref{eq:estimate of second term} into~\eqref{eq:two events to control} completes the proof.
\end{proof}

\section{Discussion and open questions}\label{sec:discussion and open questions}

In this section we provide further discussion and highlight several research directions.

\subsection{The integer-valued random-field Gaussian free field and the random-phase sine-Gordon model.}
Theorem~\ref{prop3.100910disc} determines that the gradient of the integer-valued random-field Gaussian free field delocalizes in dimensions $d=1,2$ and localizes in dimensions $d\ge 3$, at all temperatures (including zero temperature) and all positive disorder strengths $\lambda$. Our results for the height fluctuations are less complete: Theorem~\ref{prop3.1009100bisdisc} proves delocalization in dimensions $d=1,2$, again at all temperatures and positive disorder strengths, while Theorem~\ref{thm10162010} establishes localization in dimensions $d\ge3$ at low temperature and weak disorder (unlike the real-valued case, which delocalizes in dimensions $d=3,4$; see Theorem~\ref{prop3.1009100bis}). The height fluctuations in dimensions $d\ge 3$ in the high temperature or strong disorder regimes remain unclear. We emphasize that delocalization for some $d\ge 3$, temperature and disorder strength would constitute a \emph{roughening transition} from the localized behavior proved in Theorem~\ref{thm10162010}, and would thus be of significant interest. Unlike the non-disordered case, where roughening transitions are familiar only in two dimensions~\cite{FrSp, frohlich1981kosterlitz, L20, V06}, we conjecture that such a transition indeed takes place for the three-dimensional integer-valued random-field Gaussian free field, with delocalization occurring at least in the low-temperature and strong disorder regime. We further conjecture that no such transition occurs in dimensions $d\ge 5$, where we expect the surface to remain localized at all temperatures and disorder strengths (as in the real-valued case; see Theorem~\ref{prop3.1009100bis}). These conjectures are supported by a connection of the model with the random-phase sine-Gordon model, on which we elaborate next. We tend to think that a roughening transition also occurs in the intermediate four-dimensional case.

The \emph{sine-Gordon model} is a model of real-valued surfaces $\psi$ whose Hamiltonian on a domain $\Lambda$ with given boundary conditions takes the form
\begin{equation}\label{eq:sine-Gordon Hamiltonian}
\sum_{e\in E(\Lambda^+)} (\nabla \psi (e))^2 -  z\sum_{v\in \Lambda} \cos(2\pi(\psi(v) - r(v))),
\end{equation}
where the $(r(v))_{v \in \Lambda}$ are given elements of the torus $\R/\Z$ and $z\ge 0$ is a given activity parameter. In the limit $z\to\infty$, configurations $\psi$ are restricted to satisfying $\psi_v\in\Z + r(v)$ at every vertex, and the effective Hamiltonian on this restricted configuration space consists only of the first sum in~\eqref{eq:sine-Gordon Hamiltonian}. The case where the $(r(v))$ are (quenched) random is known as the \emph{random-phase sine-Gordon model}.

The random-phase sine-Gordon model has received much attention in the physics literature (see, e.g.,~\cite{CO82, HWA94, LDS07, RLDS12,TD90}) and the following behavior was predicted for the heights $\psi$ in the $z\to\infty$ limit when the $(r(v))$ are uniform and independent: In two dimensions, on a box $\Lambda_L$ with zero boundary conditions, the heights are predicted to delocalize with $\log L$ variance at high temperature (rough phase), but with $\log^2 L$ variance at low temperature (super-rough phase); thus the fluctuations of the model are expected to \emph{decrease} as the temperature rises! It is further predicted that the heights delocalize with logarithmic variance in three dimensions and are localized when $d\ge 5$ (see e.g.~\cite{GLD95, N90SC, OS95, VF84}). These predictions appear to be open in the mathematical literature, apart from the following recent result of Garban and Sep\'ulveda~\cite{GS20} on the sine-Gordon model: At sufficiently high temperature, for any deterministic choice of $(r(v))$ and any activity $z\in [0,\infty]$, the heights $\psi$ on a two-dimensional box $\Lambda_L$ with zero boundary conditions delocalize with variance at least $\log L$.

Let us now describe the connection between the random-field integer-valued Gaussian free field and the random-phase sine-Gordon model. To this end, observe that the Hamiltonian of the random-field integer-valued Gaussian free field can be written, up to the addition of a constant factor depending only on $d, \lambda$ and the quenched disorder $\eta$, as
\begin{equation}\label{eq:Hamiltonian RFIVGFF}
\frac{1}{2}\sum_{e\in E(\Lambda^+)} (\nabla \phi(e) - \lambda \nabla u_{\Lambda, \eta}(e))^2
\end{equation}
where we recall that $u_{\Lambda, \eta}$, the ground state of the real-valued random-field Gaussian free field, is defined in~\eqref{def.ueta}. Let $\phi$ be sampled from the integer-valued random-field Gaussian free field and define $\psi := \phi - \lambda u_{\Lambda, \eta}$. Observe that $\psi(v)\in\Z + r(v)$ at every vertex where $r(v) := - \proj \lambda u_{\Lambda,\eta}(v)$ and where $\proj$ denotes the canonical projection from $\R$ to $\R/\Z$. It follows that $\psi$ is distributed as the random-phase sine-Gordon model in the $z\to\infty$ limit with this choice of $r(v)$ (noting that $r$ is a function of the quenched disorder $\eta$).

The distribution of the vector $r$ in the above connection is not uniform and independent, though for large $\lambda$ we expect it to be `somewhat close' to such a distribution, at least when the $(\eta(x))$ are independent standard Gaussian random variables. It is thus suggested that, for large $\lambda$, the distribution of $\psi$ is close to that of the random-phase sine-Gordon model with uniform and independent $r$. This supports the above conjectures regarding the height fluctuations of the integer-valued random-field Gaussian free field in the strong disorder regime.

Further support to the possibility of a roughening transition in dimensions $d=3,4$ is lent by the following observation, which does not rely on the predictions for the behavior of the random-phase sine-Gordon model. Suppose the Hamiltonian~\eqref{eq:Hamiltonian RFIVGFF} is modified by replacing the function $u_{\Lambda,\eta}$ with the function $v_{\Lambda, \eta}:=u_{\Lambda, \eta} + \frac{1}{\lambda} p$ where $p:\Lambda^+\to\R$ is another quenched disorder, consisting of independent random variables which are uniform on $[-1/2, 1/2]$, independently of $\eta$. We claim that the function $\varphi:\Lambda^+\to\Z$ sampled from this modified Hamiltonian is delocalized in dimensions $d=3,4$ at all temperatures and disorder strengths. However, we would like to think of $\frac{1}{\lambda} \nabla p$ as a `small perturbation' of $\nabla u_{\Lambda, \eta}$ (noting that~\eqref{eq:Hamiltonian RFIVGFF} depends on $u_{\Lambda, \eta}$ only through its gradient) and this is more reasonable at strong disorder, when $\lambda$ is large, since $\nabla u_{\Lambda,\eta}$ is localized in dimensions $d=3,4$ (Proposition~\ref{prop.propueta}); thus, consideration of the modified Hamiltonian lends support to delocalization of the random-field integer-valued Gaussian free field in dimensions $d=3,4$ in the strong disorder regime. The main observation leading to the delocalization claim is that the random phase $r(v) := -\lambda \proj v_{\Lambda,\eta}(v) = -p(v)-\lambda \proj u_{\Lambda,\eta}(v)$ is distributed uniformly on the torus $\R/\Z$, independently between vertices, and that this remains its distribution even after conditioning on $\eta$ (i.e., for every fixed realization of $u_{\Lambda, \eta}$, the addition of $p$ to the projection uniformizes the random phase $r$). It follows, as in the above discussion, that the integer-valued surface $\varphi$ which follows the modified Hamiltonian is a sum of two \emph{independent} contributions: The integer part of $\lambda u_{\Lambda,\eta}$ and a sample of the random-phase sine-Gordon model with uniform phases $r$. Thus, regardless of the fluctuations of the random-phase sine-Gordon model, the fluctuations of $\varphi$ are at least as large as the fluctuations of the integer part of $\lambda u_{\Lambda,\eta}$ and thus $\varphi$ is delocalized in dimensions $d=3,4$.

\subsection{The real-valued random-field Gaussian free field and its relation with the membrane model}\label{sec:random-field Gaussian free field}

The simplest among the random-field $\nabla \phi$-models is the random-field Gaussian free field (when the interaction potential is $V = \frac{1}{2}x^2$) with the random field $\eta$ taken to be independent standard Gaussians. The Gaussian nature of this case makes it amenable to exact calculations and, as we will now discuss, relates it to the more familiar \emph{membrane model} (see~\cite{Sa2003}, and~\cite{Sch21} for background and recent results). In the physics literature~\cite[Section 5.1]{For91} such a model (in the continuum) is termed the \emph{random-rod model}. Precisely, for a finite $\Lambda\subset\Z^d$ and independent standard Gaussians $\eta:\Lambda\to\R$, the Hamiltonian $H_\Lambda^{\eta,\mathrm{GFF}}$ of the random-field Gaussian free field associates to each $\phi:\Lambda^+\to\R$ the energy
\begin{equation} \label{eq:101071608bis}
    H_\Lambda^{\eta,\mathrm{GFF}} \left( \phi \right) := \sum_{e\in E(\Lambda^+)} (\nabla\phi(e))^2 - \lambda \sum_{x \in \Lambda} \eta(x) \phi(x).
\end{equation}
For simplicity, we study the model with zero boundary conditions, i.e., $\phi\equiv 0$ on $\partial\Lambda$ and $\lambda = 1$. Recall the function $u_{\Lambda, \eta}$ defined by~\eqref{def.ueta} and note that it is precisely the ground state of random-field Gaussian free field (i.e., the minimizer of the Hamiltonian~\eqref{eq:101071608bis}). Moreover, the quadratic nature of the Hamiltonian implies that when $\varphi$ is sampled with the Hamiltonian~\eqref{eq:101071608bis} at any positive temperature then the difference $\varphi - u_{\Lambda, \eta}$ is distributed as a Gaussian free field (with zero external field) at that temperature, independently of $\eta$. Thus the study of the random-field Gaussian free field decouples to the separate studies of the ground state $u_{\Lambda, \eta}$ and of the thermal fluctuations given by the Gaussian free field. It turns out that the fluctuations of $u_{\Lambda, \eta}$ dominate the behavior, so we focus solely on it in the following discussion.

The ground state $u_{\Lambda, \eta}$ has a joint Gaussian distribution, with covariance matrix given by
\begin{equation} \label{eq:16111910}
    \forall x,y \in \Lambda, \hspace{3mm} \E \left[ u_{\Lambda, \eta} (x)  u_{\Lambda, \eta} (y)\right] = \sum_{z \in \Lambda} G_{\Lambda} (x, z) G_{\Lambda} (z, y) =: v(x,y),
\end{equation}
and it is simple to check that the right-hand side solves the discrete biharmonic equation
\begin{equation} \label{eq:16121910bis}
    \left\{ \begin{aligned}
    \Delta^2 v(\cdot , y) = \delta_y &~\mbox{in}~\Lambda, \\
    v(\cdot , y) = 0 &~\mbox{on}~ \partial \Lambda, \\
     \Delta  v(\cdot , y) = 0 &~\mbox{on}~ \partial \Lambda.
    \end{aligned} \right.
\end{equation}
This covariance function coincides with that of the membrane model, and we conclude that $u_{\Lambda, \eta}$ is equal to it in distribution (to be more precise, equation~\eqref{eq:16121910bis} uses one of the two standard normalizations for the membrane model. In the other normalization, the requirement $\Delta  v(\cdot , y) = 0$ on $\partial \Lambda$ is replaced by the requirement that $v(\cdot,y)=0$ on a second boundary layer. See~\cite[Section 2]{K09} for a discussion of these two possibilities). This observation allows to directly apply the established results for the membrane model to the study of the random-field Gaussian free field, including precise fluctuation results, behavior of the maximum and entropic repulsion~\cite{AKW16, BCK17, CD08, K07, K09}. Of special interest is the scaling limit of the membrane model: The limit, a continuum membrane model, was identified in~\cite{cipriani2019scaling} for the other standard boundary condition. For the boundary condition~\eqref{eq:16121910bis}, by taking suitable limits of the lattice Green's function, the following scaling limit result is expected in dimensions $d\ge 1$: For smooth functions $f$ compactly supported in the continuum box~$\left( -1 , 1\right)^d$, 
\begin{equation}\label{eq:scaling limit random field GFF}
    L^{-\frac d2 -2} \sum_{x \in \Lambda_L} u_{\Lambda_L, \eta}(x) f \left( \frac{x}{L} \right) \overset{\mathrm{(dist)}}{\underset{L \to \infty}{\longrightarrow}} N \left( 0, \int_{\left[ -1 , 1 \right]^d} f(y) w_f (y) \, dy \right),
\end{equation}
where the mapping $w_f$ is the solution of the continuous biharmonic equation
\begin{equation*}
    \left\{ \begin{aligned}
    \Delta^2 w_f &= f ~\mbox{in}~\left[ -1 , 1 \right]^d, \\
     w_f &= 0 ~\mbox{on}~ \partial \left[ -1 , 1 \right]^d, \\
     \Delta w_f &= 0 ~\mbox{on}~ \partial \left[ -1 , 1 \right]^d. \\
    \end{aligned} \right.
\end{equation*}

\subsection{Thermodynamic and scaling limits of the random-field $\nabla \phi$-model.} Theorem~\ref{prop3.1009100bis} establishes quantitative estimates on the height of the random-field $\nabla \phi$-model. A natural direction for further research is the study of the infinite-volume limits of the random surface and its gradient. As mentioned, Cotar and K\"ulske~\cite{CK12} proved the existence of translation-covariant \emph{gradient} Gibbs measures, and their uniqueness for a given tilt~\cite{CK15}, in dimensions $d\ge3$. The first author~\cite{Da21} proved the convergence along the thermodynamic limit for the random surface in dimensions $d\ge 5$ and for its gradient in dimensions $d\ge 4$; convergence of the gradient is also expected in dimension $d=3$, but remains unproven (the existence and uniqueness of infinite-volume translation-covariant gradient Gibbs measure with a specified tilt was established in~\cite{CK12, CK15} in any dimension $d \geq 3$).

Beyond infinite-volume Gibbs measures, it is also natural to study the scaling limit of the random-field $\nabla \phi$-model. We expect universality of the scaling limit, so that the convergence to the continuum membrane model discussed in~\eqref{eq:scaling limit random field GFF} for the random-field Gaussian free field should continue to hold for the uniformly convex random-field random surface model~\eqref{eq:defmuLeta}.

\medskip

\subsection{Relation between between the ground state of the $\nabla \phi$ model and its expectation}\label{sec:thermal expectation for grad phi model}

If we let $\phi$ be a random-field Gaussian free field with quenched disorder $\eta$, then we have the identity, for any side length $L \geq 2$ and any vertex $x \in \Lambda_L$, \begin{equation*}
    \left\langle \phi(x) \right\rangle_{\mu_{\Lambda_L}^\eta} = u_{\Lambda_L, \eta}(x),
\end{equation*}
that is, the thermal expectation of the surface is equal to the ground-state of the Hamiltonian associated with the random-field Gaussian free field. We mention the following question: What is the relation, if any, between the ground state of the $\nabla \phi$-model and its thermal expectation?

\medskip

\subsection{The random-field $\nabla \phi$-model with general potentials $V$}\label{sec:general potentials discussion} To what extent are the fluctuations of the random-field $\nabla \phi$-model affected by the specific choice of the potential $V$? Theorem~\ref{prop3.100910} and Theorem~\ref{prop3.1009100bis} show that the order of magnitude in $L$ of the gradient and height fluctuations does not depend on $V$, as long as the second derivative of $V$ is uniformly bounded from zero and infinity (assumption~\eqref{eq:V ellipticity}). How generic is this phenomenon? In the \emph{non-disordered} case, universality of the order of magnitude has been shown for a wide class of potentials, including the family $V(x) = |x|^p$ with $p>1$ (see~\cite{magazinov2020concentration,MP15} and references within). We observe here that, in the disordered case, such wide universality fails in one dimension. The situation in higher dimensions is not clear.

Let $d=1$ and consider the model~\eqref{eq:real-valued model with temperature} with potential $V(x) = |x|^p$ when the disorder $\eta$ consists of, say, independent standard Gaussian random variables. We take $p>1$, as the model is not well defined when $p\le 1$ (its partition function on $\Lambda_L$ is infinite with positive probability). When $p=2$, the ground state $u_{\Lambda_L,\eta}$ satisfies $\E \left[u_{\Lambda_L,\eta}(x)^2 \right] \leq C L^3$ for all $x\in\Lambda_L$ by Proposition~\ref{prop.propueta}. Now, for general $p>1$, observe that, on the one hand, any random function $u_\eta:\Lambda_L^+\to\R$ with $u_\eta(x)=0$ on $\partial\Lambda_L$ which  satisfies
\begin{equation}\label{eq:disordered ground state bound}
    \E \left[u_{\eta}(x)^2 \right] \leq C L^3
\end{equation}
for all $x\in\Lambda_L$ also satisfies the energy estimate
\begin{equation}
\begin{split}
   \E\left[H_{\Lambda_L}^\eta(u_\eta)\right] &= \E\left[\sum_{e\in E(\Lambda_L^+)} |\nabla u_\eta(e)|^p - \lambda \sum_{x \in \Lambda_L} \eta(x) u_\eta(x)\right]\\
   &\ge -\lambda\, \E\left[\sum_{x \in \Lambda_L} \eta(x) u_\eta(x)\right]\ge -\lambda \sum_{x\in\Lambda_L}\sqrt{\E\left[\eta(x)^2\right] \E\left[u_\eta(x)^2\right]}\ge -C \lambda L^{5/2}.
\end{split}
\end{equation}
On the other hand, for $a>0$, the function $v_\eta:\Lambda_L^+\to\R$ defined by $v_\eta(x):=0$ on $\partial\Lambda_L$ and $v_\eta(x):=\sign(\hat\eta)L^a$ for $x\in\Lambda_L$, where $\hat\eta:=\frac{1}{|\Lambda_L|}\sum_{x\in\Lambda_L} \eta(x)$, satisfies
\begin{equation}
   H_{\Lambda_L}^\eta(v_\eta) = \sum_{e\in E(\Lambda_L^+)} |\nabla v_\eta(e)|^p - \lambda \sum_{x \in \Lambda_L} \eta(x) v_\eta(x) = 2L^{ap} - \lambda |\Lambda_L|\cdot|\hat\eta|\cdot L^{a}.
\end{equation}
Since $|\hat\eta|$ is of order $L^{-1/2}$ with high probability, we conclude that $H_{\Lambda_L}^\eta(v_\eta) \le -c \lambda L^{a+1/2}$ with high probability when $p<1 + \frac{1}{2a}$ and $L$ is sufficiently large. Thus, taking any $a>2$, we see that, at least when $1<p<1+\frac{1}{2a}$, the ground state of the disordered model cannot satisfy the bound~\eqref{eq:disordered ground state bound} for large $L$; it thus exhibits different height fluctuations than the case $p=2$.

We also point out that the fluctuations of the gradient of the ground state in dimensions $d\ge 3$ are of order one for the class of potentials $V(x)=|x|^p$, $p>1$. Indeed, the ground state in the domain $\Lambda$ with zero boundary conditions minimizes (using~\eqref{def.ueta},  discrete integration by parts~\eqref{eq:discrete integration by parts} and H\"older's inequality with $\frac{1}{p}+\frac{1}{q}=1$)
\begin{equation}
\begin{split}
    H_\Lambda^\eta \left( \phi \right) &= \sum_{e\in E(\Lambda^+)} |\nabla\phi(e)|^p - \lambda \sum_{x \in \Lambda} \eta(x) \phi(x) = \sum_{e\in E(\Lambda^+)} |\nabla\phi(e)|^p - \lambda\nabla u_{\Lambda, \eta}(e) \nabla\phi(e)\\
    &\ge \|\nabla\phi\|_{L^p(\Lambda^+)}^p - \lambda\|\nabla u_{\Lambda,\eta}\|_{L^q(\Lambda^+)} \|\nabla\phi\|_{L^p(\Lambda^+)},
\end{split}
\end{equation}
where we extended the norm notation~\eqref{eq:11151509} to $L^p$ in the standard way. Since the energy of the zero function is zero, we conclude that the ground state $\phi_{\Lambda,\eta}$ satisfies
\begin{equation}
    \|\nabla\phi_{\Lambda,\eta}\|_{L^p(\Lambda^+)}^p\le \lambda^q \|\nabla u_{\Lambda,\eta}\|_{L^q(\Lambda^+)}^q.
\end{equation}
Using this inequality with the proof of Proposition~\ref{prop.propueta}, we see that $\sup_L \E\left[\|\nabla\phi_{\Lambda_L,\eta}\|_{\underline{L}^p(\Lambda_L^+)}^p\right]<\infty$ in dimensions $d\ge 3$ (extending again the norm notation~\eqref{eq:11151509}), for a class of disorder distributions which includes the case that $\eta$ consists of independent standard Gaussian random variables.

\medskip

\subsection{Dynamical random-field random surfaces}
One may naturally form a dynamics on random-field random surfaces by allowing the random field to evolve in time. When the random field consists of independent standard Gaussians, it is natural to let it evolve via independent Ornstein-Uhlenbeck processes. Note that this gives, in particular, a natural dynamics for the membrane model using its representation as the ground state of the random-field Gaussian free field (see Section~\ref{sec:random-field Gaussian free field}).

\subsection{The random-field membrane model.} \label{RFMM} In this paper, we considered the effect that a random field can have on the fluctuations of random surfaces of the $\nabla\phi$ model. The same type of disorder may also be applied to other random surface models. Here, we briefly discuss its effect on the membrane model. The membrane model (in the absence of disorder) is known to fluctuate more strongly than the $\nabla\phi$ model, delocalizing in dimensions $1\le d\le 4$ and localizing in dimensions $d\ge 5$. We will now see that the added disorder causes the surface to delocalize in all dimensions $d\le 8$.

In this setting, given $L \geq 0$ and an external field $\eta : \Lambda_L \to \R$, the disordered Hamiltonian of a finite-volume $\phi : \Lambda_L^+ \to \R$ normalized to $0$ on $\partial \Lambda_L$ is given by the formula
\begin{equation} \label{def.hetaDelta}
    H^{\eta, \Delta}_{\Lambda_L}(\phi) := \frac 12 \left\| \Delta \phi\right\|_{L^2 \left( \Lambda_L \right)}^2 - \sum_{x \in \Lambda_L} \eta(x) \phi(x).
\end{equation}
The random-field membrane model is the probability distribution
\begin{equation} \label{eq:11351710}
    \mu^{\eta, \Delta}_{\Lambda_L}(d \phi) := \frac{1}{Z^{\eta, \Delta}_{\Lambda_L}} \exp\left( - H^{\eta, \Delta}_{\Lambda_L}(\phi) \right),
\end{equation}
where the partition function $Z^{\eta, \Delta}_{\Lambda_L}$ is the constant which makes the measure~\eqref{eq:11351710} a probability distribution. Let us first consider the ground state of the model, that is, the interface $v_{\Lambda, \eta} : \Lambda_L \to \R$ which minimizes the variational problem
\begin{equation*}
    \inf_{\substack{ w : \Zd \to \R \\ w \equiv 0 \,\mathrm{on}\, \partial \Lambda_L}} ~  \frac 12\left\| \Delta w \right\|_{L^2 \left( \Lambda_L \right)}^2 - \sum_{x \in \Lambda_L} \eta(x) w(x).
\end{equation*}
Note that it can can be equivalently defined as the solution of the biharmonic equation
\begin{equation*}
    \left\{ \begin{aligned}
    \Delta^2 v_{\Lambda_L , \eta} = \eta &~\mbox{in}~\Lambda_L, \\
   v_{\Lambda_L , \eta} = 0 &~\mbox{on}~ \partial \Lambda_L, \\
     \Delta  v_{\Lambda_L , \eta} = 0 &~\mbox{on}~ \partial \Lambda_L.
    \end{aligned} \right.
\end{equation*}
In that case, the interface $v_{\Lambda_L, \eta}$ is a linear functional of $\eta$, and is given by the explicit formula
\begin{equation*}
    v_{\Lambda_L, \eta}(x) = \sum_{y , z \in \Lambda_L} G_{\Lambda_L}(x , y)  G_{\Lambda_L}(y , z) \eta(z)
\end{equation*}
(analogously to the discussion in Section~\ref{sec:random-field Gaussian free field}, the ground state $v_{\Lambda_L, \eta}$ has the distribution of a non-disordered random surface model with Hamiltonian proportional to $\left\| \Delta^2 \phi\right\|_{L^2 \left( \Lambda_L \right)}^2$. Such a construction may also be applied to higher powers of the Laplacian, and their combination with the gradient operator).
Using the upper and lower bounds on the Green's function stated in~\eqref{eq:10281908} and~\eqref{eq:10291908}, we obtain that the ground state satisfies the following estimates
 \begin{alignat*}{3}
    &1 \leq d \leq 7:\qquad&&&c L^{8-d} \leq~ &\E \left[  v_{\Lambda_L, \eta}(0)^2 \right] \leq C L^{8-d},\\
    &d=8:&&&c \ln L \leq~ &\E \left[ v_{\Lambda_L, \eta}(0)^2 \right] \leq C \ln L,\\
    &d\ge9:&&& c \leq~ &\E \left[ v_{\Lambda_L, \eta}(0)^2 \right] \leq C.
\end{alignat*}
It is thus delocalized in dimensions $d \leq 8$ and localized in dimensions $d \geq 9.$

To study the probability distribution~\eqref{eq:11351710}, let us observe that if $\phi : \Lambda_L^+ \to \R$ is a random surface distributed according to~\eqref{eq:11351710}, then $\psi := \phi - v_{\Lambda_L, \eta}$ is a membrane model with external field set to $0$. The field $\psi$ has thus a multivariate Gaussian distribution, its mean vector is equal to $0$ and its covariance matrix is given by, for any $x , y \in \Lambda_L$,
\begin{equation*}
    \mathrm{Cov} \left[ \psi(x) , \psi(y) \right] = \sum_{ z \in \Lambda_L} G_{\Lambda_L}(x , z)  G_{\Lambda_L}(y , z).
\end{equation*}
Two consequences can be deduced from this observation: First, the expectation of the random variable $\psi$ is equal to $0$, and second the fluctuations of $\psi$ at the center of the box can be explicitly quantified using the upper and lower bounds on the Green's function stated in~\eqref{eq:10281908} and~\eqref{eq:10291908}. We have, for any realization of the random field $\eta$,
\begin{alignat*}{3}
    &1 \leq d \leq 3:\qquad&&&c L^{3-d} \leq~ &\left\langle \psi(0)^2 \right\rangle_{\mu^{\eta, \Delta}_{\Lambda_L}} \leq C L^{3-d},\\
    &d=4:&&&c \ln L \leq~ &\left\langle \psi(0)^2 \right\rangle_{\mu^{\eta, \Delta}_{\Lambda_L}} \leq C \ln L,\\
    &d\ge5:&&& c \leq~ & \left\langle \psi(0)^2 \right\rangle_{\mu^{\eta, \Delta}_{\Lambda_L}}  \leq C.
\end{alignat*}
A combination of the two previous sets of estimates shows that the random-field membrane model satisfies the inequalities
 \begin{alignat*}{3}
    &1 \leq d \leq 7:\qquad&&&c L^{8-d} \leq~ &\E \left[  \left\langle \phi(0)^2 \right\rangle_{\mu^{\eta, \Delta}_{\Lambda_L}} \right] \leq C L^{8-d},\\
    &d=8:&&&c \ln L \leq~ &\E \left[ \left\langle \phi(0)^2 \right\rangle_{\mu^{\eta, \Delta}_{\Lambda_L}} \right] \leq C \ln L,\\
    &d\ge9:&&& c \leq~ &\E \left[ \left\langle \phi(0)^2 \right\rangle_{\mu^{\eta, \Delta}_{\Lambda_L}} \right] \leq C.
\end{alignat*}
The model is delocalized in dimensions $d \leq 8$ and localized in dimensions $d \geq 9$.

We finally mention that it would be possible to consider models with higher-order symmetries by, for instance, replacing the Laplacian in the definition~\eqref{def.hetaDelta} by one of its iteration. We expect that this modification would increase the critical dimension above which the system is localized and below which it is delocalized, and refer to~\cite[Section 9]{DHP20++} for a related discussion about the effect of higher-order symmetries on disordered spin systems with compact spin space.

\medskip

\subsection{Random-field Gaussian free field with $1$-dependent external field.} Theorem~\ref{prop3.1009100bis} states that, if the components of the random field $\eta$ are independent, then the random-field $\nabla \phi$-model is delocalized in dimensions $d \leq 4$ and localized in dimensions $d \geq 5$. One can thus raise the question whether the independence assumption on the random field can be relaxed without changing the critical dimension for localization. In this section, we show that the independence assumption cannot generally be relaxed to $1$-dependence (i.e., independence at distance $2$). Specifically, we present a model of a random-field random surface in which the external random field is $1$-dependent, which localizes in dimensions $d \geq 3$ and thus exhibits a different qualitative behaviour.

Let us fix an integer $L \geq 1$ and let $\zeta : \Lambda_L \to \R$ a discrete Gaussian free field in the box $\Lambda_L$ with Dirichlet boundary condition. We then define the external field $\eta$ by the formula, for any vertex $x \in \Zd$,
\begin{equation} \label{eq:18481610}
    \eta(x) := -\Delta \zeta(x).
\end{equation}
The law of the external field $\eta$ is Gaussian, let us verify that it is $1$-dependent. An explicit computation shows, for any points $x , z  \in \Lambda_L$,
\begin{equation} \label{eq:1610187}
    \E \left[ \eta(x) \zeta(z)  \right]  = \E \left[ -\Delta \zeta(x)  \zeta(z)  \right]  = -\Delta G_{\Lambda_L} \left( x , z \right)  = \indc_{\{ x = z\} }.
\end{equation}
Using the identity~\eqref{eq:1610187} and an explicit computation, we further deduce, for any $x , y \in \Lambda_L$
\begin{equation} \label{eq:18431610}
     \E \left[ \eta(x) \eta(y)  \right]  =  2d \indc_{\{x = y\}} - \sum_{z \sim y} \indc_{\{ x = z\}}. \notag
\end{equation}
If the vertices $x$ and $y$ are distinct and non-adjacent, then all the terms in the right-hand side are equal to $0$ and thus $\E \left[ \eta(x) \eta(y)  \right] =0$. Since the field $\eta$ is Gaussian, we deduce that the random variables $ \eta(x)$ and $\eta(y)$ are independent.

Let us now consider the random-field Gaussian free field $\phi$ with the external field $\eta$ given by~\eqref{eq:18481610}. Since the random surface $\phi - \zeta$ is a Gaussian free field with external field set to $0$, independently of $\zeta$, we have,
\begin{equation} \label{eq:19281610}
    \left\langle \left( \phi(0) - \zeta(0) \right)^2 \right\rangle_{\mu^{\eta, \mathrm{GFF}}_{\Lambda_L}} \approx \left\{ \begin{aligned}
     L &~\mbox{if}~ d=1, \\
     \ln L  &~\mbox{if}~ d=2, \\
    1  &~\mbox{if}~ d \geq 3,
    \end{aligned} \right.
\end{equation}
where $a\approx b$ is used here in the sense $c\cdot a\le b\le C\cdot a$.
From the inequality~\eqref{eq:19281610} and the fact that $\zeta$ is a Gaussian free field with Dirichlet boundary condition in the box $\Lambda_L$, we obtain
\begin{equation*}
    \E \left[ \left\langle\phi(0)^2 \right\rangle_{\mu^{\eta, \mathrm{GFF}}_{\Lambda_L}} \right] \approx \left\{ \begin{aligned}
     L &~\mbox{if}~ d=1, \\
     \ln L  &~\mbox{if}~ d=2, \\
    1  &~\mbox{if}~ d \geq 3.
    \end{aligned} \right.
\end{equation*}
The random surface is thus localized in dimensions $3$ and higher.

\appendix

\section{Nash--Aronson estimates for the Dirichlet problem} \label{appA}

In this section, we prove the Nash--Aronson estimate in finite volume stated in Proposition~\ref{prop.NashAronson}. The proof builds upon the infinite-volume result of~\cite[Appendix B]{GOS} stated below (which itself builds upon the infinite-volume and continuous estimate of Aronson~\cite{Ar}). We first introduce the infinite-volume heat kernel and state the Nash--Aronson estimate for discrete, time-dependent and uniformly elliptic environment of Giacomin--Olla--Spohn~\cite{GOS}.

\begin{definition} \label{def.defA1}
Let $s_0 \in \R$. For each continuous, time-dependent, uniformly elliptic environment $\a : [s_0 , \infty) \times E(\Zd) \to [c_- , c_+]$, each initial time $s \in [s_0 , \infty )$, and each point $y \in \Lambda_L$, we introduce the infinite-volume heat kernel $P_{\a, \infty}$ to be the solution of the parabolic equation
\begin{equation*}
    \left\{ \begin{aligned}
    \partial_t P_{\a, \infty} (t , x ; s  ,y) - \nabla \cdot \a \nabla P_{\a, \infty} (t , x ; s  ,y) &= 0 &&(t , x) \in (s , \infty) \times \Zd, \\
    P_{\a, \infty}(s , x ; s  ,y) &= \indc_{\{ x = y \}} &&x \in \Zd.
    \end{aligned} \right.
\end{equation*}
\end{definition}
The next proposition establishes lower and upper bounds on the map $P_{\a, \infty}.$

\begin{proposition}[Nash--Aronson estimates, Propositions B.3 and B.4 of~\cite{GOS}] \label{prop.NashAronsoninfintevol}
There exist constants $C,c$ depending on the dimension $d$ and the ellipticity parameters $c_-, c_+$ such that, for any pair of times $s , t \in (s_0 , \infty)$ with $t \geq s$ and any pair of points $x , y \in \Zd$,
\begin{equation} \label{eq:upboundNasinfvol}
     P_{\a, \infty} \left(t , x ; s  ,y \right) \leq  \frac{C}{1 \vee (t-s)^{\frac d2}} \exp \left( - \frac{c|x - y|}{ 1 \vee \sqrt{t- s}} \right).
\end{equation}
Under the additional assumption $|x - y| \leq \sqrt{t - s} $, one has the lower bound
\begin{equation} \label{eq:lowboundNasinfvol}
     P_{\a, \infty} \left(t , x ; s  ,y \right) \geq \frac{c}{1 \vee (t-s)^{\frac d2}}.
\end{equation}
\end{proposition}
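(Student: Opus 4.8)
\textbf{Proof proposal for Proposition~\ref{prop.NashAronsoninfintevol}.}

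The plan is to follow the classical route of Nash and Aronson, adapted to the discrete setting with a time-dependent uniformly elliptic environment, exactly as in~\cite[Appendix B]{GOS}. Since the statement is quoted from that reference, the cleanest presentation is to recall the main steps of that argument. First I would establish the \emph{on-diagonal upper bound} $P_{\a,\infty}(t,x;s,y)\le C(1\vee(t-s))^{-d/2}$ by a Nash-type iteration: differentiate the $\ell^2$-type quantity $E(t):=\sum_{x}P_{\a,\infty}(t,x;s,y)^2$ in time, use the equation together with the ellipticity bound $\a\ge c_-$ to get $\frac{d}{dt}E(t)\le -c_-\sum_e(\nabla P_{\a,\infty})^2$, and then invoke the discrete Nash inequality $\|f\|_{\ell^2}^{2+4/d}\le C\|\nabla f\|_{\ell^2}^2\|f\|_{\ell^1}^{4/d}$ (valid on $\Z^d$) together with the contraction $\sum_x P_{\a,\infty}(t,x;s,y)=1$ (summing the equation, as the lattice is conservative with no boundary) to close a differential inequality for $E(t)$. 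Integrating yields $E(t)\le C(t-s)^{-d/2}$ for $t-s\ge 1$, and combining with the semigroup property $P_{\a,\infty}(t,x;s,y)=\sum_z P_{\a,\infty}(t,x;r,z)P_{\a,\infty}(r,z;s,y)$ and Cauchy--Schwarz converts the $\ell^2$ decay into the pointwise bound $P_{\a,\infty}(t,x;s,y)\le C(1\vee(t-s))^{-d/2}$. For $0\le t-s\le 1$ the bound $P_{\a,\infty}\le 1$ (maximum principle) already suffices.

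Next I would upgrade to the \emph{Gaussian off-diagonal decay} $\exp(-c|x-y|/(1\vee\sqrt{t-s}))$ using the Davies--Gaffney / Aronson exponential-weight method: for a parameter vector $\xi\in\R^d$ consider the weighted function $P^\xi(t,x):=e^{\xi\cdot x}P_{\a,\infty}(t,x;s,y)$ (with $\xi\cdot x$ the lattice inner product), derive the energy inequality for $\sum_x P^\xi(t,x)^2$, and control the extra drift terms produced by the weight using $\a\le c_+$ and the elementary bound $|e^{\xi\cdot(x+e_j)}-e^{\xi\cdot x}|\le C|\xi|e^{|\xi|}e^{\xi\cdot x}$ for $|\xi|\le 1$. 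This gives $\sum_x e^{2\xi\cdot x}P_{\a,\infty}(t,x;s,y)^2\le C(t-s)^{-d/2}e^{C|\xi|^2(t-s)}e^{2\xi\cdot y}$; a second application of the semigroup property together with the same weighted estimate run from the $y$-side, and then optimizing over $|\xi|\le 1$ in the direction of $x-y$ (choosing $|\xi|\asymp\min\{1,|x-y|/(t-s)\}$), produces the claimed pointwise bound~\eqref{eq:upboundNasinfvol}. When $|x-y|\ge t-s$ the optimal $\xi$ saturates at $|\xi|\asymp 1$, giving the $\exp(-c|x-y|)$ regime; when $|x-y|\le t-s$ one recovers the purely on-diagonal bound, which is consistent with the form stated.

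For the \emph{lower bound}~\eqref{eq:lowboundNasinfvol} under $|x-y|\le\sqrt{t-s}$, I would use the standard Aronson chaining argument. First prove a near-diagonal lower bound: there exist $c,\theta>0$ with $P_{\a,\infty}(t,x;s,y)\ge c(t-s)^{-d/2}$ whenever $|x-y|\le\theta\sqrt{t-s}$. This follows by combining the total-mass conservation $\sum_x P_{\a,\infty}(t,x;s,y)=1$, the upper bound just proved (which forces most of the mass to sit within $O(\sqrt{t-s})$ of $y$), and a parabolic oscillation/Hölder estimate — itself a consequence of the Nash--Moser machinery and the upper bounds — to prevent the kernel from being too small at the center. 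Then, for general $|x-y|\le\sqrt{t-s}$, chain $O(1/\theta^2)$ applications of the semigroup property along a lattice path from $y$ to $x$ with steps of length $\asymp\theta\sqrt{(t-s)/K}$ over $K$ equal time increments, at each step using the near-diagonal lower bound; the product of $K$ factors each of size $c((t-s)/K)^{-d/2}$ times the intermediate volume factors telescopes to $c'(t-s)^{-d/2}$ with $c'$ depending only on $d,c_-,c_+$.

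The main obstacle is the oscillation (De Giorgi--Nash--Moser-type) estimate needed for the lower bound: establishing a quantitative parabolic Hölder continuity for solutions of $\partial_t u=\nabla\cdot\a\nabla u$ with merely measurable-in-time, uniformly elliptic $\a$ on $\Z^d$, uniformly in the environment and (crucially) with constants independent of the mesh at the relevant scales. This is the technical heart of~\cite[Appendix B]{GOS}; I would either cite it directly or reproduce the Moser iteration (local boundedness from the energy inequality plus the discrete Sobolev inequality) followed by the $\varepsilon$-regularity/measure-to-pointwise step. The upper-bound parts, by contrast, are robust and essentially formal once the discrete Nash inequality and the exponential-weight computation are in place. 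Since the statement is attributed to~\cite{GOS}, it is legitimate here simply to cite Propositions B.3 and B.4 of that paper; the above is the proof strategy one would reconstruct if a self-contained argument were required.
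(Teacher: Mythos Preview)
Your proposal is correct and matches the paper's treatment: the paper does not prove this proposition at all but simply quotes it from~\cite[Propositions B.3 and B.4]{GOS}, using it as a black box to derive the finite-volume version (Proposition~\ref{prop.NashAronson}). Your closing remark that one may legitimately cite the result is therefore exactly what the paper does, and the self-contained sketch you provide (Nash iteration for the on-diagonal bound, Davies exponential weights for the off-diagonal decay, and Aronson chaining plus parabolic H\"older regularity for the lower bound) is a faithful outline of the argument in~\cite[Appendix B]{GOS}.
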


\begin{remark}
The article of Giacomin--Olla--Spohn~\cite{GOS} only establishes the lower bound of the Nash--Aronson estimate in the on-diagonal case (i.e., under the assumption $|x - y| \leq \sqrt{t - s}$). While it would be possible to obtain off-diagonal lower bounds (in the case $|x - y| \geq \sqrt{t - s}$) by adapting the techniques of~\cite{Ar, delmotte1999parabolic} (written in either the continuous setting or the discrete setting with a static environment), they are not necessary in the article~\cite{GOS} or in the proof of Theorem~\ref{prop3.1009100bis}.
\end{remark}

\begin{remark}
Under the assumption $|x - y| \leq \sqrt{t - s}$, the ratio $|x - y| / (1 \vee \sqrt{t- s})$ is smaller than $1$ and the right-hand sides of~\eqref{eq:upboundNasinfvol} and~\eqref{eq:lowboundNasinfvol} are of comparable sizes. The estimates are thus sharp up to multiplicative constants. 
\end{remark}

The proof of Proposition~\ref{prop.NashAronsoninfintevol} given below relies on analytic arguments. We mention that a more probabilistic approach, relying on the introduction of the random-walk whose generator is the operator $- \nabla \cdot \a \nabla$ (following~\cite[Section 3.2]{GOS}) and on stopping time arguments, would yield the same result.

\begin{proof}[Proof of Proposition~\ref{prop.NashAronson}]

First let us note that by the change of variable $(t - s) \to (t-s)/c_-$, it is sufficient to prove the result when  $c_- = 1$. Let us fix an integer $L\ge 0$, let $ c_+ \in [1 , \infty)$ be an ellipticity constant, and let $s_0\in\R$. Let $\a : [s_0 , \infty) \times E(\Lambda_L^+) \to [1 , c_+]$ be a continuous time-dependent (uniformly elliptic) environment. For any $s \in [s_0 , \infty)$ and any vertex $y \in \Lambda_L$, we denote by $P_\a(\cdot , \cdot ; s, x)$ the solution of the parabolic equation~\eqref{eq:16041308}.
We extend the environment $\a$ to the space $[s_0 , \infty) \times E(\Zd)$ by setting $\a(t , e) = c_+$ for any pair $(t , e) \in [s_0 , \infty) \times \left( E(\Zd) \setminus E\left(\Lambda_L^+ \right) \right)$, and let $P_{\a , \infty}$ be the infinite volume heat kernel associated with the extended environment $\a$ as defined in Definition~\ref{def.defA1}. We prove the upper and lower bounds of Proposition~\ref{prop.NashAronson} separately. We will make use of the notation
\begin{equation*}
    \left\| P_\a \left( t , \cdot ; s , y \right) \right\|^2_{L^2 \left( \Lambda_L \right)} := \sum_{x \in \Lambda_L} P_\a \left( t , x ; s , y \right)^2 \hspace{5mm} \mbox{and} \hspace{5mm} \left\| P_\a \left( t , x ; s , \cdot \right) \right\|^2_{L^2 \left( \Lambda_L \right)} := \sum_{y \in \Lambda_L} P_\a \left( t , x ; s , y \right)^2
\end{equation*}
as well as, for any directed edge $e = (x , z) \in \vec{E} \left(\Lambda_L^+ \right)$,
$$\nabla P_\a \left( t , e ; s , y \right) = P_\a \left( t , x ; s , y \right) - P_\a \left( t , z ; s , y \right)$$
and, following the conventions of Section~\ref{Fctandevts},
\begin{equation*}
    \left\| \nabla P_\a \left( t , \cdot ; s , y \right) \right\|^2_{L^2 \left( \Lambda_L^+ \right)} = \sum_{e \in E \left( \Lambda_L^+ \right)} \left(\nabla P_\a \left( t , e ; s , y \right)\right)^2.
\end{equation*}

\smallskip

\textbf{Proof of the upper bound.}
We first note that the estimate~\eqref{Nash.supprop3.3} is equivalent to the two following inequalities: There exist constants $c, C$ depending on $d , c_+ $ such that
\begin{equation} \label{eq:13252909}
    \left\{ \begin{aligned}
    P_{\a}\left( t , x ; s , y   \right) &\leq \frac{C}{1 \vee (t-s)^{\frac d2}} \exp \left( - \frac{c|x - y|}{ 1 \vee \sqrt{t- s}} \right) &~\mbox{if}~ (t - s) \leq L^2, \\
    P_{\a}\left( t , x ; s , y   \right) &\leq \frac{C}{1 \vee (t - s)^\frac d2} \exp \left( - \frac{c(t - s)}{L^2} \right) &~\mbox{if}~ (t - s) \geq L^2.
    \end{aligned} \right.
\end{equation}
We first treat the case $(t - s) \leq L^2$. By the maximum principle for the parabolic operator $\partial_t - \nabla \cdot \a \nabla$, one has the estimate, for any $t, s \in [ s_0 , \infty)$ with $t \geq s$ and any $x , y \in \Lambda_L$,
\begin{equation} \label{eq:17202908}
    P_\a \left( t , x ; s , y   \right) \leq P_{\a , \infty} \left( t , x ; s , y \right).
\end{equation}
Combining the inequality~\eqref{eq:17202908} with Proposition~\ref{prop.NashAronsoninfintevol} yields the upper bound, for any $s , t \in (s_0 , \infty)$ with $t \geq s$ and any $x , y \in \Zd$,
\begin{equation} \label{eq:17342809}
    P_\a \left( t , x ; s , y   \right) \leq \frac{C}{1 \vee (t-s)^{\frac d2}} \exp \left( - \frac{c|x - y|}{ 1 \vee \sqrt{t- s}} \right).
\end{equation}
This is~\eqref{eq:13252909} in the case $t \leq L^2$. We now focus on the case $t \geq L^2$. To this end, we denote by $C_{\mathrm{Poinc}}$ the constant which appears in the Poincar\'e inequality, that is, the smallest constant which satisfies $\left\| u \right\|_{L^2 \left( \Lambda_\ell \right)}^2 \leq C_\mathrm{Poinc} \ell^2 \left\| \nabla u \right\|_{L^2 \left( \Lambda_\ell^+\right)}^2$ for any side length $\ell \in \N$ and any function $u : \Lambda_\ell^+ \to \R$ normalized to be $0$ on the boundary $\partial \Lambda_\ell$.

We first prove that the $L^1(\Lambda_L)$-norm of the heat kernel $P_\a$ decays exponentially fast in the ratio $t/L^2$. Specifically, we prove that there exists a constant $\bar{c}_1 < \infty$ depending only on $d$ and $c_+$ such that
\begin{equation} \label{eq:174419121}
    \sum_{x \in \Lambda_L} P_\a \left( t , x ; s , y \right) \leq e^{ -  \frac{\bar{c}_1 t}{ L^2}}.
\end{equation}
To prove the inequality~\eqref{eq:174419121}, we first note that, by the inequality~\eqref{eq:17342809}, there exists a constant $C_{1/2} < \infty$ depending only on $d$ and $c_+$ such that, for any $t , s \in (0 , \infty)$ satisfying $t-s \geq C_{1/2} L^2$ and any $y \in \Lambda_L$,
\begin{equation} \label{12property}
    \sum_{x \in \Lambda_L} P_\a \left( t , x ; s , y \right) \leq \frac{1}{2}.
\end{equation}
Using the convolution property for the heat kernel $P_\a$, we have the identity, for any $t \geq C_{1/2} L^2$ and any $x \in \Lambda_L$,
\begin{equation*}
    P_\a (t , x ; 0,0) = \sum_{y \in \Lambda_L}  P_\a \left(t , x ; t - C_{1/2} L^2 , y\right) P_\a \left(t - C_{1/2} L^2 , y ; 0 , 0 \right).
\end{equation*}
Summing the previous inequality over the vertices $x \in \Lambda_L$ and using the inequality~\eqref{12property}, we deduce that
\begin{equation*}
\sum_{x \in \Lambda_L} P_\a \left( t , x ; 0 , 0 \right)  \leq \frac{1}{2}  \sum_{y \in \Lambda_L} P_\a \left(t - C_{1/2} L^2 , y ; 0 , 0 \right).
\end{equation*}
We conclude the proof of~\eqref{eq:174419121} by iterating the previous inequality.

We then let $c_1 := \min \left( 1/C_\mathrm{Poinc}, 2\bar{c}_1 \right)$, and note that this constant depends only on the parameter~$d$.

Using that the heat kernel $P_\a$ solves the parabolic equation~\eqref{eq:16041308} and the discrete integration by parts~\eqref{eq:discrete integration by parts}, we have
\begin{align} \label{eq:14472909}
    \partial_t \left( e^{\frac{c_1(t-s)}{L^2}} \left\| P_\a \left( t , \cdot ; s , y \right) \right\|^2_{L^2 \left( \Lambda_L \right)} \right) & =  \frac{c_1 e^{\frac{c_1(t-s)}{L^2}}}{L^2}  \left\| P_\a \left( t , \cdot ; s , y \right) \right\|^2_{L^2 \left( \Lambda_L \right)} \\ & \quad - 2 e^{\frac{c_1(t-s)}{ L^2}}\sum_{e \in E \left(\Lambda_L^+ \right)} \a(t , e) \left( \nabla P_\a \left( t , e ; s , y \right) \right)^2. \notag
\end{align}
Using the lower bound $\a \geq 1$ on the environment, and the Poincar\'e inequality in the box $\Lambda_L$ (which can be applied since the mapping $x \mapsto P_\a(t , x ; s , y )$ is equal to $0$ on $\partial \Lambda_L$), we obtain
\begin{equation} \label{eq:180628099}
    \sum_{e \in E \left(\Lambda_L^+ \right)} \a(t , e) \left( \nabla P_\a \left( t , e ; s , y \right) \right)^2
     \geq \frac{1}{C_{\mathrm{Poinc}} L^2} \left\| P_\a \left( t , \cdot ; s , y \right) \right\|_{L^2 \left( \Lambda_L \right)}^2.
\end{equation}
A combination of~\eqref{eq:14472909} and~\eqref{eq:180628099} with the definition of the constant $c_1$ above yields
\begin{equation} \label{eq:15202909}
    \partial_t \left( e^{ \frac{c_1 (t-s)}{L^2}} \left\| P_\a \left( t , \cdot ; s , y \right) \right\|^2_{L^2 \left( \Lambda_L \right)} \right)  \leq - e^{\frac{c_1 (t-s)}{L^2}} \left\| \nabla P_\a \left( t , \cdot ; s , y \right) \right\|^2_{L^2 \left( \Lambda_L^+ \right)}.
\end{equation}
By the Nash inequality (see~\cite{nash1958continuity}) and the non-negativity of the map $P_\a$, there exists a constant $C_{\mathrm{Nash}}$ depending only on the dimension $d$ such that
\begin{equation} \label{eq:15152909}
     \left\| P_\a \left( t , \cdot ; s , y \right) \right\|_{L^2 \left( \Lambda_L \right)}^{1 + \frac 2d} \leq C_{\mathrm{Nash}} \left( \sum_{x \in \Lambda_L} P_\a \left( t , x ; s , y \right) \right)^\frac 2d  \left\| \nabla P_\a \left( t , \cdot ; s , y \right) \right\|_{L^2 \left( \Lambda_L^+ \right)}.
\end{equation}
Applying the inequality~\eqref{eq:174419121}, we may simplify the inequality~\eqref{eq:15152909} and write
\begin{equation} \label{eq:15212909}
    \left\| P_\a \left( t , \cdot ; s , y \right) \right\|_{L^2 \left( \Lambda_L \right)}^{1 + \frac 2d} \leq C_{\mathrm{Nash}}  e^{-\frac{c_1 (t-s)}{d}}\left\| \nabla P_\a \left( t , \cdot ; s , y \right) \right\|_{L^2 \left( \Lambda_L^+ \right)}.
\end{equation}
Combining~\eqref{eq:15202909} and~\eqref{eq:15212909} yields
\begin{align} \label{eq:16182909}
    \partial_t \left( e^{\frac{c_1(t-s)}{L^2}} \left\| P_\a \left( t , \cdot ; s , y \right) \right\|^2_{L^2 \left( \Lambda_L \right)} \right) & \leq - \frac{1}{C_{\mathrm{Nash}}^2} e^{\frac{c_1(d+2)(t-s)}{dL^2}} \left\| P_\a \left( t , \cdot ; s , y \right) \right\|^{\frac{2(d+2)}{d}}_{L^2 \left( \Lambda_L \right)} \\
    & \leq - \frac{1}{C_{\mathrm{Nash}}^2} \left( e^{\frac{c_1(t-s)}{L^2} } \left\| P_\a \left( t , \cdot ; s , y \right) \right\|^{2}_{L^2 \left( \Lambda_L \right)} \right)^{\frac{d+2}{d}} .\notag
\end{align}
Integrating the differential inequality~\eqref{eq:16182909} and using the identity $\left\| P_\a \left( s , \cdot ; s , y \right) \right\|^2_{L^2 \left( \Lambda_L \right)} = 1$, we obtain that there exists a constant $C$ depending on $d$ and $c_-$ such that, for any $t \geq s$,
\begin{equation} \label{eq:16302909}
    e^{\frac{c_1(t-s)}{L^2}} \left\| P_\a \left( t , \cdot ; s , y \right) \right\|^2_{L^2 \left( \Lambda_L \right)} \leq \frac{C}{1 \vee (t - s)^\frac{d}{2}}.
\end{equation}
We now show that the estimate~\eqref{eq:16302909} implies the inequality~\eqref{eq:13252909} in the case $(t -s) \geq L^2$. Using the convolution property for the heat kernel $P_\a$, we have the identity, for any $t \in [s + L^2 , \infty)$ and any $x, y \in \Lambda_L$,
\begin{equation*}
    P_\a (t , x ; s , y) = \sum_{z \in \Lambda_L}  P_\a \left(t , x ; \frac{t + s}2 , z\right) P_\a \left(\frac{t + s}2 , z ; s , y \right).
\end{equation*}
The Cauchy-Schwarz inequality then yields
\begin{equation} \label{eq:12562909}
    P_\a (t , x ; s , y) \leq \left\| P_\a \left( t , \cdot ; \frac{t + s}2 , y \right) \right\|_{L^2 \left( \Lambda_L \right)} \left\| P_\a \left( \frac{t + s}2 , x ; s , \cdot \right) \right\|_{L^2 \left( \Lambda_L \right)}.
\end{equation}
The first term of the right-hand side of~\eqref{eq:12562909} can be estimated by the upper bound~\eqref{eq:16302909} applied with the initial time $\frac{s + t}{2}$ instead of $s$. We obtain
\begin{equation} \label{eq:29091811}
     \left\| P_\a \left( t , \cdot ; \frac{s + t}{2} , y \right) \right\|^2_{L^2 \left( \Lambda_L \right)} \leq \frac{C e^{-\frac{c_1(t-s)}{2 L^2}} }{1 \vee (t - s)^\frac{d}{2}}.
\end{equation}
To estimate the second term in the right-hand side, we use the estimate~\eqref{eq:17342809}, the assumption $t - s \geq L^2$, the observations that, for any pair of points $x , z \in \Lambda_L$, $|x - z| \leq C L$, and that the cardinality of the box $\Lambda_L$ is equal to $(2L+1)^d$. We obtain
\begin{align} \label{eq:13272909}
    \left\| P_\a \left( \frac{t + s}2 , x ; s , \cdot \right) \right\|_{L^2 \left( \Lambda_L \right)}^2  & \leq \sum_{z \in \Lambda_L} \left(\frac{C}{1 \vee (t-s)^{\frac d2}} \exp \left( - \frac{c|x - y|}{ 1 \vee \sqrt{t- s}}
    \right)\right)^2 \\
    & \leq \sum_{z \in \Lambda_L} \frac{C}{1 \vee (t-s)^{d}} \notag \\
    & \leq \frac{CL^d}{1 \vee (t-s)^{d}} \notag \\
    & \leq \frac{C}{1 \vee (t-s)^{\frac d2}}. \notag
\end{align}
A combination of~\eqref{eq:12562909},~\eqref{eq:29091811} and~\eqref{eq:13272909} completes the proof of~\eqref{eq:13252909} in the case $t \geq L^2$.

\medskip

\textbf{Proof of the lower bound.}
We first claim that there exists a constant $c_0 \in (0 , 1)$ depending on the parameters $d , c_+$ such that, for any $t , s \in [ s_0 , \infty)$ satisfying $\sqrt{t - s} \leq c_0 L$ and any $y \in \Lambda_{L/2}$,
\begin{equation} \label{eq:10183009}
\sup_{(t' , x) \in [s , t] \times \partial \Lambda_{L}} P_{\a, \infty} \left(t' , x ; s , y  \right) \leq \frac12 \inf_{x \in y + \Lambda_{\sqrt{t-s}}} P_{\a, \infty} \left(t , x ; s , y  \right).
\end{equation}
The proof of this inequality relies on Proposition~\ref{prop.NashAronsoninfintevol}. First by the lower bound~\eqref{eq:lowboundNasinfvol}, we have the estimate, for any $t, s \in [s_0 , \infty)$ such that $t - s \leq L^2$,
\begin{equation} \label{eq:15103009}
    \inf_{x \in y + \Lambda_{\sqrt{t-s}}} P_{\a, \infty} \left(t , x ; s , y  \right) \geq \frac{c}{1 \vee (t - s)^\frac d2} \geq \frac{c}{L^d}.
\end{equation}
Let us fix a constant $c_1 \in (0 , 1)$. Using that for any point $x \in \partial \Lambda_L$ and any point $y \in \Lambda_{L/2}$, we have $|x - y| \geq L/2$ together with the upper bound~\eqref{eq:upboundNasinfvol}, we obtain the estimate, for any $L \geq c_1^{-1}$ and any $t, s \in [s_0 , \infty)$ satisfying $\sqrt{t - s} \leq c_1 L$,
\begin{align} \label{eq:15113009}
    \sup_{(t' , x) \in [s , t] \times \partial \Lambda_L} P_{\a, \infty} \left(t' , x ; s , y  \right) & \leq \sup_{(t' , x) \in [s , t] \times \partial \Lambda_L} \frac{C}{1 \vee (t'-s)^{\frac d2}} \exp \left( - \frac{c|x - y|}{ 1 \vee \sqrt{t'- s}} \right) \\
    & \leq \sup_{t' \in [s , t]} \frac{C}{1 \vee (t'-s)^{\frac d2}} \exp \left( - \frac{cL}{2 \left( 1 \vee \sqrt{t'- s} \right)} \right) \notag \\
    & \leq \sup_{t' \in [0, c_1^2 L^2]} \frac{C}{ t'^{d/2}} \exp \left( - \frac{cL}{2\sqrt{t'}} \right) \notag \\
    & \leq \sup_{t' \in [0, c_1^2]} \frac{C}{L^d t'^{d/2}} \exp \left( - \frac{c}{2\sqrt{t'}} \right). \notag
\end{align}
Using that the mapping $t' \mapsto t'^{-d/2} \exp \left( - c/ (2 \sqrt{t'}) \right)$ tends to $0$ as $t'$ tends to $0$, we may select a constant $c_0 \in (0 , 1]$ such that
\begin{equation} \label{eq:15123009}
    \sup_{t' \in [0, c_0^2]} \frac{C}{ t'^{d/2}} \exp \left( - \frac{c}{2  \sqrt{t'}} \right) \leq \frac{c}{2},
\end{equation}
where the constants $c ,C$ are the ones which appear in the right-hand sides of~\eqref{eq:15103009} and~\eqref{eq:15113009}. Let us note that, since the constants $c , C$ depend only on the dimension $d$ and the ellipticity constant $c_+$, the constant $c_0$ may be chosen so that it depends only on $d , c_+$. Multiplying both sides of the inequality~\eqref{eq:15123009} by $L^{-d}$ yields, for any $t, s \in [s_0 , \infty)$ such that $\sqrt{t - s} \leq c_0 L$,
\begin{align*}
    \sup_{(t' , x) \in [s , t] \times \partial \Lambda_L} P_{\a, \infty} \left(t' , x ; s , y  \right) & \leq \sup_{t' \in [0, c_0]} \frac{C}{2 L^d t'^{d/2}} \exp \left( - \frac{1}{2C  \sqrt{t'}} \right) \\
    & \leq \frac{c}{2L^d} \\
    & \leq \frac{1}{2} \inf_{x \in y + \Lambda_{\sqrt{t-s}}} P_{\a, \infty} \left(t , x ; s , y  \right).
\end{align*}
The proof of~\eqref{eq:10183009} is complete.

We now deduce the lower bound~\eqref{Nash.infprop3.3} from the inequality~\eqref{eq:10183009}. To this end, let us fix a time $t \in (s , s + c_0 L^2)$, set $\ep := \frac12 \inf_{x \in y + \Lambda_{\sqrt{t-s}}} P_{\a, \infty} \left(t , x ; s , y  \right)$, and define the map $P_{\a , \infty}^\ep := P_{\a , \infty} - \ep$. Let us note that the mapping $P_{\a , \infty}^\ep$ solves the parabolic equation
\begin{equation*}
    \partial_t P_{\a , \infty}^\ep (\cdot , \cdot ; s  ,y) - \nabla \cdot \a \nabla P_{\a , \infty}^\ep (\cdot , \cdot ; s ,y) = 0 \hspace{3mm} \mbox{in}~ (s , t) \times \Lambda_L .
\end{equation*}
By the definition of the parameter $\ep$ and the inequality~\eqref{eq:10183009}, the map $P_{\a , \infty}^\ep $ satisfies the boundary estimates
\begin{equation*}
    \left\{ \begin{aligned}
    P_{\a , \infty}^\ep (t' , x ; s  ,y) & \leq 0 \leq P_\a (t' , x ; s  ,y) &&(t' , x) \in [s , t] \times \partial \Lambda_L, \\
    P_{\a , \infty}^\ep(s , x ; s  ,y) & = \indc_{\{ x = y \}} - \ep \leq \indc_{\{ x = y \}} = P_\a(s , x ; s  ,y)  &&x \in \Lambda_L^+.
    \end{aligned} \right.
\end{equation*}
Applying the maximum principle for the parabolic operator $\partial_t - \nabla \cdot \a \nabla$, we obtain the inequality, for any $(t', x) \in [s , t] \times \Lambda_L$,
\begin{equation} \label{eq:11393009}
    P_{\a , \infty}^\ep (t' , x ; s , y) \leq P_\a (t' , x ; s , y).
\end{equation}
Applying the estimate~\eqref{eq:11393009} at time $t' = t$ and using the definition of the parameter $\ep$ yields, for any vertex $x$ satisfying $|x - y| \leq \sqrt{t - s}$,
\begin{equation} \label{eq:11433009}
    \frac 12 P_{\a, \infty} \left( t , x ; s , y \right) \leq P_\a^\ep \left( t , x ; y , s \right) \leq P_\a \left( t , x ; y , s \right).
\end{equation}
Combining the estimate~\eqref{eq:11433009} with the lower bound of Proposition~\ref{prop.NashAronsoninfintevol} implies
\begin{equation*}
    P_\a \left( t , x ; y , s \right) \geq \frac{c}{1 \vee (t-s)^\frac d2}.
\end{equation*}
The proof of the lower bound~\eqref{Nash.infprop3.3} is complete.
\end{proof}

\small
\bibliographystyle{abbrv}
\bibliography{RFRS}

\end{document}